\providecommand{\tabularnewline}{\\}
\providecommand{\algorithmname}{Algorithm}
\theoremstyle{plain}
\newtheorem{prop}{\protect\propositionname}
\theoremstyle{plain}
\newtheorem{fact}{\protect\factname}
\author{
\thanks{This is a preprint version. The paper has been published at IEEE Transactions on Signal Processing, vol. 73, pp. 2833-2847, 2025, doi: 10.1109/TSP.2025.3584229. Junting Chen is the corresponding author. The work was supported in part by the Basic Research Project No. HZQB-KCZYZ-2021067 of Hetao Shenzhen-HK S\&T Cooperation Zone, by the National Science Foundation of China under Grant No. 62171398 and No. 62293482, by the Guangdong Basic and Applied Basic Research Foundation 2024A1515011206, by the Shenzhen Science and Technology Program under Grant No. JCYJ20220530143804010, No. KQTD20200909114730003, and No. KJZD20230923115104009, by Guangdong Research Projects No. 2019QN01X895, No. 2017ZT07X152, and No. 2019CX01X104, by the Shenzhen Outstanding Talents Training Fund 202002, by the Guangdong Provincial Key Laboratory of Future Networks of Intelligence (Grant No. 2022B1212010001), by the National Key R\&D Program of China with grant No. 2018YFB1800800, and by the Key Area R\&D Program of Guangdong Province with grant No. 2018B030338001.}

\IEEEauthorblockN{Yuanshuai~Zheng and Junting~Chen}

 \IEEEauthorblockA{School of Science and Engineering,
Shenzhen Future Network of Intelligence Institute (FNii-Shenzhen), and\\
Guangdong Provincial Key Laboratory of Future Networks of Intelligence\\
The Chinese University of Hong Kong, Shenzhen, Guangdong 518172, P. R. China}

}
\newcommand{\newac}{\newacronym}
\newcommand{\ac}{\gls}
\newcommand{\Ac}{\Gls}
\newcommand{\acpl}{\glspl}
\providecommand{\factname}{Fact}
\providecommand{\propositionname}{Proposition}
\begin{document}
\title{A Radio Map Approach for Reduced Pilot CSI Tracking in Massive MIMO
Networks}

\maketitle
\selectlanguage{american}%
%
%



\ifdefined\SINGLECOLUMN
	\setkeys{Gin}{width=0.5\columnwidth}
	\newcommand{\figfontsize}{\footnotesize} 
	\newcommand{\labelfontsize}{0.7}
	\newcommand{\legendfontsize}{0.57}
	\newcommand{\ticklabelfontsize}{0.7}
	\newcommand{\numberfontsize}{0.45}
\else
	\setkeys{Gin}{width=1.0\columnwidth}
	\newcommand{\figfontsize}{\normalsize} 
	\newcommand{\labelfontsize}{0.7}
	\newcommand{\legendfontsize}{0.65}
	\newcommand{\ticklabelfontsize}{0.7}
	\newcommand{\numberfontsize}{0.5}
\fi
\newac{ode}{ODE}{ordinary differential equation}

\selectlanguage{english}%
\begin{abstract}

Massive \ac{mimo} systems offer significant potential to enhance
wireless communication performance, yet accurate and timely \ac{csi}
acquisition remains a key challenge. Existing works on \ac{csi} estimation
and radio map applications typically rely on stationary \ac{csi}
statistics and accurate location labels. However, the \ac{csi} process
can be discontinuous due to user mobility and environmental variations,
and inaccurate location data can degrade the performance. By contrast,
this paper studies radio-map-embedded \ac{csi} tracking and radio
map construction without the assumptions of stationary \ac{csi} statistics
and precise location labels. Using radio maps as the prior information,
this paper develops a radio-map-embedded \ac{skf} framework that
jointly tracks the location and the CSI with adaptive beamforming
for sparse CSI observations under reduced pilots. For radio map construction
without precise location labels, the location sequence and the channel
covariance matrices are jointly estimated based on a \ac{hmm}. An
unbiased estimator on the channel covariance matrix is found. Numerical
results on ray-traced \ac{mimo} channel datasets demonstrate that
using $1$ pilot observation in every $10$ milliseconds, an average
of over $97\%$ of capacity over that of perfect \ac{csi} can be
achieved, while a conventional \ac{kf} can only achieve $76\%$.
Furthermore, the proposed radio-map-embedded \ac{csi} model can reduce
the localization error from $30$ meters from the prior to $6$ meters
for radio map construction.
\end{abstract}

\begin{IEEEkeywords}
radio map, massive MIMO, channel estimation, tracking, covariance
estimation
\end{IEEEkeywords}

\glsresetall

\section{Introduction}

\label{sec:intro}

\selectlanguage{american}%
\IEEEPARstart{M}{assive}\foreignlanguage{english}{ \ac{mimo} enables
spatial multiplexing, provides high beamforming gain, and supports
concurrent multi-user data transmission. To realize the full potential
of massive \ac{mimo}, it is essential to acquire the up-to-date \ac{csi},
which is time-varying due to the mobility of the users and the variation
of the environment. Timely \ac{csi} acquisition is challenging in
massive \ac{mimo} systems because the dimension of the channel vector
can be very high \cite{GuoLau:J24,HeSheXuEld:J24}.}

\selectlanguage{english}%

Massive \ac{mimo} channel estimation and tracking generally require
some prior information about the channel characteristics. If the channel
covariance matrix is available, \ac{mmse}, or more generally, Bayesian
approaches can be employed to constrain solutions within the subspace
spanned by the covariance matrix \cite{NguSwiNgu:J24,MehSriAsiJag:J24}.
If the channel is believed to be sparse, compressive sensing techniques
can be applied to obtain sparse solutions \cite{KeGaoWuGao:J20,WanLiuLiuZha:J24}.
Furthermore, if the channel is temporally and spatially correlated,
\ac{kf} approaches and \ac{ar} models can be leveraged to track
the \ac{csi} process with reduced pilot overhead per time slot \cite{BroWanDas:J15,VinJunHam:C24}.
Recent advancements also explore \ac{lstm}-based methods \cite{ZhaZhaGaoYan:J24,PenZhaCheYan:C20}
and some other deep learning methods, such as attention networks \cite{ZhoYanMaGao:J24}
and generative models \cite{MirShaAdvGop:J22}.

However, if the prior information is inaccurate or varies over time,
the existing methods \cite{KeGaoWuGao:J20,VinJunHam:C24,WanLiuLiuZha:J24,BroWanDas:J15,ZhaZhaGaoYan:J24,PenZhaCheYan:C20,ZhoYanMaGao:J24,MirShaAdvGop:J22}
may perform poorly. For example, the channel covariance matrices may
differ across locations while channel dynamics follows different models
in \ac{los} scenarios and \ac{nlos} scenarios. Therefore, tracking
algorithms like \ac{kf} may lose track of the channel when the mobile
enters an \ac{nlos} region from an \ac{los} region.

Radio maps offer a promising solution to provide prior information
on channels \cite{ZenCheXuWu:J23,ZheJiaChe:C24,WuZenJinZha:J23}.
Existing works on radio-map-assisted \ac{csi} estimation typically
require location information from additional devices or via sensing.
Some location-based approaches \cite{ChuKIM:J24,WuZenJinZha:J23}
integrate user position information, obtained via \ac{gps} or \ac{uwb}
techniques, to enhance \ac{csi} prediction by exploiting spatial
correlations between locations and channels. Similarly, \ac{isac}
techniques \cite{YuHuLiuPen:J22,DuCheJinLi:J24} first sense user
positions and then utilize this information to assist in \ac{csi}
estimation. The work \cite{ZheJiaChe:C24} attempted radio-map-assisted
CSI estimation without explicit location information in the estimation
phase, where the radio map is a mapping from a location to a channel
realization. However, such a radio map is challenging to construct
without a massive amount of location-labeled data. The \ac{bim} proposed
in \cite{DaiWuDonLi:J24} requires exhaustive beam sweeping at each
location grid during the training process, and the beam alignment
is achieved based on the accurate user position and the \ac{bim}.
The work \cite{WuZen:C23} focuses on constructing a \ac{cam}, requiring
accurate information of the \ac{aod} that heavily depends on user
positions and surrounding environments. Although the \ac{cam}-based
method \cite{WuZen:C23} can estimate both the user position and the
\ac{csi} after deployment, their reliance on precise training data
of \acpl{aod} and user positions remains a significant limitation.
Some recent attempts on channel charting developed a \ac{csi} tracking
strategy without explicit user positions using deep learning techniques
\cite{StaYamFeiEsk:J24,FerGuiStuTir:T23}, but the inherent physical
mechanism is still not clear.

In this paper, we investigate the following two fundamental problems:
\begin{itemize}
\item {\em Can radio map help reduce pilot observations for CSI tracking?}
In particular, we focus on the case where there is {\em no} precise
location information available as an index for the \ac{csi} in the
radio map. Instead, location also needs to be recovered from the pilot
observations as an intermediate step for the \ac{csi} tracking. Thus,
we aim at a joint recovery of the location and \ac{csi} using radio
maps as the prior information.
\item {\em Can we reconstruct and update a radio map from pilot observation sequences without precise location labels?}
To construct a radio map, we need to recover precise location information
from pilot observation sequences with possibly some side information
such as the locations of \acpl{bs}. Thus, the question is, can a
radio-map-embedded \ac{csi} model help recover precise location information
despite possibly \ac{nlos}?
\end{itemize}

In contrast to the preliminary study \cite{ZheJiaChe:C24} which maps
a location to a specific channel realization, this work adopts a different
radio map methodology, where it defines a mapping from a location
to a channel distribution. As such, the new radio map definition is
robust to local scattering, body shadowing, and antenna pattern at
the user side. Consequently, it enables not only robust radio-map-embedded
CSI tracking, but also a blind construction of radio maps without
location labels. Towards this goal, this paper exploits the principle
that while the \ac{csi} process can be discontinuous in a high-dimensional
\ac{csi} space, the trajectory of the correspondingly mobile terminal
has to be continuous in the three-dimensional physical world. As such,
a sequence of very sparse \ac{csi} observations may suffice to discover
the mobile trajectory, which leads to accurate tracking of the high-dimensional
\ac{csi} with the help of radio maps. Following such a methodology,
we develop a radio-map-embedded \ac{skf} framework that jointly tracks
the location and the \ac{csi} with adaptive beamforming for sparse
\ac{csi} observations under reduced pilot observations, where radio
maps play an essential role in adjusting the operating regime of \ac{skf}.
In addition, we employ a Bayesian approach for a joint estimation
of the location and the channel covariance matrix from a massive amount
of offline unlabeled pilot observation sequences for the construction
of radio maps. The key theoretical contributions and numerical findings
are summarized as follows:
\begin{itemize}
\item We propose a radio-map-embedded \ac{skf} framework for \ac{csi}
tracking from a sequence of reduced pilot observations without location
labels. In addition, instead of using a randomized compressive sensing
approach, we develop an adaptive sensing matrix to minimize the differential
entropy of the \ac{csi}.
\item We develop a joint estimation on the location sequence and the channel
covariance matrices from unlabeled pilot observation sequences using
a \ac{hmm}. An unbiased estimator on the channel covariance matrix
is found with the estimation error analyzed.
\item We numerically demonstrate that, in a single-user $64$-antenna \ac{miso}
system, using $1$ pilot observation in every $10$ milliseconds,
an average of over $97\%$ of capacity over that of perfect \ac{csi}
can be achieved for a user moving at $36$~km/h at a $20$~dB \ac{snr}
under \ac{los}, while a conventional \ac{kf} can only achieve $76\%$.
In a pure \ac{nlos} regime, the proposed radio-map-assisted approach
can double the capacity from existing schemes under low to moderate
\ac{snr}. For radio map construction with unlabeled pilot observation
sequences, we demonstrate that the proposed estimation using the radio-map-embedded
\ac{csi} model reduces the localization error from $30$ meters from
the prior to $6$ meters, which is sufficient for radio map construction
in our framework.
\end{itemize}

\selectlanguage{american}%
The remainder of the paper is organized as follows: Section II introduces
the system model and formulates the problems of \ac{csi} tracking
and radio map construction. Section III presents a radio-map-embedded
\ac{skf} framework for fast \ac{csi} tracking. Section IV proposes
to iteratively estimate user positions and channel covariance for
radio map construction, along with the analysis of construction error.
Section V provides simulations and related discussions, while Section
VI concludes the paper.

\textit{Notation}: Vectors and matrices are denoted by bold lowercase
$\mathbf{x}$ and bold uppercase $\mathbf{X}$, respectively. Scalar
quantities are denoted by non-bold letters such as $x$. Collections
are denoted by script $\mathcal{X}$. Probability density and mass
functions are denoted by $p(\cdot)$ and $\mathbb{P}(\cdot)$, respectively.
The expectation and variance operators are represented by $\mathbb{E}(\cdot)$
and $\mathbb{V}(\cdot)$, respectively. The $L_{2}$ norm is denoted
by $\|\cdot\|_{2}$, and the Frobenius norm is denoted by $\|\cdot\|_{{\rm F}}$.
The conjugate transpose of a complex vector $\mathbf{x}$ or a complex
matrix $\mathbf{X}$ is denoted by ${\bf x}^{{\rm H}}$ or ${\bf X}^{{\rm H}}$,
respectively. The trace of a matrix ${\bf X}$ is denoted by ${\rm tr}({\bf X})$.
The pseudo-inverse of a matrix $\mathbf{X}$ is denoted by $\mathbf{X}^{\dagger}$.
The identity matrix is denoted by ${\bf I}$. Furthermore, we specify
several key variables that will be used throughout the paper: the
user position, channel state, sensing matrix, and observation at time
$t$ are denoted by $\mathbf{p}_{t}$, $\mathbf{h}_{t}$, $\mathbf{A}_{t}$,
and $\mathbf{y}_{t}$, respectively. Their corresponding sequences
up to time $t$ are denoted by $\mathcal{P}_{t}=(\mathbf{p}_{1},\mathbf{p}_{2},\dots,\mathbf{p}_{t})$,
$\mathcal{H}_{t}=({\bf h}_{1},{\bf h}_{2},\dots,{\bf h}_{t})$, $\mathcal{A}_{t}=(\mathbf{A}_{1},\mathbf{A}_{2},\dots,\mathbf{A}_{t})$,
and $\mathcal{Y}_{t}=(\mathbf{y}_{1},\mathbf{y}_{2},\dots,\mathbf{y}_{\text{\ensuremath{t}}})$,
with their realizations when $t=T$ denoted by $\mathcal{P}_{T}=(\mathbf{p}_{1},\mathbf{p}_{2},\dots,\mathbf{p}_{T})$,
$\mathcal{H}_{T}=({\bf h}_{1},{\bf h}_{2},\dots,{\bf h}_{T})$, $\mathcal{A}_{T}=(\mathbf{A}_{1},\mathbf{A}_{2},\dots,\mathbf{A}_{T})$,
and $\mathcal{Y}_{T}=(\mathbf{y}_{1},\mathbf{y}_{2},\dots,\mathbf{y}_{\text{\ensuremath{T}}})$,
respectively.
\selectlanguage{english}%

\section{System Model}

\label{subsec:System-model}

\subsection{Network Topology, Channel Model, and Radio Map Model}

\label{subsec:Network-Topology}

\begin{figure}
\begin{centering}
\includegraphics[width=1\columnwidth]{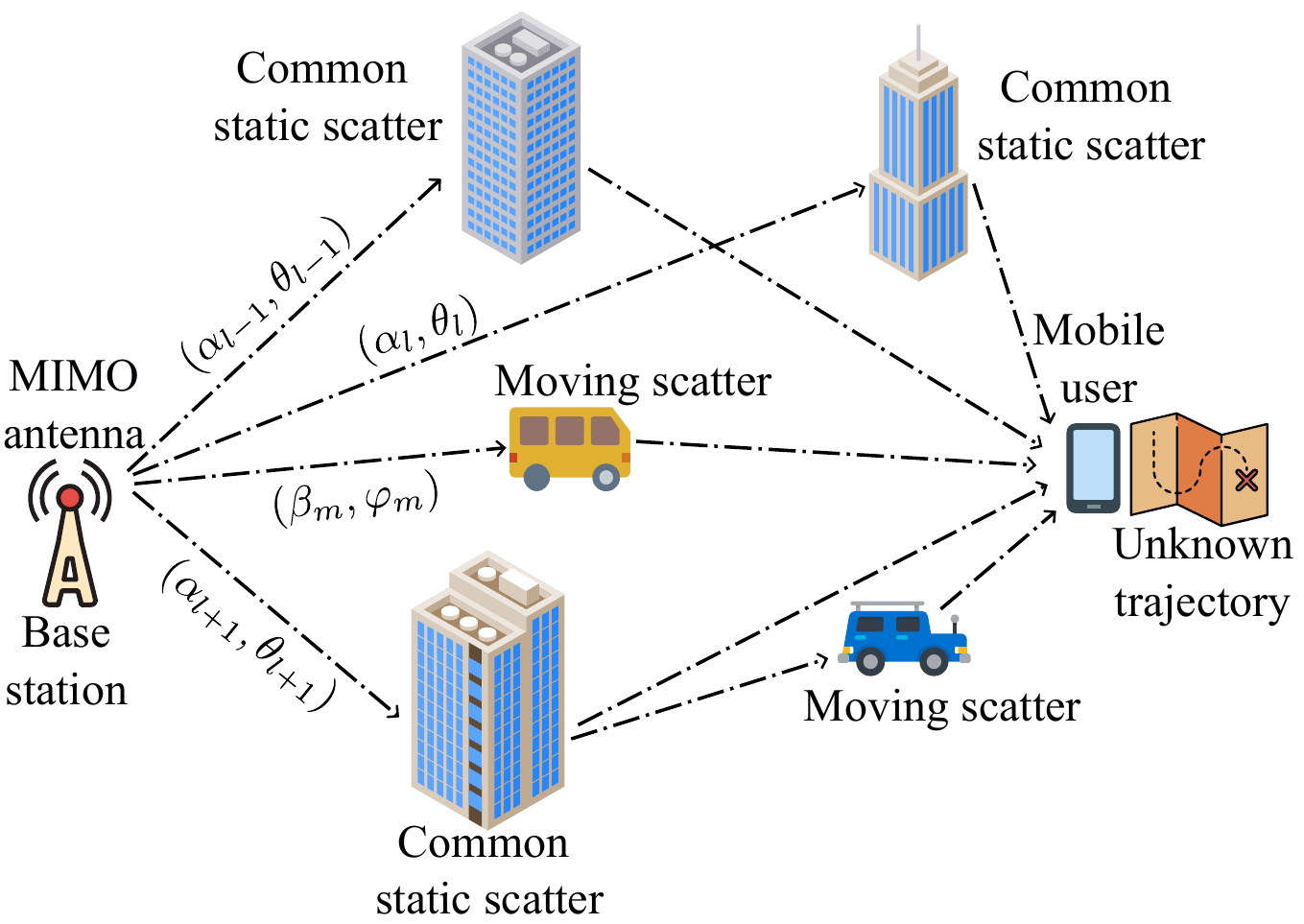}
\par\end{centering}
\caption{\label{fig:channel-model}Illustration of the channel in a quasi-static
environment, where the majority of the scatters are static and a small
portion of scatters are time-varying, and adjacent mobile locations
may share common static scatters.}
\end{figure}

Consider a \ac{mimo} network with $Q$ \acpl{bs} and one mobile
user, where each \ac{bs} is equipped with $N_{\mathrm{t}}$ antennas
and the user has a single antenna. To ease the elaboration, we simply
illustrate the case of $Q=1$ because, for $Q>1$, we can simply stack
variables and increase the dimension of the system states, where the
form of the formulas for the multi-\ac{bs} case is similar.

Suppose that the environment is quasi-static, where the majority of
the scatters are static and only a small portion of scatters are time-varying.
The static scatters may correspond to the ground and buildings, and
the time-varying scatters may correspond to the vehicles and pedestrians.
We are interested in the scenarios where the \ac{mimo} \acpl{bs}
are placed on the rooftop or tower, where there are few moving scatters
near the \acpl{bs}.

The propagation between the \ac{bs} and the user located at $\mathbf{p}$
can be modeled as signal aggregation from the set of propagation paths
$\mathcal{N}_{\text{s}}(\mathbf{p})$ via static scatters and from
the paths $\mathcal{N}_{\text{m}}(\mathbf{p})$ via moving scatters.
As illustrated in Fig.~\ref{fig:channel-model}, a geometric narrow-band
multipath \ac{mimo} channel can be modeled as \cite{ZhoPanRenPop:J22}
\begin{equation}
{\bf h}(\mathbf{p})=\sum_{l\in\mathcal{N}_{\text{s}}(\mathbf{p})}a_{l}(\mathbf{p})\bm{\alpha}(\theta_{l}(\mathbf{p}))+\sum_{m\in\mathcal{N}_{\text{m}}(\mathbf{p})}\beta_{m}\bm{\alpha}(\varphi_{m})\label{eq:channel-model}
\end{equation}
where the coefficients $a_{l}$ and $\beta_{m}$ denote the complex
path gains, $\theta_{l}(\mathbf{p})$ and $\varphi_{m}$ denote the
\ac{aod} at the \ac{bs}, and $\bm{\alpha}(\theta)$ denotes the
steering vector at $\theta$ according to the antenna geometry of
the \ac{bs}. The statistics of the coefficients $a_{l}(\mathbf{p})$
from the static scatters depend on the user location $\mathbf{p}$,
whereas the coefficients $\beta_{m}$ and the \ac{aod} $\varphi_{m}$
from the moving scatters are assumed to be independent of $\mathbf{p}$.

Based on (\ref{eq:channel-model}), the multipath channel can be written
as 
\begin{equation}
{\bf h}(\mathbf{p})\triangleq{\bf h}^{\text{s}}(\mathbf{p})+{\bf h}^{\epsilon}\label{eq:normalized-channel-model}
\end{equation}
where ${\bf h}^{\text{s}}(\mathbf{p})\triangleq\sum_{l\in\mathcal{N}_{\text{s}}(\mathbf{p})}a_{l}(\mathbf{p})\bm{\alpha}(\theta_{l}(\mathbf{p}))$
is the channel due to static scattering for user location ${\bf p}$,
and ${\bf h}^{\epsilon}\triangleq\sum_{m\in\mathcal{N}_{\text{m}}(\mathbf{p})}\beta_{m}\bm{\alpha}(\varphi_{m})$
is the channel due to the moving scatters. It is assumed that both
${\bf h}^{\text{s}}(\mathbf{p})$ and $\mathbf{h}^{\epsilon}$ have
zero mean and they are independent.

Consider a discrete location model, where the area of interest is
partitioned into equally spaced grids. The user is located at one
of the grid points $\mathbf{x}\in\mathcal{X}$, where $\mathcal{X}$
is a collection of all the grid locations. We define the radio map
as a collection of the discretized location $\mathbf{x}\in\mathcal{X}$
and the covariance matrix ${\bf C}({\bf x})$ pairs, {\em i.e.},
$\mathcal{M}=\{(\mathbf{x},{\bf C}({\bf x})):\mathbf{x}\in\mathcal{X}\}$,
where $\mathbf{C}(\mathbf{x})\triangleq\mathbb{E}\{\mathbf{h}\mathbf{h}^{\text{H}}\}$,
in which the expectation is taken over the distribution of the positions
$\mathbf{p}$ within the grid cell with a center location $\mathbf{x}\in\mathcal{X}$
and the small-scale fading of the channel. We assume that ${\bf h}^{\text{s}}(\mathbf{p})$
is spatially correlated for all the positions $\mathbf{p}$ in the
same grid cell and $\mathbb{E}\{{\bf h}^{\epsilon}({\bf h}^{\epsilon})^{\text{H}}\}=\sigma_{\text{h}}^{2}\mathbf{I}$,
where $\sigma_{\text{h}}^{2}$ is much smaller than the mean energy
of ${\bf h}^{\text{s}}(\mathbf{p})$. Hence, the distribution of the
channel $f_{\mathrm{h}}(\mathbf{h})$ given the location $\mathbf{x}$
is fully captured by the covariance matrix ${\bf C}({\bf x})$ and
the parameter $\sigma_{\mathrm{h}}^{2}$. With such a discretized
model, we now consider that $\mathbf{p}$ takes a discrete position
from $\mathcal{X}$ in the rest of the paper.
\begin{figure}
\begin{centering}
\includegraphics[width=1\columnwidth]{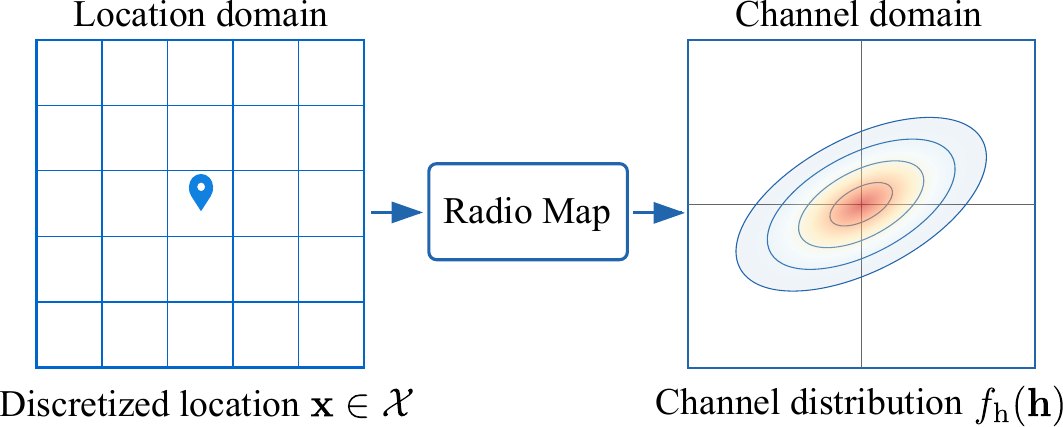}
\par\end{centering}
\caption{\label{fig:radio-map-illustration}Illustration of the radio map.
The area of interest is partitioned into grid cells. The radio map
is a mapping from a grid cell to a channel distribution $f_{\mathrm{h}}(\mathbf{h}|\mathbf{x}_{i})$,
where $\mathbf{x}_{i}$ is a discretized location representing the
$i$th grid cell.}
\end{figure}

\subsection{System Dynamics}

\label{subsec:System-Dynamics-and-the-CSI-Tracking-Problem}

Consider that the mobile user moves along a trajectory $\mathcal{P}_{t}=(\mathbf{p}_{1},\mathbf{p}_{2},\dots,\mathbf{p}_{t})$
that can be described by a Markov chain with known transition probability
$\mathbb{P}(\mathbf{p}_{\text{\ensuremath{t}}}|\mathbf{p}_{t-1})$.
Such prior information can be obtained empirically from prior measurements.
For example, one can assume a uniform prior $\mathbb{P}(\mathbf{p}_{\text{\ensuremath{t}}}=\mathbf{x}_{j}|\mathbf{p}_{t-1}={\bf x}_{i})$
for a reasonable range of speed of a vehicle. Alternatively, the transition
probability $\mathbb{P}({\bf p}_{t}={\bf x}_{i}|{\bf p}_{t-1}={\bf x}_{j})$
can be specified using a truncated Gauss-Markov model \cite{LiaHaa:C99}
over the coordinates of the \ensuremath{i}th and \ensuremath{j}th
grid points for $\mathbf{x}_{i},\mathbf{x}_{j}\in\mathcal{X}$,
\begin{multline}
\mathbb{P}(\mathbf{p}_{\text{\ensuremath{t}}}={\bf x}_{i}|\mathbf{p}_{t-1}={\bf x}_{j})\\
\propto\begin{cases}
\exp(-\|{\bf x}_{i}-({\bf x}_{j}+\delta_{t}\bar{{\bf v}})\|_{2}^{2}) & \text{if }\|{\bf x}_{i}-{\bf x}_{j}\|_{2}^{2}\leq\eta\delta_{t}\\
0 & \text{otherwise}
\end{cases}\label{eq:mobility-transition-probability-1}
\end{multline}
where $\eta$ determines the maximum possible speed, $\bar{\mathbf{v}}$
is the average speed, and $\delta_{t}$ is the time slot duration.

A radio-map-embedded first-order \ac{ar} model \cite{VinJunHam:C24,SadRapKenAbh:C26}
for the dynamic of the channel $\mathbf{h}_{t}$ is considered as
\begin{equation}
{\bf h}_{t}=\gamma{\bf h}_{t-1}+\sqrt{1-\gamma^{2}}{\bf u}_{t}\label{eq:ar_channel_model}
\end{equation}
where ${\bf u}_{t}\sim\mathcal{CN}(0,{\bf C}({\bf p}_{t}))$, and
$\gamma\in[0,1]$ determines the level of temporal correlation between
the current channel ${\bf h}_{t}$ and the previous channel ${\bf h}_{t-1}$.
Additionally, the scaling factor $\sqrt{1-\gamma^{2}}$ ensures that
the disturbance term ${\bf u}_{t}$ maintains the appropriate variance
to preserve the statistical properties of the channel model. It is
worth highlighting that the statistics of the disturbance ${\bf u}_{t}$
depend on the user position $\mathbf{p}_{t}$ due to the geographic-specific
spatial correlation of the channel, and the statistics ${\bf C}({\bf p}_{t})$
are captured by the radio map. It can be verified that the stationary
distribution of ${\bf h}_{t}$ follows $\mathcal{CN}(0,{\bf C}({\bf p}_{t}))$
which is consistent with our radio map model.

At each time slot, the channel is probed using a small set of sensing
vectors, and the observation is given by
\begin{equation}
{\bf y}_{t}={\bf A}_{t}{\bf h}_{t}+{\bf n}_{t}\label{eq:observation_model}
\end{equation}
where ${\bf A}_{t}\in\mathbb{C}^{M\times N_{\mathrm{t}}}$ is a semi-unitary
sparse combining matrix satisfying ${\bf A}_{t}{\bf A}_{t}^{\text{H}}={\bf I}$
with $M\ll N_{\mathrm{t}}$, and ${\bf n}_{t}$ represents the measurement
noise. The noise is assumed to follow a zero-mean complex Gaussian
distribution, {\em i.e.}, ${\bf n}_{t}\sim\mathcal{CN}({\bf 0},\sigma_{\text{n}}^{2}{\bf I})$,
where $\sigma_{\text{n}}^{2}$ is the noise variance.

The observation model~(\ref{eq:observation_model}) provides a unified
representation for both uplink and downlink implementations. In the
uplink case, the \ac{bs} receives the signal ${\bf y}_{t}={\bf A}_{t}{\bf h}_{t}s+{\bf n}_{t}\in\mathbb{C}^{M}$
through the beamforming matrix ${\bf A}_{t}$, where the rows of ${\bf A}_{t}$
serve as combining vectors and $s$ denotes the reference symbol.
The parameter $M$ may correspond to the number of available \ac{rf}
chains at the \ac{bs}. We can simply take $s=1$ \ac{wlog} to yield
the observation model (\ref{eq:observation_model}). Note that $\mathbf{n}_{t}$
is the equivalent noise vector after combining the noise from the
$N_{\mathrm{t}}$ antennas using the semi-unitary matrix ${\bf A}_{t}\in\mathbb{C}^{M\times N_{\mathrm{t}}}$.
If $\tilde{\mathbf{n}}_{t}$ denotes the $N_{\mathrm{t}}\times1$
noise vector at the antennas, then $\mathbf{n}_{t}=\mathbf{A}_{t}\tilde{\mathbf{n}}_{t}$.
Since each row of $\mathbf{A}_{t}$ is orthonormal, both the elements
of $\mathbf{n}_{t}$ and those of $\tilde{\mathbf{n}}_{t}$ are \ac{iid}
with zero mean and the same variance. In the downlink case, consider
using $M$ orthogonal time-frequency resources for channel probing
at time slot $t$. For the $m$th time-frequency resource, a reference
symbol $s$ is beamformed using a vector $\mathbf{a}_{t,m}$, where
$\mathbf{a}_{t,m}^{\text{T}}$ corresponds to the $m$th row of $\mathbf{A}_{t}$.
As a result, the user obtains a noisy observation $y_{t,m}=\mathbf{h}_{t}^{\text{T}}\mathbf{a}_{t,m}s+n_{t,m}$,
and by stacking the $M$ observations $y_{t,m}$ into a vector $\mathbf{y}_{t}$,
the received signal model is identical to (\ref{eq:observation_model}).
In this case, the user may need to feed back the received pilot observations
$\mathbf{y}_{t}$ to the \ac{bs} or it is the user that performs
the channel tracking.

Note that both downlink and uplink transmissions involve feedback
signaling, such as scheduling grants and CSI feedback as specified
in the 5G standard. However, measuring the feedback overhead, which
is strongly implementation dependent, falls outside the scope of this
paper. Therefore, this paper intentionally considers only the $M$
pilot observations as the sole contributors to the received vector
$\mathbf{y}_{t}$ for CSI tracking.

\subsection{The CSI Tracking Problem}

\label{subsec:The-CSI-Tracking-Problem}

Denote $\mathcal{Y}_{t}=(\mathbf{y}_{1},\mathbf{y}_{2},\dots,\mathbf{y}_{\text{\ensuremath{t}}})$
as the sequence of sparse observations along the trajectory $\mathcal{P}_{t}=(\mathbf{p}_{1},\mathbf{p}_{2},\dots,\mathbf{p}_{t})$
based on sensing matrices $\mathcal{A}_{t}=(\mathbf{A}_{1},\mathbf{A}_{2},\dots,\mathbf{A}_{t})$,
where the \ac{csi} sequence is denoted as $\mathcal{H}_{t}=({\bf h}_{1},{\bf h}_{2},\dots,{\bf h}_{t})$.
The goal of this paper is to track the \ac{csi} $\mathcal{H}_{t}$
based on the sequence of sparse observations $\mathcal{Y}_{t}$. In
particular, we propose to recover the trajectory $\mathcal{P}_{t}$
as an intermediate step to assist the \ac{csi} tracking via the radio
map $\mathcal{M}=\{(\mathbf{x},{\bf C}({\bf x}):\mathbf{x}\in\mathcal{X}\}$.
In turn, we also need to construct or update the radio map $\mathcal{M}$
based on $\mathcal{Y}_{t}$ and $\mathcal{P}_{t}$.

\subsubsection{Radio-Map-Embedded CSI Tracking}

\label{subsec:Radio-Map-Embedded-CSI}

Most existing radio-map-embedded \ac{csi} estimation or beam-tracking
approaches require location information \cite{WanZhuLinChe:J24,WuZenJinZha:J23},
or an estimate of the location based on a one-shot \ac{csi} measurement
$\mathbf{y}_{t}$ \cite{WuZen:C23}. When a sequence of measurement
$\mathcal{Y}_{t}$ is available, one may estimate the entire \ac{csi}
sequence $\mathcal{H}_{t}$ by maximizing the joint probability $p(\mathcal{Y}_{t},\mathcal{H}_{t})$
or by minimizing the \ac{mse} for $\mathcal{H}_{t}$ based on the
joint distribution $p(\mathcal{Y}_{t},\mathcal{H}_{t})$. A classical
algorithm is \ac{kf} \cite{BroWanDas:J15,VinJunHam:C24}, but it
requires the channel to be temporally correlated with smooth statistics.
Specifically, a conventional \ac{kf} requires the covariance matrix
$\mathbf{C}(\mathbf{p}_{t})$ in (\ref{eq:ar_channel_model}) to be
continuous in space, without jumps. However, in a dense urban scenario,
the temporal and spatial correlation of the channel $\mathbf{h}_{t}$
can be destroyed by sudden signal blockage due to mobility, and therefore,
the process of the covariance matrix $\mathbf{C}(\mathbf{p}_{t})$
can be {\em discontinuous} and with jumps in space.

Therefore, we propose to develop a new \ac{csi} tracking strategy
based on the spatial correlation of the hidden trajectory $\mathcal{P}_{t}$,
which has to be {\em continuous} due to Newton's law. However, $\mathcal{P}_{t}$
is not observed but treated as a latent variable to be estimated.
Once estimated, one can track the \ac{csi} using the prior information
from the radio map $\mathcal{M}$, without assuming any correlation
between $\mathbf{C}(\mathbf{p}_{t})$ and $\mathbf{C}(\mathbf{p}_{t-1})$.
Mathematically, when a maximum log-likelihood criterion is considered,
the \ac{csi} tracking problem can be formulated as follows 
\begin{align*}
\mathscr{P}_{1}:\qquad\mathop{\mbox{maximize}}\limits_{\mathcal{H}_{T},\mathcal{P}_{T},\mathcal{A}_{T}} & \quad\log p(\mathcal{Y}_{T},\mathcal{H}_{T},\mathcal{P}_{T};\mathcal{A}_{T},\mathcal{M})\\
\text{subject to} & \quad\mathbf{p}_{t}\in\mathcal{X},\qquad t=1,2,\dots,T
\end{align*}
where $p(\mathcal{Y}_{T},\mathcal{H}_{T},\mathcal{P}_{T};\mathcal{A}_{T},\mathcal{M})$
denotes the joint probability of $\mathcal{Y}_{T}=(\mathbf{y}_{1},\mathbf{y}_{2},\dots,\mathbf{y}_{\text{\ensuremath{T}}})$,
$\mathcal{H}_{T}=({\bf h}_{1},{\bf h}_{2},\dots,{\bf h}_{T})$, and
$\mathcal{P}_{T}=(\mathbf{p}_{1},\mathbf{p}_{2},\dots,\mathbf{p}_{T})$,
given the sequence of sensing matrices $\mathcal{A}_{T}=(\mathbf{A}_{1},\mathbf{A}_{2},\dots,\mathbf{A}_{T})$
and the covariance matrices $\mathbf{C}(\mathbf{p}_{t})$ from the
radio map $\mathcal{M}$.

\subsubsection{Radio Map Construction}

\label{subsec:Joint-Radio-Map}

To construct the radio map $\mathcal{M}$, the anchor information
is needed, because there is no additional geographic information in
$\mathscr{P}_{1}$ except $\mathcal{M}$. Denote $\bm{\Theta}$ as
a collection of the \ac{bs} locations. Denote $f(\mathcal{Y}_{T},\mathcal{P}_{T};\bm{\Theta})$
as the value function to measure the likelihood of how the sequence
of observations $\mathcal{Y}_{T}$ matches with the trajectory $\mathcal{P}_{T}$
given the network topology $\bm{\Theta}$, such as the \ac{bs} locations.
Given the entire history of measurements, a topology-regularized maximum
log-likelihood problem for radio map construction is formulated as
follows
\begin{align*}
\mathscr{P}_{2}:\qquad\mathop{\mbox{maximize}}\limits_{\mathcal{H}_{T},\mathcal{P}_{T},\mathcal{M}} & \quad\log p(\mathcal{Y}_{T},\mathcal{H}_{T},\mathcal{P}_{T};\mathcal{M})\\
 & \quad\quad\quad-\mu f(\mathcal{Y}_{T},\mathcal{P}_{T};\bm{\Theta})\\
\text{subject to} & \quad\mathbf{p}_{t}\in\mathcal{X},\qquad t=1,2,\dots,T
\end{align*}
where $\mu>0$ serves as a regularization parameter. The radio map
variable $\mathcal{M}$ in $\mathscr{P}_{2}$ is constructed via estimating
the covariance matrices $\mathbf{C}(\mathbf{x})$ for all $\mathbf{x}\in\mathcal{X}$.

\section{Radio-Map-Embedded CSI Tracking with A Switching Kalman Filter Framework}

\label{sec:Radio-map-assisted-CSI-Tracking}

\begin{figure}
\begin{centering}
\includegraphics[width=1\columnwidth]{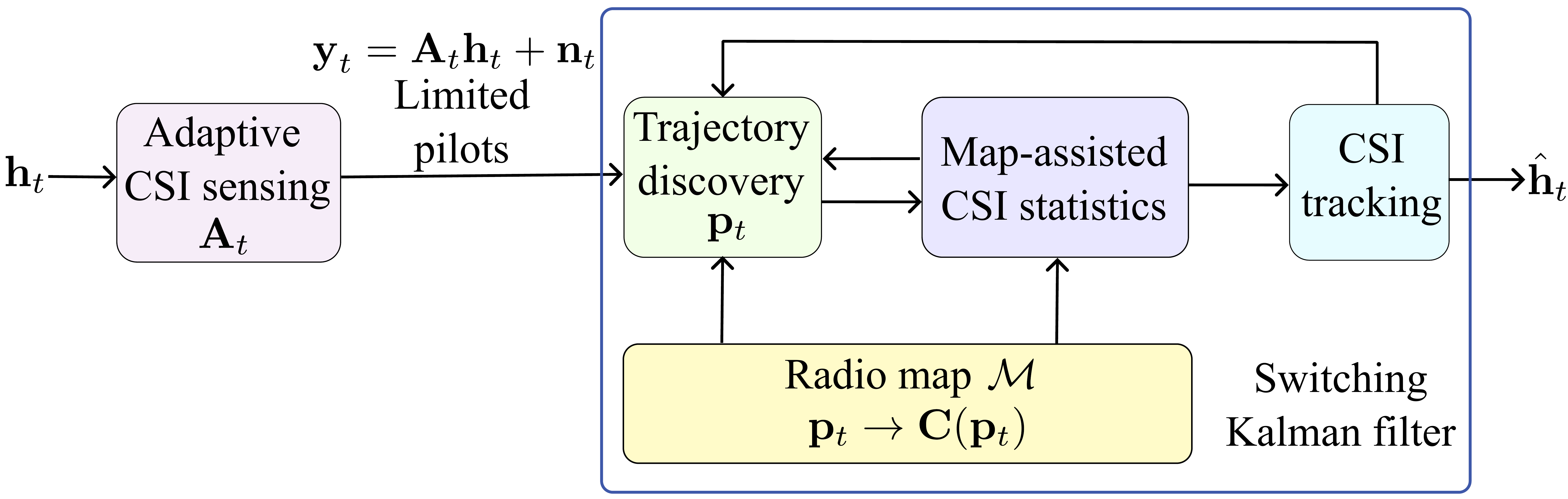}
\par\end{centering}
\caption{\label{fig:Diagram-of-the-radio-map-assisted-CSI-tracking}Diagram
of the radio-map-embedded CSI tracking framework.}
\end{figure}

In this section, we are interested in a tracking approach to solve
problem $\mathscr{P}_{1}$. Specifically, we develop a radio-map-embedded
\ac{skf} framework for \ac{csi} tracking as shown in Fig.~\ref{fig:Diagram-of-the-radio-map-assisted-CSI-tracking}.
First, given a sequence of sensing matrices $\mathcal{A}_{t}$, the
position $\mathbf{p}_{t}$ and the \ac{csi} $\mathbf{h}_{t}$ are
jointly estimated and tracked, where an \ac{skf} is applied to track
$\mathbf{h}_{t}$ with switching dynamics based on the distribution
of the position $\mathbf{p}_{t}$ and the radio map $\mathcal{M}$,
and in turn, the position $\mathbf{p}_{t}$ is estimated and tracked
based on the estimate of $\mathbf{h}_{t}$. Second, the sensing matrix
$\mathbf{A}_{t}$ is dynamically adjusted by minimizing the differential
entropy of the channel.

As the radio map $\mathcal{M}$ is given in this section, to simplify
the notation, we omit the symbols $\mathcal{A}_{T}$ and $\mathcal{M}$
in the joint probability $p(\mathcal{Y}_{T},\mathcal{H}_{T},\mathcal{P}_{T};\mathcal{A}_{T},\mathcal{M})$,
wherever it does not affect the clarity from the context.

\subsection{Problem Reformulation}

\label{subsec:Problem-Reformulation}

Using Bayes' theorem, the joint probability $p(\mathcal{Y}_{T},\mathcal{H}_{T},\mathcal{P}_{T})$
is explicitly expressed in the following proposition.
\begin{prop}[Factorization of $p(\mathcal{Y}_{T},\mathcal{H}_{T},\mathcal{P}_{T})$]
\label{prop:Factorization-of-logp}The joint probability $p(\mathcal{Y}_{T},\mathcal{H}_{T},\mathcal{P}_{T})$
can be factorized as
\begin{align}
p(\mathcal{Y}_{T},\mathcal{H}_{T},\mathcal{P}_{T})= & \prod_{t=1}^{T}p(\mathbf{y}_{\text{\ensuremath{t}}}|\mathbf{h}_{t})\prod_{t=2}^{T}p({\bf h}_{t}|{\bf h}_{t-1},{\bf p}_{t})\nonumber \\
 & \times p({\bf h}_{1})\times\prod_{t=2}^{T}\mathbb{P}(\mathbf{p}_{\text{\ensuremath{t}}}|\mathbf{p}_{t-1})\mathbb{P}(\mathbf{p}_{1}).\label{eq:factorization-F-YHP}
\end{align}
\end{prop}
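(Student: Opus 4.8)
The plan is to derive (\ref{eq:factorization-F-YHP}) by repeated application of the chain rule of probability, peeling off the conditional-independence relations that are built into the system dynamics of Section~\ref{subsec:System-Dynamics-and-the-CSI-Tracking-Problem}. First, I would split the joint density into three blocks,
\[
p(\mathcal{Y}_{T},\mathcal{H}_{T},\mathcal{P}_{T}) = p(\mathcal{Y}_{T}\mid\mathcal{H}_{T},\mathcal{P}_{T})\, p(\mathcal{H}_{T}\mid\mathcal{P}_{T})\, p(\mathcal{P}_{T}),
\]
and handle each block in turn.

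For the observation block, the measurement model (\ref{eq:observation_model}) with noise ${\bf n}_{t}$ that is independent across $t$ and independent of the channel and trajectory processes implies that, given ${\bf h}_{t}$, the observation ${\bf y}_{t}$ is conditionally independent of $\{{\bf h}_{\tau}\}_{\tau\neq t}$, $\mathcal{P}_{T}$, and $\{{\bf y}_{\tau}\}_{\tau<t}$; hence $p(\mathcal{Y}_{T}\mid\mathcal{H}_{T},\mathcal{P}_{T})=\prod_{t=1}^{T}p({\bf y}_{t}\mid{\bf h}_{t})$. For the channel block, the first-order AR recursion (\ref{eq:ar_channel_model}) with the position-dependent innovation ${\bf u}_{t}\sim\mathcal{CN}(0,{\bf C}({\bf p}_{t}))$ makes $({\bf h}_{t})$ a Markov chain modulated by $({\bf p}_{t})$; applying the chain rule and then using that ${\bf u}_{t}$ depends on the trajectory only through ${\bf p}_{t}$ gives $p(\mathcal{H}_{T}\mid\mathcal{P}_{T})=p({\bf h}_{1}\mid\mathcal{P}_{T})\prod_{t=2}^{T}p({\bf h}_{t}\mid{\bf h}_{t-1},{\bf p}_{t})$, where I would additionally invoke the assumed independence of the initial channel ${\bf h}_{1}$ from ${\bf p}_{1}$ to write $p({\bf h}_{1}\mid\mathcal{P}_{T})=p({\bf h}_{1})$. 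For the trajectory block, the Markov assumption on $\mathcal{P}_{t}$ with transition kernel $\mathbb{P}({\bf p}_{t}\mid{\bf p}_{t-1})$, e.g., as in (\ref{eq:mobility-transition-probability-1}), yields $p(\mathcal{P}_{T})=\mathbb{P}({\bf p}_{1})\prod_{t=2}^{T}\mathbb{P}({\bf p}_{t}\mid{\bf p}_{t-1})$. Multiplying the three factorizations and regrouping the products reproduces (\ref{eq:factorization-F-YHP}).

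The main obstacle is not any hard computation but the careful bookkeeping of the conditional-independence reductions: one must verify $p({\bf y}_{t}\mid\mathcal{H}_{T},\mathcal{P}_{T},\mathcal{Y}_{t-1})=p({\bf y}_{t}\mid{\bf h}_{t})$ and $p({\bf h}_{t}\mid{\bf h}_{1:t-1},\mathcal{P}_{T})=p({\bf h}_{t}\mid{\bf h}_{t-1},{\bf p}_{t})$, which amounts to reading off the d-separation structure of the directed graphical model induced by (\ref{eq:ar_channel_model})--(\ref{eq:observation_model}), together with stating explicitly the mild independence assumptions on ${\bf h}_{1}$, on $\{{\bf u}_{t}\}$ and $\{{\bf n}_{t}\}$ across time, and on the trajectory process that are needed for these reductions to hold. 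I would make those assumptions explicit at the outset of the proof so that each factorization step becomes an immediate consequence.
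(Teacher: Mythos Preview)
Your proposal is correct and takes essentially the same approach as the paper: both arguments are chain-rule factorizations followed by conditional-independence reductions dictated by the observation model (\ref{eq:observation_model}), the AR channel dynamics (\ref{eq:ar_channel_model}), and the Markov mobility model. Your three-block split $p(\mathcal{Y}_{T}\mid\mathcal{H}_{T},\mathcal{P}_{T})\,p(\mathcal{H}_{T}\mid\mathcal{P}_{T})\,p(\mathcal{P}_{T})$ is slightly more direct than the paper's route (which first manipulates $p(\mathcal{Y}_{T},\mathcal{H}_{T})$ and $p(\mathcal{H}_{T},\mathcal{P}_{T})$ separately before combining), and you are right to flag the independence of ${\bf h}_{1}$ from the trajectory as an assumption needed to obtain the bare $p({\bf h}_{1})$ factor.
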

\begin{proof}
See Appendix~\ref{sec:Proof-of-Proposition-Factorization}.
\end{proof}

With the factorization~(\ref{eq:factorization-F-YHP}) and given
the sensing matrices $\mathcal{A}_{T}$, problem $\mathscr{P}_{1}$
becomes
\begin{align*}
\mathscr{P}_{1}':\qquad\mathop{\mbox{maximize}}\limits_{\mathcal{H}_{T},\mathcal{P}_{T}} & \quad\sum_{t=1}^{T}\log p(\mathbf{y}_{\text{\ensuremath{t}}}|\mathbf{h}_{t})+\log p({\bf h}_{1})\\
 & \quad+\sum_{t=2}^{T}\log p({\bf h}_{t}|{\bf h}_{t-1},{\bf p}_{t})\\
 & \quad+\log\mathbb{P}(\mathbf{p}_{1})+\sum_{t=2}^{T}\log\mathbb{P}(\mathbf{p}_{\text{\ensuremath{t}}}|\mathbf{p}_{t-1})\\
\text{subject to} & \quad\mathbf{p}_{t}\in\mathcal{X},\qquad t=1,2,\dots,T.
\end{align*}

\subsection{Radio-Map-Embedded Switching Kalman Filtering}

\label{subsec:Switch-Kalman-Filtering}

Here, we develop an \ac{skf} approach to jointly track the positions
and the \ac{csi}. We first find the distribution of the position
$\mathbf{p}_{t}$ given the measurements up to time $t$ and the trajectory
up to time $t-1$. Then, with the radio map~$\mathcal{M}$ and the
conditional distribution of $\mathbf{p}_{t}$, the \ac{csi} $\mathbf{h}_{t}$
is recursively tracked leveraging an \ac{skf}. Finally, the position
$\mathbf{p}_{t}$ is estimated based on the maximum likelihood criterion.

\subsubsection{Distribution of the Position}

\label{subsec:Mode-Switching-1}

Let $\pi_{t}(\mathbf{p}_{t})\triangleq p(\mathbf{p}_{t}|\mathcal{Y}_{t},\mathcal{P}_{t-1})$
be the distribution of the position at time $t$ given the measurements
up to time $t$ and the trajectory up to time $t-1$. Using Bayes'
theorem, for $t=2,3,\dots T$, $\pi_{t}(\mathbf{p}_{t})$ is derived
as
\begin{align}
\pi_{t}(\mathbf{p}_{t}) & =p(\mathbf{p}_{t}|\mathcal{Y}_{t},\mathcal{P}_{t-1})\nonumber \\
 & =\frac{p(\mathcal{Y}_{t},\mathcal{P}_{t-1},\mathbf{p}_{t})}{p(\mathcal{Y}_{t},\mathcal{P}_{t-1})}\label{eq:7-a}\\
 & =\frac{p(\mathcal{Y}_{t},\mathcal{P}_{t})}{p(\mathcal{Y}_{t},\mathcal{P}_{t-1})}\label{eq:7-b}\\
 & =\frac{p(\mathbf{y}_{t}|\mathcal{Y}_{t-1},\mathcal{P}_{t})p(\mathbf{p}_{t}|\mathcal{Y}_{t-1},\mathcal{P}_{t-1})p(\mathcal{Y}_{t-1},\mathcal{P}_{t-1})}{p(\mathbf{y}_{t}|\mathcal{Y}_{t-1},\mathcal{P}_{t-1})p(\mathcal{Y}_{t-1},\mathcal{P}_{t-1})}\label{eq:7-c}\\
 & =\frac{p(\mathbf{y}_{t}|\mathbf{p}_{t})\mathbb{P}(\mathbf{p}_{t}|\mathbf{p}_{t-1})}{\sum_{\mathbf{x}\in\mathcal{X}}p({\bf y}_{t}|\mathbf{x})\mathbb{P}(\mathbf{x}|\mathbf{p}_{t-1})}\label{eq:pi-pt}
\end{align}
where equation~(\ref{eq:7-a}) follows from the definition of conditional
probability, equation~(\ref{eq:7-b}) uses the identity that $\mathcal{P}_{t}=(\mathbf{p}_{1},\mathbf{p}_{2},\dots,\mathbf{p}_{t})$
and $\mathcal{P}_{t-1}=(\mathbf{p}_{1},\mathbf{p}_{2},\dots,\mathbf{p}_{t-1})$,
equation~(\ref{eq:7-c}) holds since both $p(\mathcal{Y}_{t},\mathcal{P}_{t})$
and $p(\mathcal{Y}_{t},\mathcal{P}_{t-1})$ can be derived in a similar
manner to (\ref{eq:joint_prop_y_h}), and equation~(\ref{eq:pi-pt})
holds because $\mathbf{y}_{t}$ depends on $\mathbf{p}_{t}$ which
depends on $\mathbf{p}_{t-1}$. The conditional probability $p({\bf y}_{t}|{\bf p}_{t})$
can be derived based on the observation model~(\ref{eq:observation_model})
and the channel distribution $\mathbf{h}_{t}|\mathbf{p}_{t}\sim\mathcal{CN}(0,{\bf C}({\bf p}_{t}))$
indicated in~(\ref{eq:ar_channel_model}) as
\begin{equation}
p({\bf y}_{t}|{\bf p}_{t})=\frac{\exp(-\frac{1}{2}\mathbf{y}_{t}^{\mathrm{H}}({\bf A}_{t}{\bf C}({\bf p}_{t}){\bf A}_{t}^{\text{H}}+\sigma_{\text{n}}^{2}{\bf I})^{-1}\mathbf{y}_{t})}{\sqrt{(2\pi)^{M}|{\bf A}_{t}{\bf C}({\bf p}_{t}){\bf A}_{t}^{\text{H}}+\sigma_{\text{n}}^{2}{\bf I}|}}.
\end{equation}

\subsubsection{Radio-Map-Embedded \ac{csi} Tracking}

\label{subsec:-Driven-Unbiased-Channel-Estimate}

First, we derive a \ac{csi} prediction and its corresponding error
covariance at time $t$ based on the observations $\mathcal{Y}_{t-1}$
up to time $t-1$.

Define $\hat{\mathbf{h}}_{t-1}\triangleq\mathbb{E}\{\mathbf{h}_{t-1}|\mathcal{Y}_{t-1}\}$
as the Bayesian estimator of the \ac{csi} $\mathbf{h}_{t-1}$ based
on the observations $\mathcal{Y}_{t-1}$ up to time $t-1$. Define
$\hat{\mathbf{h}}_{t|t-1}\triangleq\mathbb{E}\{\mathbf{h}_{t}|\hat{\mathbf{h}}_{t-1}\}$
as a prediction of the \ac{csi} $\mathbf{h}_{t}$ based on $\hat{\mathbf{h}}_{t-1}$.
Using the \ac{csi} dynamic model~(\ref{eq:ar_channel_model}), we
have 
\begin{equation}
\hat{\mathbf{h}}_{t|t-1}=\mathbb{E}\{\mathbf{h}_{t}|\hat{\mathbf{h}}_{t-1}\}=\gamma\hat{{\bf h}}_{t-1}.\label{eq:channel-prediction}
\end{equation}
Define $\mathbf{e}_{t|t-1}\triangleq\mathbf{h}_{t}-\hat{\mathbf{h}}_{t|t-1}$
as the channel prediction error of $\mathbf{h}_{t}$ based on the
observations $\mathcal{Y}_{t-1}$ up to time $t-1$. Substituting
(\ref{eq:channel-prediction}) into $\mathbf{e}_{t|t-1}$, we can
obtain
\begin{alignat}{1}
\mathbf{e}_{t|t-1} & \triangleq\mathbf{h}_{t}-\hat{\mathbf{h}}_{t|t-1}\nonumber \\
 & =\gamma\mathbf{h}_{t-1}+\sqrt{1-\gamma^{2}}{\bf u}_{t}-\gamma\hat{{\bf h}}_{t-1}\nonumber \\
 & =\gamma\mathbf{e}_{t-1}+\sqrt{1-\gamma^{2}}{\bf u}_{t}\label{eq:prior-channel-estimate-error}
\end{alignat}
where $\mathbf{e}_{t-1}$ is the channel estimation error generated
from the Bayesian estimator $\hat{\mathbf{h}}_{t-1}$ , {\em i.e.},
$\mathbf{e}_{t-1}\triangleq\mathbf{h}_{t-1}-\hat{{\bf h}}_{t-1}$.

Following the channel prediction error $\mathbf{e}_{t|t-1}$ in (\ref{eq:prior-channel-estimate-error}),
the error covariance of $\mathbf{e}_{t|t-1}$ is derived as
\begin{alignat}{1}
\mathbf{Q}_{t|t-1} & \triangleq\mathbb{E}\{\mathbf{e}_{t|t-1}\mathbf{e}_{t|t-1}^{\text{H}}\}\nonumber \\
 & =\gamma^{2}\mathbb{E}\{\mathbf{e}_{t-1}\mathbf{e}_{t-1}^{\text{H}}\}+(1-\gamma^{2})\mathbb{E}\{\mathbf{u}_{t}\mathbf{u}_{t}^{\text{H}}|\mathbf{p}_{t-1}\}\nonumber \\
 & =\gamma^{2}\mathbf{Q}_{t-1}+(1-\gamma^{2})\mathbb{E}\{\mathbf{u}_{t}\mathbf{u}_{t}^{\text{H}}|\mathbf{p}_{t-1}\}\label{eq:prior-error-covariance}
\end{alignat}
where $\mathbf{Q}_{t-1}\triangleq\mathbb{E}\{\mathbf{e}_{t-1}\mathbf{e}_{t-1}^{\text{H}}\}$
is the error covariance of $\mathbf{e}_{t-1}$, and the term $\mathbb{E}\{\mathbf{u}_{t}\mathbf{u}_{t}^{\mathrm{H}}|\mathbf{p}_{t-1}\}$
can be computed based on the distribution of the position $\mathbf{p}_{t}$
given $\mathbf{p}_{t-1}$ in (\ref{eq:pi-pt}) and the radio map~$\mathcal{M}$
as follows
\begin{align}
\mathbb{E}\{\mathbf{u}_{t}\mathbf{u}_{t}^{\mathrm{H}}|\mathbf{p}_{t-1}\} & =\mathbb{E}_{\mathbf{p}_{t}\in\mathcal{X}}\{\mathbb{E}\{\mathbf{u}_{t}\mathbf{u}_{t}^{\mathrm{H}}|\mathbf{p}_{t}\}|\mathbf{p}_{t-1}\}\nonumber \\
 & =\sum_{\mathbf{p}_{t}\in\mathcal{X}}\pi_{t}({\bf p}_{t})\times\mathbf{C}(\mathbf{p}_{t})\label{eq:expectation-ut}
\end{align}
where recall from~(\ref{eq:ar_channel_model}) that $\mathbf{u}_{t}|\mathbf{p}_{t}\sim\mathcal{CN}(0,\mathbf{C}(\mathbf{p}_{t}))$
implying $\mathbb{E}\{\mathbf{u}_{t}\mathbf{u}_{t}^{\mathrm{H}}|\mathbf{p}_{t}\}=\mathbf{C}(\mathbf{p}_{t})$,
and $\pi_{t}({\bf p}_{t})$ captures the location distribution at
time $t$ based on observations $\mathcal{Y}_{t}$ up to time $t$
and the user trajectory $\mathcal{P}_{t-1}$ up to time $t-1$. It
is observed that the radio map $\mathcal{M}$ plays a critical role
in constructing the distribution of the hidden position variable $\mathbf{p}_{t}$
and the error covariance $\mathbf{Q}_{t|t-1}$ for the channel prediction.

Second, an update equation for the Bayesian estimator $\hat{\mathbf{h}}_{t}\triangleq\mathbb{E}\{\mathbf{h}_{t}|\mathcal{Y}_{t}\}$
based on the observations~$\mathcal{Y}_{t}$ up to time $t$ can
be formulated as \cite{WelBis:B95}
\begin{align}
\hat{\mathbf{h}}_{t} & =\hat{\mathbf{h}}_{t|t-1}+\mathbf{K}_{t}(\mathbf{y}_{t}-\mathbf{A}_{t}\hat{\mathbf{h}}_{t|t-1})\label{eq:channel-estimate-equation}
\end{align}
where $\mathbf{K}_{t}$ is a coefficient of the measurement residual
$\mathbf{y}_{t}-\mathbf{A}_{t}\hat{\mathbf{h}}_{t|t-1}$. The optimal
coefficient $\mathbf{K}_{t}$ that minimizes the mean squared error
of the estimator $\hat{\mathbf{h}}_{t}$ is given by \cite{WelBis:B95}
\begin{equation}
{\bf K}_{t}=\mathbf{Q}_{t|t-1}{\bf A}_{t}^{\text{H}}({\bf A}_{t}\mathbf{Q}_{t|t-1}{\bf A}_{t}^{\text{H}}+\sigma_{\mathrm{n}}^{2}{\bf I})^{-1}.\label{eq:Kalman-gain}
\end{equation}

It is observed from (\ref{eq:Kalman-gain}) that the optimal $\mathbf{K}_{t}$
relies on the error covariance $\mathbf{Q}_{t|t-1}$, which is partially
determined by the channel spatial statistics $\mathbb{E}\{\mathbf{u}_{t}\mathbf{u}_{t}^{\mathrm{H}}|\mathbf{p}_{t-1}\}$,
and thus the radio map~$\mathcal{M}$.

Based on (\ref{eq:channel-estimate-equation}) and (\ref{eq:Kalman-gain}),
the covariance of the estimation error $\mathbf{e}_{t}\triangleq\mathbf{h}_{t}-\hat{\mathbf{h}}_{t}$
can be derived as
\begin{alignat}{1}
\mathbf{Q}_{t} & \triangleq\mathbb{E}\{(\mathbf{h}_{t}-\hat{\mathbf{h}}_{t})(\mathbf{h}_{t}-\hat{\mathbf{h}}_{t})^{\text{H}}\}\nonumber \\
 & =({\bf I}-{\bf K}_{t}{\bf A}_{t})\mathbf{Q}_{t|t-1}({\bf I}-{\bf K}_{t}{\bf A}_{t})^{\text{H}}+\mathbf{K}_{t}\mathbf{K}_{t}^{\text{H}}/\sigma_{\text{n}}^{2}\nonumber \\
 & =({\bf I}-{\bf K}_{t}{\bf A}_{t})\mathbf{Q}_{t|t-1}\label{eq:update-error-covariance}
\end{alignat}
which completes the recursive filter.

\subsubsection{User Position Tracking}

\label{subsec:User-Position-Tracking}

Based on the observation sequence $\mathcal{Y}_{t}$, the \ac{csi}
sequence $\mathcal{H}_{t}$ up to time $t$, and the user trajectory
$\mathcal{P}_{t-1}$, the user position at time $t=2,3,\dots,T$ can
be tracked by maximizing the conditional probability $\bar{\pi}_{t}(\mathbf{p}_{t})\triangleq p(\mathbf{p}_{t}|\mathcal{Y}_{t},\mathcal{H}_{t},\mathcal{P}_{t-1})$.
Similar to (\ref{eq:pi-pt}), $\bar{\pi}_{t}(\mathbf{p}_{t})$ can
be derived as
\begin{align}
\bar{\pi}_{t}(\mathbf{p}_{t}) & \triangleq p(\mathbf{p}_{t}|\mathcal{Y}_{t},\mathcal{H}_{t},\mathcal{P}_{t-1})\nonumber \\
 & =\frac{p(\mathcal{Y}_{t},\mathcal{H}_{t},\mathcal{P}_{t-1},\mathbf{p}_{t})}{p(\mathcal{Y}_{t},\mathcal{H}_{t},\mathcal{P}_{t-1})}\label{eq:19-a}\\
 & =\frac{p(\mathcal{Y}_{t},\mathcal{H}_{t},\mathcal{P}_{t})}{p(\mathcal{Y}_{t},\mathcal{H}_{t},\mathcal{P}_{t-1})}\label{eq:19-b}\\
 & =\frac{p(\mathbf{h}_{t}|\mathbf{h}_{t-1},\mathbf{p}_{t})\mathbb{P}(\mathbf{p}_{t}|\mathbf{p}_{t-1})}{\sum_{\mathbf{x}\in\mathcal{X}}p(\mathbf{h}_{t}|\mathbf{h}_{t-1},\mathbf{x})\mathbb{P}(\mathbf{x}|\mathbf{p}_{t-1})}\label{eq:bar-pi-pt}
\end{align}
where equation~(\ref{eq:19-a}) follows from the definition of conditional
probability, equation~(\ref{eq:19-b}) uses the identity that $\mathcal{P}_{t}=(\mathbf{p}_{1},\mathbf{p}_{2},\dots,\mathbf{p}_{t})$
and $\mathcal{P}_{t-1}=(\mathbf{p}_{1},\mathbf{p}_{2},\dots,\mathbf{p}_{t-1})$,
and equation~(\ref{eq:bar-pi-pt}) holds because the joint distribution
$p(\mathcal{Y}_{t},\mathcal{H}_{t},\mathcal{P}_{t})$ is given in
(\ref{eq:joint-probability-Y-P-H}), and $p(\mathcal{Y}_{t},\mathcal{H}_{t},\mathcal{P}_{t-1})$
is obtained by marginalizing $p(\mathcal{Y}_{t},\mathcal{H}_{t},\mathcal{P}_{t})$
over ${\bf p}_{t}$. In addition, the conditional distribution $\mathbf{h}_{t}|\mathbf{h}_{t-1},\mathbf{p}_{t}\sim\mathcal{CN}(\gamma{\bf h}_{t-1},(1-\gamma^{2})\mathbf{C}(\mathbf{p}_{t}))$
follows from (\ref{eq:ar_channel_model}).

Based on (\ref{eq:bar-pi-pt}), the user position at time $t=2,3,\dots,T$
can be obtained as 
\begin{align}
\hat{\mathbf{p}}_{t} & =\mathop{\mbox{argmax}}\limits_{\mathbf{p}_{t}\in\mathcal{X}}\quad p(\mathbf{h}_{t}|\mathbf{h}_{t-1},\mathbf{p}_{t})\mathbb{P}(\mathbf{p}_{t}|\mathbf{p}_{t-1})\nonumber \\
 & =\mathop{\mbox{argmax}}\limits_{\mathbf{p}_{t}\in\mathcal{X}}\frac{\exp(-\frac{1}{2}\boldsymbol{\Delta}_{t}^{\mathrm{H}}((1-\gamma^{2})\mathbf{C}(\mathbf{p}_{t}))^{-1}\boldsymbol{\Delta}_{t})}{\sqrt{(2\pi(1-\gamma^{2}))^{N_{\mathrm{t}}}|\mathbf{C}(\mathbf{p}_{t})|}}\mathbb{P}(\mathbf{p}_{t}|\mathbf{p}_{t-1})\label{eq:tracking-user-position}
\end{align}
where $\boldsymbol{\Delta}_{t}\triangleq\mathbf{h}_{t}-\gamma{\bf h}_{t-1}$.

The overall algorithm of the radio-map-embedded \ac{skf} is summarized
in Algorithm~\ref{alg:radio-map-assisted-switching-Kalman-filter-csi-tracking}.
The complexity of Algorithm~\ref{alg:radio-map-assisted-switching-Kalman-filter-csi-tracking}
is analyzed as follows. At each time step, the computational complexity
for \ac{csi} tracking is $O(|\mathcal{X}|N_{\text{t}}^{2}+N_{\text{t}}^{2}M+M^{3}+N_{t}^{3})$.
Specifically the term $O(|\mathcal{X}|N_{\text{t}}^{2})$ arises from
computing the spatial statistics of \ac{csi} in (\ref{eq:expectation-ut}),
$O(N_{\text{t}}^{2}M+M^{3})$ results from matrix operations in \ac{csi}
estimation, such as computing $\mathbf{Q}_{t|t-1}{\bf A}_{t}^{\text{H}}$
in (\ref{eq:Kalman-gain}), and $O(N_{t}^{3})$ is due to the matrix
inversion involved in user localization in (\ref{eq:tracking-user-position}).

\begin{algorithm}
\textbf{Input:} Radio maps $\mathcal{M}$, noise variance $\sigma_{\text{n}}^{2}$,
and an initialized position $\hat{\mathbf{p}}_{1}$ (such as from
a uniform prior).
\begin{enumerate}
\item At time $t=1$, randomly generate $\mathbf{A}_{1}$ to observe $\mathbf{y}_{1}$.
Initialize $\hat{{\bf h}}_{1}\sim\mathcal{CN}(0,\mathbf{C}(\hat{\mathbf{p}}_{1}))$
and ${\bf Q}_{1}=(1-\gamma^{2})\sum_{\mathbf{p}_{1}\in\mathcal{X}}\pi_{1}({\bf p}_{1})\times\mathbf{C}(\mathbf{p}_{1})$,
where $\pi_{1}({\bf p}_{1})=p(\mathbf{p}_{1}|\mathbf{y}_{1})$.
\item For each $t>1$:
\begin{enumerate}
\item \label{enu:Prior-channel-estimate}Channel prediction based on observations
$\mathcal{Y}_{t-1}$:
\begin{enumerate}
\item $\hat{{\bf h}}_{t|t-1}=\gamma\hat{{\bf h}}_{t-1}$.
\item \label{enu:approximate-of-Qtt-1}${\bf Q}_{t|t-1}=\gamma^{2}\mathbf{Q}_{t-1}+(1-\gamma^{2})\sum_{\mathbf{p}_{t}\in\mathcal{X}}\mathbb{P}(\mathbf{p}_{t}|\mathbf{p}_{t-1}=\hat{\mathbf{p}}_{t-1})\times\mathbf{C}(\mathbf{p}_{t}),$
where $\mathbb{P}(\mathbf{p}_{t}|\mathbf{p}_{t-1}=\hat{\mathbf{p}}_{t-1})$
is given in (\ref{eq:mobility-transition-probability-1}) and $\mathbf{C}(\mathbf{p}_{t})$
is embedded in the radio map~$\mathcal{M}$.
\end{enumerate}
\item \label{enu:Channel-measurement}Design $\mathbf{A}_{t}$ using (\ref{eq:optimal-solution-A})
based on ${\bf Q}_{t|t-1}$ and perform channel sensing using $\mathbf{A}_{t}$
to obtain ${\bf y}_{t}$.
\item \label{enu:Refinement-of-Q_t-t-1}Refine ${\bf Q}_{t|t-1}$ as ${\bf Q}_{t|t-1}=\gamma^{2}\mathbf{Q}_{t-1}+(1-\gamma^{2})\mathbb{E}\{\mathbf{u}_{t}\mathbf{u}_{t}^{\text{H}}|\mathbf{p}_{t-1}=\hat{\mathbf{p}}_{t-1}\},$
where $\mathbb{E}\{\mathbf{u}_{t}\mathbf{u}_{t}^{\text{H}}|\mathbf{p}_{t-1}=\hat{\mathbf{p}}_{t-1}\}=\sum_{\mathbf{p}_{t}\in\mathcal{X}}\pi_{t}({\bf p}_{t})\times\mathbf{C}(\mathbf{p}_{t})$,
in which $\pi_{t}({\bf p}_{t})$ is calculated using (\ref{eq:pi-pt}).
\item Channel estimation based on observations $\mathcal{Y}_{t}$:
\begin{enumerate}
\item ${\bf K}_{t}=\mathbf{Q}_{t|t-1}{\bf A}_{t}^{\text{H}}({\bf A}_{t}\mathbf{Q}_{t|t-1}{\bf A}_{t}^{\text{H}}+\sigma_{\mathrm{n}}^{2}{\bf I})^{-1}.$
\item $\hat{\mathbf{h}}_{t}=\hat{\mathbf{h}}_{t|t-1}+\mathbf{K}_{t}(\mathbf{y}_{t}-\mathbf{A}_{t}\hat{\mathbf{h}}_{t|t-1})$.
\item ${\bf Q}_{t}=({\bf I}-{\bf K}_{t}{\bf A}_{t})\mathbf{Q}_{t|t-1}$.
\end{enumerate}
\item Estimate the user position $\hat{\mathbf{p}}_{t}=\text{argmax}_{\mathbf{p}_{t}\in\mathcal{X}}p(\hat{\mathbf{h}}_{t}|\hat{\mathbf{h}}_{t-1},\mathbf{p}_{t})\mathbb{P}(\mathbf{p}_{t}|\hat{\mathbf{p}}_{t-1})$.
\item Output $\hat{\mathbf{h}}_{t}$ and $\hat{\mathbf{p}}_{t}$.
\end{enumerate}
\end{enumerate}
\caption{\foreignlanguage{american}{Radio-Map-Embedded \ac{skf} for Online \ac{csi} Tracking}}

\label{alg:radio-map-assisted-switching-Kalman-filter-csi-tracking}
\end{algorithm}

\subsection{Entropy-based Sensing Matrix Adaptation}

\label{subsec:Sensing-Matrix-Adaption}

As can be seen in (\ref{eq:channel-estimate-equation}), the update
of the channel estimator $\hat{{\bf h}}_{t}$ from $\hat{\mathbf{h}}_{t|t-1}$
relies on the observation $\mathbf{y}_{t}$ which is influenced by
the sensing matrix $\mathbf{A}_{t}$. When no prior information is
available, a random sensing matrix $\mathbf{A}_{t}$ may be used.
However, in the scenario of tracking, prior information is available
for optimizing the sensing matrix $\mathbf{A}_{t}$ for better \ac{csi}
tracking. To achieve this, we investigate the differential entropy
of $\hat{{\bf h}}_{t}$ \ac{wrt} $\mathbf{A}_{t}$.

\subsubsection{Differential Entropy}

\label{subsec:Differential-Entropy-Minimization-Problem}

First, to evaluate the uncertainty of the channel estimator $\hat{{\bf h}}_{t}$,
we derive the differential entropy of $\hat{\mathbf{h}}_{t}$ \ac{wrt}
the error covariance $\mathbf{Q}_{t}$.

From (\ref{eq:update-error-covariance}), the error covariance $\mathbb{E}\{(\mathbf{h}_{t}-\hat{\mathbf{h}}_{t})(\mathbf{h}_{t}-\hat{\mathbf{h}}_{t})^{\text{H}}\}$
of $\hat{\mathbf{h}}_{t}$ given the measurements $\mathcal{Y}_{t}$
up to time $t$ is captured in $\mathbf{Q}_{t}$, which implies that
$\hat{\mathbf{h}}_{t}\sim\mathcal{CN}(\mathbf{h}_{t},\mathbf{Q}_{t}).$
Following this, the differential entropy of $\hat{\mathbf{h}}_{t}$
is derived as \cite[Theorem 8.4.1]{ThoTho:B06}
\begin{align}
q(\hat{\mathbf{h}}_{t}) & =\frac{N_{\mathrm{t}}}{2}(1+\log(2\pi))+\frac{1}{2}\log|\mathbf{Q}_{t}|.\label{eq:differential-entropy-hat-ht}
\end{align}

Similarly, based on (\ref{eq:channel-prediction}) and (\ref{eq:prior-error-covariance}),
the differential entropy of $\hat{\mathbf{h}}_{t|t-1}$ is
\begin{equation}
q(\hat{\mathbf{h}}_{t|t-1})=\frac{N_{\mathrm{t}}}{2}(1+\log(2\pi))+\frac{1}{2}\log|\mathbf{Q}_{t|t-1}|.\label{eq:differential-entropy-hat-h-t-1}
\end{equation}

Based on (\ref{eq:differential-entropy-hat-ht}) and (\ref{eq:differential-entropy-hat-h-t-1}),
we can express the differential entropy $q(\hat{\mathbf{h}}_{t})$
given $q(\hat{\mathbf{h}}_{t|t-1})$ as
\begin{equation}
q(\hat{\mathbf{h}}_{t})=q(\hat{\mathbf{h}}_{t|t-1})+\frac{1}{2}\left(\log|\mathbf{Q}_{t}|-\log|\mathbf{Q}_{t|t-1}|\right).\label{eq:reformulate-q-hat-ht}
\end{equation}

Next, we minimize $q(\hat{\mathbf{h}}_{t})$ \ac{wrt} the sensing
matrix $\mathbf{A}_{t}$, which is embedded in $\mathbf{Q}_{t}$ as
indicated in (\ref{eq:update-error-covariance}). Given $q(\hat{\mathbf{h}}_{t|t-1})$
and (\ref{eq:reformulate-q-hat-ht}), minimizing $q(\hat{\mathbf{h}}_{t})$
is equivalent to minimizing $q(\hat{\mathbf{h}_{t}})-q(\hat{\mathbf{h}}_{t|t-1})=\frac{1}{2}\left(\log|\mathbf{Q}_{t}|-\log|\mathbf{Q}_{t|t-1}|\right)$,
which leads to the following optimization problem
\begin{align}
\mathop{\mbox{minimize}}\limits_{\mathbf{A}_{t}}\quad & \frac{1}{2}\left(\log|\mathbf{Q}_{t}|-\log|\mathbf{Q}_{t|t-1}|\right)\label{eq:problem-maximizing-entropy}\\
\text{subject to}\quad & \mathbf{A}_{t}\mathbf{A}_{t}^{\text{H}}=\mathbf{I}.\nonumber 
\end{align}

\subsubsection{Solution for the Adaptive Sensing Matrix $\mathbf{A}_{t}$}

\label{subsec:Solution-for-the-Adaptive-Sening-Matrix-A-t}

To solve Problem~(\ref{eq:problem-maximizing-entropy}), we first
explicitly derive the expression of $\frac{1}{2}(\log|\mathbf{Q}_{t}|-\log|\mathbf{Q}_{t|t-1}|)$
\ac{wrt} $\mathbf{A}_{t}$. Although we have derived $\mathbf{Q}_{t|t-1}$
and $\mathbf{Q}_{t}$ in (\ref{eq:prior-error-covariance}) and (\ref{eq:update-error-covariance}),
it is not straightforward to derive $\frac{1}{2}(\log|\mathbf{Q}_{t}|-\log|\mathbf{Q}_{t|t-1}|)$
from them because $\mathbf{Q}_{t}$, $\mathbf{Q}_{t|t-1}$, $\mathbf{Q}_{t-1}$
are recursively coupled. Thus, we provide the following proposition
to offer an alternative yet equivalent transition from $\mathbf{Q}_{t|t-1}$
to $\mathbf{Q}_{t}$ of (\ref{eq:update-error-covariance}).
\begin{prop}[Alternative error propagation process]
\label{prop:error-propagation-equation}The error covariance $\mathbf{Q}_{t}$
follows the following error propagation process
\begin{equation}
\mathbf{Q}_{t}^{-1}=\mathbf{Q}_{t|t-1}^{-1}+\mathbf{A}_{t}^{\mathrm{H}}(\sigma_{\mathrm{n}}^{2}\mathbf{I})^{-1}\mathbf{A}_{t}.\label{eq:alternative-error-covariance-update}
\end{equation}
\end{prop}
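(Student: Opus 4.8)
The plan is to obtain~(\ref{eq:alternative-error-covariance-update}) purely as an algebraic consequence of the covariance-form update~(\ref{eq:update-error-covariance}) and the Kalman gain~(\ref{eq:Kalman-gain}); the recursion for $\mathbf{Q}_{t-1}$ plays no role, so none of the coupling between $\mathbf{Q}_{t}$, $\mathbf{Q}_{t|t-1}$, and $\mathbf{Q}_{t-1}$ needs to be unwound. Concretely, I would show that $\mathbf{Q}_{t}=(\mathbf{I}-\mathbf{K}_{t}\mathbf{A}_{t})\mathbf{Q}_{t|t-1}$ is the inverse of $\mathbf{Q}_{t|t-1}^{-1}+\mathbf{A}_{t}^{\mathrm{H}}(\sigma_{\mathrm{n}}^{2}\mathbf{I})^{-1}\mathbf{A}_{t}$ by direct multiplication. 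This is the classical ``information form'' of the Kalman update, and the reason to prefer it here is that it makes $\log|\mathbf{Q}_{t}|$ directly accessible for the entropy-minimization problem~(\ref{eq:problem-maximizing-entropy}).

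Before the algebra I would record that $\mathbf{Q}_{t|t-1}$ is invertible: in~(\ref{eq:prior-error-covariance}) it is the sum of the positive semidefinite matrix $\gamma^{2}\mathbf{Q}_{t-1}$ and $(1-\gamma^{2})\sum_{\mathbf{p}_{t}}\pi_{t}(\mathbf{p}_{t})\mathbf{C}(\mathbf{p}_{t})$, and since each $\mathbf{C}(\mathbf{x})=\mathbf{C}^{\mathrm{s}}(\mathbf{x})+\sigma_{\mathrm{h}}^{2}\mathbf{I}$ is positive definite, so is the convex mixture; hence $\mathbf{Q}_{t|t-1}$ is positive definite. Writing $\mathbf{R}\triangleq\sigma_{\mathrm{n}}^{2}\mathbf{I}$ and $\mathbf{P}\triangleq\mathbf{Q}_{t|t-1}$ for brevity and expanding,
\[
(\mathbf{I}-\mathbf{K}_{t}\mathbf{A}_{t})\mathbf{P}\bigl(\mathbf{P}^{-1}+\mathbf{A}_{t}^{\mathrm{H}}\mathbf{R}^{-1}\mathbf{A}_{t}\bigr)=(\mathbf{I}-\mathbf{K}_{t}\mathbf{A}_{t})+\bigl(\mathbf{P}\mathbf{A}_{t}^{\mathrm{H}}-\mathbf{K}_{t}\mathbf{A}_{t}\mathbf{P}\mathbf{A}_{t}^{\mathrm{H}}\bigr)\mathbf{R}^{-1}\mathbf{A}_{t}.
\]
The one substantive step is to recognize that~(\ref{eq:Kalman-gain}) rearranges to $\mathbf{K}_{t}\bigl(\mathbf{A}_{t}\mathbf{P}\mathbf{A}_{t}^{\mathrm{H}}+\mathbf{R}\bigr)=\mathbf{P}\mathbf{A}_{t}^{\mathrm{H}}$, so that $\mathbf{P}\mathbf{A}_{t}^{\mathrm{H}}-\mathbf{K}_{t}\mathbf{A}_{t}\mathbf{P}\mathbf{A}_{t}^{\mathrm{H}}=\mathbf{K}_{t}\mathbf{R}$. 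Substituting this back collapses the second term to $\mathbf{K}_{t}\mathbf{R}\mathbf{R}^{-1}\mathbf{A}_{t}=\mathbf{K}_{t}\mathbf{A}_{t}$, whence the product equals $(\mathbf{I}-\mathbf{K}_{t}\mathbf{A}_{t})+\mathbf{K}_{t}\mathbf{A}_{t}=\mathbf{I}$, which is exactly~(\ref{eq:alternative-error-covariance-update}).

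An alternative route, which I would mention as a cross-check, is to substitute~(\ref{eq:Kalman-gain}) into~(\ref{eq:update-error-covariance}) to get $\mathbf{Q}_{t}=\mathbf{P}-\mathbf{P}\mathbf{A}_{t}^{\mathrm{H}}(\mathbf{A}_{t}\mathbf{P}\mathbf{A}_{t}^{\mathrm{H}}+\mathbf{R})^{-1}\mathbf{A}_{t}\mathbf{P}$ and then apply the Woodbury matrix identity to the right-hand side, which immediately yields $(\mathbf{P}^{-1}+\mathbf{A}_{t}^{\mathrm{H}}\mathbf{R}^{-1}\mathbf{A}_{t})^{-1}$. I do not expect a genuine obstacle; the proof is a few lines of linear algebra. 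The only points that need a bit of care are (i) the invertibility of $\mathbf{Q}_{t|t-1}$ (argued above) and of $\mathbf{A}_{t}\mathbf{Q}_{t|t-1}\mathbf{A}_{t}^{\mathrm{H}}+\sigma_{\mathrm{n}}^{2}\mathbf{I}$ (automatic, since $\sigma_{\mathrm{n}}^{2}\mathbf{I}$ is positive definite), and (ii) consistency of the Hermitian-transpose bookkeeping, which is harmless because $\mathbf{P}$, $\mathbf{R}$, and $\mathbf{A}_{t}\mathbf{P}\mathbf{A}_{t}^{\mathrm{H}}+\mathbf{R}$ are all Hermitian.
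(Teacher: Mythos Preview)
Your proposal is correct and follows essentially the same route as the paper: both verify that $(\mathbf{I}-\mathbf{K}_{t}\mathbf{A}_{t})\mathbf{Q}_{t|t-1}$ multiplied by the claimed information-form inverse gives $\mathbf{I}$, using the rearrangement $\mathbf{K}_{t}(\mathbf{A}_{t}\mathbf{Q}_{t|t-1}\mathbf{A}_{t}^{\mathrm{H}}+\sigma_{\mathrm{n}}^{2}\mathbf{I})=\mathbf{Q}_{t|t-1}\mathbf{A}_{t}^{\mathrm{H}}$ of~(\ref{eq:Kalman-gain}) as the key step. Your added remarks on the invertibility of $\mathbf{Q}_{t|t-1}$ and the Woodbury cross-check are sound extras the paper omits.
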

\begin{proof}
See Appendix~\ref{sec:Proof-of-Proposition-Error-Propagation-Equation}.
\end{proof}

Taking the determinant of (\ref{eq:alternative-error-covariance-update})
yields
\begin{align}
|\mathbf{Q}_{t}^{-1}| & =|\mathbf{Q}_{t|t-1}^{-1}+\mathbf{A}_{t}^{\text{H}}(\sigma_{\mathrm{n}}^{2}\mathbf{I})^{-1}\mathbf{A}_{t}|\nonumber \\
 & =|\mathbf{Q}_{t|t-1}^{-1}||\mathbf{I}+\mathbf{Q}_{t|t-1}^{-1}\mathbf{A}_{t}^{\text{H}}(\sigma_{\mathrm{n}}^{2}\mathbf{I})^{-1}\mathbf{A}_{t}|\label{eq:use-det-lemma}\\
 & =|\mathbf{Q}_{t|t-1}^{-1}||\mathbf{I}+\sigma_{\text{n}}^{-2}\mathbf{A}_{t}\mathbf{Q}_{t|t-1}^{-1}\mathbf{A}_{t}^{\text{H}}|\label{eq:use-det-lemma2}
\end{align}
where equation~(\ref{eq:use-det-lemma}) utilizes the multiplicative
property of determinant, {\em i.e.}, $|\mathbf{XY}|=|\mathbf{X}||\mathbf{Y}|$,
and equation~(\ref{eq:use-det-lemma2}) is based on the Weinstein-Aronszajn
identity, {\em i.e.}, $|\mathbf{I}_{m}+\mathbf{X}\mathbf{Y}|=|\mathbf{I}_{n}+\mathbf{Y}\mathbf{X}|$.

Taking the logarithm of (\ref{eq:use-det-lemma2}) and applying the
inverse property of the determinant, we further have
\begin{equation}
\log|\mathbf{Q}_{t}|-\log|\mathbf{Q}_{t|t-1}|=-\log|\mathbf{I}+\sigma_{\text{n}}^{-2}\mathbf{A}_{t}\mathbf{Q}_{t|t-1}\mathbf{A}_{t}^{\text{H}}|.\label{eq:difference-differential-entropy-2}
\end{equation}

Thus, Problem~(\ref{eq:problem-maximizing-entropy}) can be equivalently
rewritten as
\begin{align}
\mathop{\mbox{maximize}}\limits_{\mathbf{A}_{t}}\quad & |\mathbf{I}+\sigma_{\text{n}}^{-2}\mathbf{A}_{t}\mathbf{Q}_{t|t-1}\mathbf{A}_{t}^{\text{H}}|\label{eq:problem-maximizing-entropy-1}\\
\text{subject to}\quad & \mathbf{A}_{t}\mathbf{A}_{t}^{\text{H}}=\mathbf{I}.\nonumber 
\end{align}

Problem~(\ref{eq:problem-maximizing-entropy-1}) can be solved by
utilizing \ac{evd} and properties of the determinant of a matrix
as shown in the following proposition.
\begin{prop}[\foreignlanguage{american}{Adaptive sensing matrix}]
\label{prop:The-solution-to-At}The solution to problem~(\ref{eq:problem-maximizing-entropy-1})
is given by
\begin{equation}
{\bf A}_{t}=[{\bf w}_{t-1,1},\mathbf{w}_{t-1,2},\dots,\mathbf{w}_{t-1,m},\dots,\mathbf{w}_{t-1,M}]^{\mathrm{H}}\label{eq:optimal-solution-A}
\end{equation}
where ${\bf w}_{t-1,m}$ is the eigenvector corresponding to the $m$th
largest eigenvalue of the error covariance $\mathbf{Q}_{t|t-1}$.
\end{prop}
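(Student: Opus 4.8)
The plan is to reduce problem~(\ref{eq:problem-maximizing-entropy-1}) to a classical water-filling-type argument over the eigenvalues of $\mathbf{Q}_{t|t-1}$, and then argue that the optimal semi-unitary $\mathbf{A}_t$ must select the top-$M$ eigenspace. First I would write the \ac{evd} of the (Hermitian, \ac{psd}) matrix $\mathbf{Q}_{t|t-1} = \mathbf{W}\boldsymbol{\Lambda}\mathbf{W}^{\mathrm H}$, where $\boldsymbol{\Lambda}=\mathrm{diag}(\lambda_1,\dots,\lambda_{N_{\mathrm t}})$ with $\lambda_1\ge\lambda_2\ge\cdots\ge\lambda_{N_{\mathrm t}}\ge 0$ and $\mathbf{W}=[\mathbf{w}_{t-1,1},\dots,\mathbf{w}_{t-1,N_{\mathrm t}}]$ unitary. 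Define $\mathbf{B}\triangleq \mathbf{A}_t\mathbf{W}\in\mathbb{C}^{M\times N_{\mathrm t}}$; since $\mathbf{W}$ is unitary, the constraint $\mathbf{A}_t\mathbf{A}_t^{\mathrm H}=\mathbf{I}$ is equivalent to $\mathbf{B}\mathbf{B}^{\mathrm H}=\mathbf{I}_M$, i.e.\ the rows of $\mathbf{B}$ form an orthonormal set. The objective becomes $|\mathbf{I}_M + \sigma_{\mathrm n}^{-2}\mathbf{B}\boldsymbol{\Lambda}\mathbf{B}^{\mathrm H}|$, so the problem is now rotation-free and depends only on how the row space of $\mathbf{B}$ aligns with the coordinate axes weighted by $\lambda_i$.

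Next I would bound the objective. Write $\mathbf{B}=[\,b_{mi}\,]$ and note that the diagonal entries of $\mathbf{B}\boldsymbol{\Lambda}\mathbf{B}^{\mathrm H}$ are $\sum_i \lambda_i |b_{mi}|^2$, while $\mathbf{B}\mathbf{B}^{\mathrm H}=\mathbf{I}_M$ forces $\sum_i |b_{mi}|^2 = 1$ for each row $m$ and $\sum_m |b_{mi}|^2 \le 1$ for each column $i$ (the latter because the columns of $\mathbf{B}^{\mathrm H}$ extend to an orthonormal basis, so column norms of $\mathbf{B}$ are at most $1$). The eigenvalues $\tilde\mu_1\ge\cdots\ge\tilde\mu_M$ of the $M\times M$ \ac{psd} matrix $\mathbf{B}\boldsymbol{\Lambda}\mathbf{B}^{\mathrm H}$ are majorized by the top $M$ eigenvalues $(\lambda_1,\dots,\lambda_M)$ of $\boldsymbol{\Lambda}$ — this is a standard interlacing/pinching fact: $\mathbf{B}\boldsymbol{\Lambda}\mathbf{B}^{\mathrm H}$ is a compression of $\boldsymbol{\Lambda}$ onto an $M$-dimensional subspace, so $\sum_{m=1}^{k}\tilde\mu_m \le \sum_{i=1}^{k}\lambda_i$ for all $k\le M$, with equality at $k=M$ iff the row space of $\mathbf{B}$ is exactly $\mathrm{span}\{e_1,\dots,e_M\}$ when the $\lambda_i$ are distinct. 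Since $t\mapsto\log(1+\sigma_{\mathrm n}^{-2}t)$ is concave and increasing, Schur-concavity gives $\sum_m \log(1+\sigma_{\mathrm n}^{-2}\tilde\mu_m)\le \sum_{i=1}^{M}\log(1+\sigma_{\mathrm n}^{-2}\lambda_i)$, i.e.\ $|\mathbf{I}_M+\sigma_{\mathrm n}^{-2}\mathbf{B}\boldsymbol{\Lambda}\mathbf{B}^{\mathrm H}| \le \prod_{i=1}^{M}(1+\sigma_{\mathrm n}^{-2}\lambda_i)$.

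Finally I would verify the bound is attained by $\mathbf{B}=[\mathbf{I}_M\ \mathbf{0}]$, which corresponds to $\mathbf{A}_t=[\mathbf{I}_M\ \mathbf{0}]\mathbf{W}^{\mathrm H}=[\mathbf{w}_{t-1,1},\dots,\mathbf{w}_{t-1,M}]^{\mathrm H}$, exactly~(\ref{eq:optimal-solution-A}): in that case $\mathbf{B}\boldsymbol{\Lambda}\mathbf{B}^{\mathrm H}=\mathrm{diag}(\lambda_1,\dots,\lambda_M)$ and the objective equals $\prod_{i=1}^M(1+\sigma_{\mathrm n}^{-2}\lambda_i)$, matching the upper bound. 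Hence $\mathbf{A}_t$ in~(\ref{eq:optimal-solution-A}) is optimal (and uniquely so up to a unitary rotation within the top-$M$ eigenspace, and up to ties among the eigenvalues $\lambda_M=\lambda_{M+1}$). The main obstacle is establishing the majorization/Schur-concavity step rigorously — specifically, justifying that a compression $\mathbf{B}\boldsymbol{\Lambda}\mathbf{B}^{\mathrm H}$ of a \ac{psd} matrix with a semi-unitary $\mathbf{B}$ has partial eigenvalue sums dominated by those of $\boldsymbol{\Lambda}$; the cleanest route is the Cauchy interlacing theorem applied to the embedding of $\mathbf{B}\boldsymbol{\Lambda}\mathbf{B}^{\mathrm H}$ as a principal submatrix after completing $\mathbf{B}$ to a unitary matrix, or alternatively a direct Lagrangian/\ac{kkt} analysis of~(\ref{eq:problem-maximizing-entropy-1}) showing any stationary point aligns $\mathbf{A}_t$ with eigenvectors of $\mathbf{Q}_{t|t-1}$ and then comparing the $\binom{N_{\mathrm t}}{M}$ critical values.
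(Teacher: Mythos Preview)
Your overall plan mirrors the paper's proof (Appendix~\ref{sec:Proof-of-Proposition-Solution-At}): diagonalize $\mathbf{Q}_{t|t-1}=\mathbf{W}\boldsymbol{\Lambda}\mathbf{W}^{\mathrm H}$, substitute $\mathbf{B}=\mathbf{A}_t\mathbf{W}$ so that the constraint becomes $\mathbf{B}\mathbf{B}^{\mathrm H}=\mathbf{I}_M$, and argue that the determinant $|\mathbf{I}+\sigma_{\mathrm n}^{-2}\mathbf{B}\boldsymbol{\Lambda}\mathbf{B}^{\mathrm H}|$ is maximized when $\mathbf{B}$ selects the $M$ largest eigenvalues. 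The paper essentially asserts this last step; you try to justify it rigorously, which is the right instinct, but your justification runs in the wrong direction.

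The gap is in the Schur-concavity step. What you establish is that $(\tilde\mu_1,\dots,\tilde\mu_M)$ is \emph{weakly} majorized by $(\lambda_1,\dots,\lambda_M)$, i.e.\ $\sum_{m\le k}\tilde\mu_m\le\sum_{i\le k}\lambda_i$ for each $k$, without equality of the full sums. But $t\mapsto\log(1+\sigma_{\mathrm n}^{-2}t)$ is \emph{concave}, so $F(\mathbf{x})=\sum_m\log(1+\sigma_{\mathrm n}^{-2}x_m)$ is Schur-\emph{concave}: under (full) majorization $\tilde\mu\prec\lambda$ one would obtain $F(\tilde\mu)\ge F(\lambda)$, the opposite of what you want. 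Under weak majorization the standard Tomi\'c--Weyl implication requires an increasing \emph{convex} scalar function, which again fails here. So the desired bound $\prod_m(1+\sigma_{\mathrm n}^{-2}\tilde\mu_m)\le\prod_i(1+\sigma_{\mathrm n}^{-2}\lambda_i)$ is not delivered by majorization plus Schur-concavity.

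The fix is precisely the tool you mention at the end: Poincar\'e separation (semi-unitary Cauchy interlacing) gives the \emph{componentwise} inequality $\tilde\mu_m\le\lambda_m$ for every $m=1,\dots,M$, which is strictly stronger than the partial-sum bounds. Since $t\mapsto 1+\sigma_{\mathrm n}^{-2}t$ is increasing, the product inequality follows by plain monotonicity, with equality when the row space of $\mathbf{B}$ equals $\mathrm{span}\{e_1,\dots,e_M\}$. No Schur machinery is needed. Replace the majorization paragraph with this one-line argument and the proof is complete and, in fact, more detailed than the paper's own.
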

\begin{proof}
See Appendix~\ref{sec:Proof-of-Proposition-Solution-At}.
\end{proof}

Proposition~\ref{prop:The-solution-to-At} establishes a mechanism
for adaptively adjusting sensing matrices $\mathbf{A}_{t}$ by prioritizing
subspaces with high information gain. To handle cases where the $m$th
largest eigenvalue ($m\leq M$) exhibits multiplicity, resulting in
a subspace spanned by multiple eigenvectors, one may use a randomization
strategy to randomly select one eigenvector from this subspace.

In the initial stage, where prior information is scarce and $\mathbf{Q}_{t|t-1}$
approximates an identity matrix, the proposed strategy naturally degenerates
to employing random sensing vectors, which is aligned with the strategies
in the compressive sensing literature. As observations accumulate,
guided by (\ref{eq:differential-entropy-hat-ht}) and (\ref{eq:difference-differential-entropy-2}),
the strategy progressively refines $\mathbf{A}_{t}$ to minimize $\log|\mathbf{Q}_{t}|$,
thereby adaptively sweeping through the subspace $\mathbf{Q}_{t|t-1}$
of the channel and reducing the differential entropy $q(\hat{\mathbf{h}}_{t})$.
This aligns with the beam sweeping strategy in the literature.

The adaptive sensing matrices can be incorporated into the proposed
radio-map-embedded \ac{skf} framework for efficient channel sensing.
However, the design of $\mathbf{A}_{t}$ relies on the error covariance
$\mathbf{Q}_{t|t-1}$, which is based on the position distribution
$\pi_{t}(\mathbf{p}_{t})$ calculated using the measurement $\mathbf{y}_{t}$.
To address this implementation issue, the transition probability $\mathbb{P}(\mathbf{p}_{t}|\mathbf{p}_{t-1})$
is treated as the prior position distribution before observing $\mathbf{y}_{t}$.
Hence, the error covariance $\mathbf{Q}_{t|t-1}$ is first computed
using $\mathbb{P}(\mathbf{p}_{t}|\mathbf{p}_{t-1})$, and then refined
via $\pi_{t}(\mathbf{p}_{t})$ as detailed in Step~2a and Step~2c
of Algorithm~\ref{alg:radio-map-assisted-switching-Kalman-filter-csi-tracking},
respectively.

It is worth noting that the additional computational cost of the proposed
adaptive sensing approach is primarily due to the eigendecomposition
of $\mathbf{Q}_{t|t-1}$ since the design of $\mathbf{A}_{t}$ depends
on the error covariance $\mathbf{Q}_{t|t-1}$, whose computation has
already been accounted for in the CSI tracking complexity analysis
in Section~\ref{subsec:Switch-Kalman-Filtering}. While the eigendecomposition
for a full-rank matrix requires a complexity of $O(N_{t}^{3})$, there
exist efficient algorithms with a complexity of $O(N_{t}^{2})$ for
decomposing a low-rank matrix.

\section{Radio Map Construction}

\label{sec:Blind-Construction-of-Radio-Maps}

In this section, we construct the radio map~$\mathcal{M}$ via estimating
the hidden position variables $\mathbf{p}_{t}$ and the corresponding
channel covariances $\mathbf{C}(\mathbf{p}_{t})$ as formulated in
$\mathscr{P}_{2}$. Note that the construction discussed here is decoupled
from the \ac{csi} tracking problem studied in Section~\ref{sec:Radio-map-assisted-CSI-Tracking}.

While $\mathscr{P}_{2}$ also recovers the channel $\mathbf{h}_{t}$
and the position $\mathbf{p}_{t}$, it differs from the \ac{csi}
tracking problem in Section~\ref{sec:Radio-map-assisted-CSI-Tracking}
in the following aspects: First, as the ultimate goal of Section~\ref{sec:Radio-map-assisted-CSI-Tracking}
is to estimate the channel $\mathbf{h}_{t}$, the accuracy of the
hidden position $\mathbf{p}_{t}$ is not essential. By contrast, $\mathscr{P}_{2}$
aims at recovering accurate positions $\mathbf{p}_{t}$ for radio
map construction. Second, in $\mathscr{P}_{2}$, an off-line solution
is acceptable, where one can accumulate a large amount of sequential
data $\{\mathcal{Y}_{T}^{(k)}\}$ that probably comes from different
users $k$ from different routes to enhance the estimation of the
position $\mathbf{p}_{t}$. To illustrate the principle, we simply
consider there is one user in one route that accumulates a sufficient
amount of measurements. Third, while the anchor information for $\mathbf{p}_{t}$
comes from the radio map $\mathcal{M}$ in Section~\ref{sec:Radio-map-assisted-CSI-Tracking},
we need some side information $\bm{\Theta}$ to recover $\mathbf{p}_{t}$
in $\mathscr{P}_{2}$, because the radio map $\mathcal{M}$ is a variable
to be optimized.

In this paper, we assume that a very coarse location $\tilde{\mathbf{p}}_{t}$
can be obtained from the side information of the network topology
$\bm{\Theta}$, such as the \ac{bs} locations. There is active research
on coarse localization based on the sparse channel measurements $\mathbf{y}_{t}$
\cite{XinChe:J23,MagGioKanYu:J18}, such as using the \ac{wcl} algorithm.
It has been demonstrated in \cite{XinChe:J23} using real data that
a localization error of $22$ meters can be achieved by measuring
only the \ac{rss} from the \acpl{bs}. In general, a $20\sim60$
meter \ac{rss}-based localization error is reported in the literature
\cite{MagGioKanYu:J18}.

While a localization error of $20$ meters is still too rough for
radio map construction, the fundamental question to be investigated
in this section is: {\em Can the radio-map-embedded  CSI model substantially enhance the localization performance from coarse locations $\tilde{\mathbf{p}}_t$?}

\subsection{Memory-Assisted Trajectory Discovery}

\label{subsec:Problem-for-Radio-Map-Construction-Channel-Est}

Assume that a very coarse position $\tilde{\mathbf{p}}_{t}$ is available
at time slot $t$. Denoting $\tilde{\mathcal{P}}_{T}=(\tilde{\mathbf{p}}_{1},\tilde{\mathbf{p}}_{2},\dots,\tilde{\mathbf{p}}_{T})$,
we define an $L_{2}$-norm value function as $f(\mathcal{Y}_{T},\mathcal{P}_{T};\bm{\Theta})=\sum_{t=1}^{T}\|\mathbf{p}_{t}-\tilde{\mathbf{p}}_{t}\|_{2}$.
Note that how to fuse the coarse position $\tilde{\mathbf{p}}_{t}$
to $\mathscr{P}_{2}$ is not the focus here, and other specifications
of the value function $f(\mathcal{Y}_{T},\mathcal{P}_{T};\bm{\Theta})$
may also work. Problem~$\mathscr{P}_{2}$ can be rewritten as
\begin{align*}
\mathscr{P}_{2}':\qquad\mathop{\mbox{maximize}}\limits_{\mathcal{H}_{T},\mathcal{P}_{T},\mathcal{M}} & \quad\log p(\mathcal{Y}_{T},\mathcal{H}_{T},\mathcal{P}_{T};\mathcal{M})\\
 & \quad\quad\:-\mu\sum_{t=1}^{T}\|\mathbf{p}_{t}-\tilde{\mathbf{p}}_{t}\|_{2}\\
\text{subject to} & \quad\mathbf{p}_{t}\in\mathcal{X},\qquad t=1,2,\dots,T
\end{align*}
where $\mu>0$ serves as a regularization parameter.

Since the observation $\mathbf{y}_{t}$ depends on the hidden user
position $\mathbf{p}_{t}$ and the transition probability of $\mathbf{p}_{t}$
is depicted by a mobility model~(\ref{eq:mobility-transition-probability-1}),
we can formulate an \ac{hmm} problem with a regularization term for
tracking the sequential states evolution of $\mathcal{Y}_{T}$ and
$\mathcal{P}_{T}$. Specifically, from the observation model~(\ref{eq:observation_model}),
since $\mathbf{y}_{t}$ depends on $\mathbf{h}_{t}$ which depends
on $\mathbf{p}_{t}$, we can marginalize $\mathbf{h}_{t}$ from $p(\mathcal{Y}_{T},\mathcal{H}_{T},\mathcal{P}_{T};\mathcal{M})$
to obtain
\begin{align}
p(\mathcal{Y}_{T},\mathcal{P}_{T};\mathcal{M}) & =\prod_{t=1}^{T}p(\mathbf{y}_{t}|\mathbf{p}_{t};\mathcal{M})\times\prod_{t=2}^{T}\mathbb{P}(\mathbf{p}_{t}|\mathbf{p}_{t-1})\mathbb{P}(\mathbf{p}_{1})\label{eq:probability-YT-PT}
\end{align}
which follows the same derivation as that in~(\ref{eq:pi-pt}).

Consequently, based on Problem~$\mathscr{P}_{2}'$, an \ac{hmm}
problem with a regularization term is formulated as
\begin{align}
\mathop{\mbox{maximize}}\limits_{\mathcal{P}_{T}} & \quad\sum_{t=1}^{T}(\log p(\mathbf{y}_{\text{\ensuremath{t}}}|\mathbf{p}_{t};\mathcal{M})+\mu\|\mathbf{p}_{t}-\tilde{\mathbf{p}}_{t}\|_{2})\nonumber \\
 & \quad\quad\:\,+\sum_{t=2}^{T}\log\mathbb{P}(\mathbf{p}_{\text{\ensuremath{t}}}|\mathbf{p}_{t-1})+\log\mathbb{P}(\mathbf{p}_{1})\label{eq:localization-problem}\\
\text{subject to} & \quad\mathbf{p}_{t}\in\mathcal{X},\qquad t=1,2,\dots,T.\nonumber 
\end{align}

As indicated in (\ref{eq:localization-problem}), the trajectory discovery
problem can be framed as an \ac{hmm} decoding task, where the objective
is to find the most likely sequence of hidden states (locations) that
maximizes the regularized log-likelihood. Specifically, the log-likelihood
$\log p(\mathbf{y}_{\text{\ensuremath{t}}}|\mathbf{p}_{t};\mathcal{M})+\log\mathbb{P}(\mathbf{p}_{\text{\ensuremath{t}}}|\mathbf{p}_{t-1})$
captures the hidden dynamics of the trajectory $\mathcal{P}_{T}$,
while the regularization term $\mu\|\mathbf{p}_{t}-\tilde{\mathbf{p}}_{t}\|_{2}$
is to calibrate the topology $\mathbf{p}_{t}$ with the geographic
side information $\tilde{\mathbf{p}}_{t}$ from the physical world.

To solve Problem~(\ref{eq:localization-problem}), the Viterbi algorithm
can be employed to traverse all possible location sequences using
dynamic programming to compute the optimal trajectory efficiently
\cite{PreTeuVetFla:B07}. At each time step, the algorithm recursively
evaluates the combined likelihood and transition probabilities of
the best location sequence to a state $\mathbf{p}_{t}$, while incorporating
the regularization term $\mu\|\mathbf{p}_{t}-\tilde{\mathbf{p}}_{t}\|_{2}$
to penalize deviations from $\tilde{\mathbf{p}}_{t}$. By maintaining
backtracking pointers at each step, the algorithm tracks the location
sequence with the highest accumulated objective value. This process
ensures implicit exploration of all possible trajectories, with the
most likely sequence retained as the final solution to Problem~(\ref{eq:localization-problem}).

The time complexity for solving Problem~(\ref{eq:localization-problem})
using the Viterbi algorithm is $O(T|\mathcal{X}|^{2}N_{\text{t}}^{3})$,
where $O(T|\mathcal{X}|^{2})$ is the time complexity of the Viterbi
algorithm to handle a scalar variable transition process and $O(N_{\text{t}}^{3})$
comes from the matrix inverse operation in $p(\mathbf{y}_{\text{\ensuremath{t}}}|\mathbf{p}_{t})$
at each time step.

\subsection{Covariance Estimation from Sparse \ac{csi} Measurements}

\label{subsec:Channel-Coveriance-Construction}

Denote $\mathcal{T}_{i}\triangleq\{t:\mathbf{p}_{t}=\mathbf{x}_{i}\in\mathcal{X}\}$
as the set of measurements taken in the $i$th grid cell. We need
to construct an estimator $\ensuremath{\hat{\mathbf{C}}}(\mathbf{x}_{i})$
for the covariance matrix $\mathbf{C}(\mathbf{x}_{i})=\mathbb{E}\{\mathbf{h}_{t}\mathbf{h}_{t}^{\mathrm{H}}\}\in\mathbb{C}^{N_{\text{t}}\times N_{\text{t}}}$
based on the measurements $\mathbf{y}_{t}\in\mathbb{C}^{M}$, where
$t\in\mathcal{T}_{i}$ and $M\ll N_{\text{t}}$. Assume that the measurements
$\mathbf{y}_{t}$ in the $i$th grid cell are independent. Otherwise,
to ensure independence, a random subset of data can be sampled from
the collected dataset. While we assume that there are a sufficient
number of measurements that have been collected for discretized locations
$\mathbf{x}_{i}$, it is still a challenge that the measurement $\mathbf{y}_{t}$
has a very low dimension $M$.

\subsubsection{Unbiased Estimator of the Covariance}

\label{subsec:Unbiased-Estimator-of}

The general idea is to construct a data point in the $N_{\text{t}}$-dimensional
subspace spanned by the sensing matrix $\mathbf{A}_{t}$, and thus,
the data point is defined as
\begin{equation}
\boldsymbol{\varphi}_{t}=\mathbf{A}_{t}^{\mathrm{H}}\mathbf{y}_{t}=\mathbf{A}_{t}^{\mathrm{H}}\mathbf{A}_{t}\mathbf{h}_{t}+\mathbf{A}_{t}^{\text{H}}\mathbf{n}_{t}=\boldsymbol{\Phi}_{t}\mathbf{h}_{t}+\mathbf{A}_{t}^{\text{H}}\mathbf{n}_{t}
\end{equation}
where $\boldsymbol{\Phi}_{t}=\mathbf{A}_{t}^{\text{H}}\mathbf{A}_{t}\in\mathbb{C}^{N_{\mathrm{t}}\times N_{\mathrm{t}}}$.

We then construct a sample covariance matrix as follows
\begin{align}
\hat{\boldsymbol{\Omega}}_{\text{y}}(\mathbf{x}_{i}) & \triangleq\frac{N_{\mathrm{t}}^{2}}{M^{2}}\frac{1}{|\mathcal{T}_{i}|}\sum_{t\in\mathcal{T}_{i}}\boldsymbol{\varphi}_{t}\boldsymbol{\varphi}_{t}^{\mathrm{H}}\nonumber \\
 & =\frac{N_{\mathrm{t}}^{2}}{|\mathcal{T}_{i}|M^{2}}\sum_{t\in\mathcal{T}_{i}}(\boldsymbol{\Phi}_{t}\mathbf{h}_{t}+\mathbf{A}_{t}^{\mathrm{H}}\mathbf{n}_{t})(\boldsymbol{\Phi}_{t}\mathbf{h}_{t}+\mathbf{A}_{t}^{\mathrm{H}}\mathbf{n}_{t})^{\text{H}}\label{eq:def-rescaled-version-of-observed-covariance}
\end{align}

It can be shown that an unbiased estimator for the covariance matrix
$\mathbf{C}(\mathbf{x}_{i})$ can be constructed from the sample covariance
matrix $\hat{\boldsymbol{\Omega}}_{\text{y}}(\mathbf{x}_{i})$ as
shown in the following proposition.
\begin{prop}[Unbiased estimator]
\label{prop:estimator-of-sample-covariance}Let $\mathbf{A}_{t}\mathbb{\in\mathbb{C}}^{M\times N_{\text{t}}}$
be an orthonormal basis for an $M$-dimensional subspace drawn uniformly
at random. An unbiased estimator for $\mathbf{C}(\mathbf{x}_{i})$
is given by
\begin{align}
\text{\ensuremath{\hat{\mathbf{C}}}}(\mathbf{x}_{i})= & \frac{M(\bar{N}_{\mathrm{t}}\hat{\boldsymbol{\Omega}}_{\text{y}}(\mathbf{x}_{i})-(N_{\mathrm{t}}-M)\mathrm{tr}(\hat{\boldsymbol{\Omega}}_{\text{y}}(\mathbf{x}_{i}))\mathbf{I})}{N_{\mathrm{t}}(N_{\mathrm{t}}M+N_{\mathrm{t}}-2)}\nonumber \\
 & -\frac{\sigma_{\mathrm{n}}^{2}(M\bar{N}_{\mathrm{t}}\boldsymbol{\Omega}_{\mathrm{A}}(\mathbf{x}_{i})-N_{\mathrm{t}}(N_{\mathrm{t}}-M)\mathrm{tr}(\boldsymbol{\Omega}_{\mathrm{A}}(\mathbf{x}_{i}))\mathbf{I})}{M(N_{\mathrm{t}}M+N_{\mathrm{t}}-2)}\label{eq:unbiased-estimate-sampling-covariance}
\end{align}
where $\bar{N}_{\mathrm{t}}\triangleq(N_{\mathrm{t}}+2)(N_{\mathrm{t}}-1)$
and $\boldsymbol{\Omega}_{\mathrm{A}}(\mathbf{x}_{i})\triangleq\frac{1}{|\mathcal{T}_{i}|}\sum_{t\in\mathcal{T}_{i}}\mathbf{A}_{t}^{\mathrm{H}}\mathbf{A}_{t}$.
\end{prop}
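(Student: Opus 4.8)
The plan is to compute the mean of the rescaled sample covariance $\hat{\boldsymbol{\Omega}}_{\mathrm{y}}(\mathbf{x}_{i})$ in closed form and then verify that the affine transformation applied in~(\ref{eq:unbiased-estimate-sampling-covariance}) inverts it exactly, so that $\mathbb{E}\{\hat{\mathbf{C}}(\mathbf{x}_{i})\}=\mathbf{C}(\mathbf{x}_{i})$. First I would condition on the sensing matrices $\{\mathbf{A}_{t}\}_{t\in\mathcal{T}_{i}}$. Expanding $\boldsymbol{\varphi}_{t}\boldsymbol{\varphi}_{t}^{\mathrm{H}}$ with $\boldsymbol{\varphi}_{t}=\boldsymbol{\Phi}_{t}\mathbf{h}_{t}+\mathbf{A}_{t}^{\mathrm{H}}\mathbf{n}_{t}$, and using that $\mathbf{h}_{t}$ and $\mathbf{n}_{t}$ are zero-mean and mutually independent, the two cross terms vanish in expectation; together with $\mathbf{A}_{t}\mathbf{A}_{t}^{\mathrm{H}}=\mathbf{I}$, $\mathbb{E}\{\mathbf{h}_{t}\mathbf{h}_{t}^{\mathrm{H}}\}=\mathbf{C}(\mathbf{x}_{i})$ and $\mathbb{E}\{\mathbf{n}_{t}\mathbf{n}_{t}^{\mathrm{H}}\}=\sigma_{\mathrm{n}}^{2}\mathbf{I}$ this gives $\mathbb{E}\{\boldsymbol{\varphi}_{t}\boldsymbol{\varphi}_{t}^{\mathrm{H}}\mid\mathbf{A}_{t}\}=\boldsymbol{\Phi}_{t}\mathbf{C}(\mathbf{x}_{i})\boldsymbol{\Phi}_{t}+\sigma_{\mathrm{n}}^{2}\boldsymbol{\Phi}_{t}$, where $\boldsymbol{\Phi}_{t}=\mathbf{A}_{t}^{\mathrm{H}}\mathbf{A}_{t}$ is the orthogonal projector onto the $M$-dimensional row space of $\mathbf{A}_{t}$. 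Averaging over $t\in\mathcal{T}_{i}$ and reinserting the scaling of~(\ref{eq:def-rescaled-version-of-observed-covariance}), the conditional mean of $\hat{\boldsymbol{\Omega}}_{\mathrm{y}}(\mathbf{x}_{i})$ is $\frac{N_{\mathrm{t}}^{2}}{M^{2}|\mathcal{T}_{i}|}\sum_{t\in\mathcal{T}_{i}}\boldsymbol{\Phi}_{t}\mathbf{C}(\mathbf{x}_{i})\boldsymbol{\Phi}_{t}+\frac{N_{\mathrm{t}}^{2}\sigma_{\mathrm{n}}^{2}}{M^{2}}\boldsymbol{\Omega}_{\mathrm{A}}(\mathbf{x}_{i})$; since the last term is a function of the known $\{\mathbf{A}_{t}\}$, it can be subtracted exactly, which is precisely the role of the $\boldsymbol{\Omega}_{\mathrm{A}}$-dependent correction in~(\ref{eq:unbiased-estimate-sampling-covariance}) (subtracting this conditional mean rather than its unconditional expectation also reduces the estimator's variance while keeping it unbiased).

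The crux of the proof is the remaining average over the random subspace. By unitary invariance of the distribution of $\boldsymbol{\Phi}_{t}$ one has $\mathbb{E}\{\boldsymbol{\Phi}_{t}\}=\frac{M}{N_{\mathrm{t}}}\mathbf{I}$, and the linear map $\mathbf{X}\mapsto\mathbb{E}\{\boldsymbol{\Phi}_{t}\mathbf{X}\boldsymbol{\Phi}_{t}\}$ on Hermitian matrices commutes with every unitary conjugation; hence, by Schur's lemma applied to the decomposition into the scalar part and the traceless part, it must have the form $\mathbb{E}\{\boldsymbol{\Phi}_{t}\mathbf{X}\boldsymbol{\Phi}_{t}\}=a\mathbf{X}+b\,\mathrm{tr}(\mathbf{X})\mathbf{I}$ for scalars $a,b$ depending only on $N_{\mathrm{t}}$ and $M$. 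I would then pin down these two constants by two scalar relations: taking the trace and using the idempotency of $\boldsymbol{\Phi}_{t}$ gives $\mathrm{tr}(\boldsymbol{\Phi}_{t}\mathbf{X}\boldsymbol{\Phi}_{t})=\mathrm{tr}(\boldsymbol{\Phi}_{t}\mathbf{X})$ and hence $a+N_{\mathrm{t}}b=M/N_{\mathrm{t}}$; a second equation comes from evaluating one diagonal entry, for instance $\mathbb{E}\{[\boldsymbol{\Phi}_{t}]_{11}^{2}\}$, a low-order moment of the entries of a uniformly random orthonormal frame that follows from a direct Dirichlet/Beta computation (equivalently, a Weingarten-type evaluation). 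Solving the two linear equations yields the explicit $a,b$; I expect this moment computation, and the bookkeeping of the $N_{\mathrm{t}},M$ factors, to be the main obstacle of the proof.

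Finally, combining the two steps, the noise-corrected matrix $\hat{\boldsymbol{\Omega}}_{\mathrm{y}}(\mathbf{x}_{i})-\frac{N_{\mathrm{t}}^{2}\sigma_{\mathrm{n}}^{2}}{M^{2}}\boldsymbol{\Omega}_{\mathrm{A}}(\mathbf{x}_{i})$ has expectation exactly $\frac{N_{\mathrm{t}}^{2}}{M^{2}}(a\,\mathbf{C}(\mathbf{x}_{i})+b\,\mathrm{tr}(\mathbf{C}(\mathbf{x}_{i}))\mathbf{I})$. It then remains to invert, on the space of Hermitian matrices, the map $\mathbf{C}\mapsto\frac{N_{\mathrm{t}}^{2}}{M^{2}}(a\mathbf{C}+b\,\mathrm{tr}(\mathbf{C})\mathbf{I})$: taking the trace and using $a+N_{\mathrm{t}}b=M/N_{\mathrm{t}}$ recovers $\mathrm{tr}(\mathbf{C}(\mathbf{x}_{i}))$ as an unbiased linear functional of $\mathrm{tr}(\hat{\boldsymbol{\Omega}}_{\mathrm{y}}(\mathbf{x}_{i}))$ and $\mathrm{tr}(\boldsymbol{\Omega}_{\mathrm{A}}(\mathbf{x}_{i}))$, and substituting back solves for $\mathbf{C}(\mathbf{x}_{i})$ as a linear combination of $\hat{\boldsymbol{\Omega}}_{\mathrm{y}}(\mathbf{x}_{i})$, $\mathrm{tr}(\hat{\boldsymbol{\Omega}}_{\mathrm{y}}(\mathbf{x}_{i}))\mathbf{I}$, $\boldsymbol{\Omega}_{\mathrm{A}}(\mathbf{x}_{i})$ and $\mathrm{tr}(\boldsymbol{\Omega}_{\mathrm{A}}(\mathbf{x}_{i}))\mathbf{I}$. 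Identifying the resulting coefficients with those in~(\ref{eq:unbiased-estimate-sampling-covariance}) (recognizing the combinations $\bar{N}_{\mathrm{t}}$ and $N_{\mathrm{t}}M+N_{\mathrm{t}}-2$ that appear there) completes the verification; since every step is an identity between expectations and the final transformation is deterministic, $\mathbb{E}\{\hat{\mathbf{C}}(\mathbf{x}_{i})\}=\mathbf{C}(\mathbf{x}_{i})$ follows immediately.
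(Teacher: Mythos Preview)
Your proposal is correct and its overall architecture---compute $\mathbb{E}\{\hat{\boldsymbol{\Omega}}_{\mathrm{y}}(\mathbf{x}_{i})\}$ as an affine function of $\mathbf{C}(\mathbf{x}_{i})$ and $\mathrm{tr}(\mathbf{C}(\mathbf{x}_{i}))$, then invert via the trace trick---is exactly what the paper does. The one substantive difference is how the central moment $\mathbb{E}\{\boldsymbol{\Phi}\mathbf{X}\boldsymbol{\Phi}\}$ is obtained. The paper conditions on $\mathbf{h}$ (rather than on $\mathbf{A}_{t}$ as you do) and invokes a distributional identity from Azizyan--Krishnamurthy--Singh, namely $\boldsymbol{\Phi}\mathbf{x}\overset{\mathrm{d}}{=}\omega\mathbf{x}+\sqrt{\omega-\omega^{2}}\|\mathbf{x}\|\mathbf{W}\alpha$ with $\omega\sim\mathrm{Beta}(M/2,(N_{\mathrm{t}}-M)/2)$, which yields both constants $a,b$ in one stroke from Beta moments. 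Your route---unitary invariance plus Schur to force the form $a\mathbf{X}+b\,\mathrm{tr}(\mathbf{X})\mathbf{I}$, then two scalar constraints---is more conceptual and self-contained, and your first constraint $a+N_{\mathrm{t}}b=M/N_{\mathrm{t}}$ is indeed satisfied by the paper's explicit values; it does, however, still leave one second moment (e.g.\ $\mathbb{E}\{[\boldsymbol{\Phi}]_{11}^{2}\}$) to be computed by hand, which is exactly the work the cited Fact absorbs. Either way, the back-substitution step and the identification of $\bar{N}_{\mathrm{t}}$ and $N_{\mathrm{t}}M+N_{\mathrm{t}}-2$ are identical to the paper's.
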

\begin{proof}
See Appendix~\ref{sec:Proof-of-Proposition-estimator-C}.
\end{proof}

The unbiased estimator $\text{\ensuremath{\hat{\mathbf{C}}}}(\mathbf{x}_{i})$
in Proposition~\ref{prop:estimator-of-sample-covariance} is composed
of a scaled version of the sample covariance of the \ac{csi} measurements,
with an additional correction term that accounts for the noise covariance.
Consider the special case where $M=N_{\text{t}}$ and $\boldsymbol{\Phi}_{t}=\mathbf{A}_{t}^{\text{H}}\mathbf{A}_{t}=\mathbf{I}$,
the estimator simplifies to $\ensuremath{\hat{\mathbf{C}}}(\mathbf{x}_{i})=\sum_{t\in\mathcal{T}_{i}}(\boldsymbol{\varphi}_{t}\boldsymbol{\varphi}_{t}^{\mathrm{H}}-\sigma_{\mathrm{n}}^{2}\mathbf{I})/|\mathcal{T}_{i}|$.
Next, we can take the expectation of $\text{\ensuremath{\hat{\mathbf{C}}}}(\mathbf{x}_{i})$
and obtain that $\mathbb{E}\{\ensuremath{\hat{\mathbf{C}}}(\mathbf{x}_{i})\}=\mathbb{E}\{\sum_{t\in\mathcal{T}_{i}}((\mathbf{h}_{t}+\mathbf{A}_{t}^{\text{H}}\mathbf{n}_{t})(\mathbf{h}_{t}+\mathbf{A}_{t}^{\text{H}}\mathbf{n}_{t})^{\mathrm{H}}-\sigma_{\mathrm{n}}^{2}\mathbf{I})/|\mathcal{T}_{i}|\}=\mathbf{C}(\mathbf{x}_{i})$.

Additionally, it is observed from (\ref{eq:unbiased-estimate-sampling-covariance})
that the environmental noise is rescaled by $\mathbf{A}_{t}$ and
subsequently removed. In high \ac{snr} scenarios where $\sigma_{\text{n}}^{2}$
is small, the first term (measurement term) in (\ref{eq:unbiased-estimate-sampling-covariance})
dominates the channel covariance estimation while the second term
(noise term) can be considered negligible.

\subsubsection{Upper Bound of Reconstruction Error}

\label{subsec:Estimation-Error-Analysis}

Here, we present the reconstruction error of the channel covariance
$\mathbf{C}(\mathbf{x}_{i})$ \ac{wrt} the \ac{csi} dimension $N_{\mathrm{t}}$,
the observation dimension $M$, the number of observations $|\mathcal{T}_{i}|$,
and the rank of the true channel covariance $\mathbf{C}(\mathbf{x}_{i})$.

Consider the high \ac{snr} case, where $\sigma_{\text{n}}=0$, and
the observed \ac{csi} is simplified as $\mathbf{y}_{t}=\mathbf{A}_{t}\mathbf{h}_{t}$.
Based on (\ref{eq:unbiased-estimate-sampling-covariance}), the unbiased
estimator $\text{\ensuremath{\hat{\mathbf{C}}}}(\mathbf{x}_{i})$
in (\ref{eq:unbiased-estimate-sampling-covariance}) simplifies to
\begin{equation}
\ensuremath{\hat{\mathbf{C}}'}(\mathbf{x}_{i})=\frac{M(\bar{N}_{\mathrm{t}}\hat{\boldsymbol{\Omega}}_{\text{y}}(\mathbf{x}_{i})-(N_{\mathrm{t}}-M)\mathrm{tr}(\hat{\boldsymbol{\Omega}}_{\text{y}}(\mathbf{x}_{i}))\mathbf{I})}{N_{\mathrm{t}}(N_{\mathrm{t}}M+N_{2}-2)}.\label{eq:estimator-without-noise}
\end{equation}

The following proposition shows the spectral upper bound of the reconstruction
error $\ensuremath{\hat{\mathbf{C}}'}(\mathbf{x}_{i})-\mathbf{C}(\mathbf{x}_{i})$.
\begin{prop}[Upper bound of the reconstruction error]
\label{prop:upper-bound-of-the-estimation-error}Assuming $\mathrm{rank}(\mathbf{C}(\mathbf{x}_{i}))\leq R$,
there exists a universal constant $\kappa>0$ such that for any $\zeta\in(0,1)$,
when $N_{\mathrm{t}}\geq2$ and $|\mathcal{T}_{i}|\geq N_{\mathrm{t}}\log(1/\zeta)$,
with probability at least $1-\zeta$,
\begin{align}
\|\ensuremath{\hat{\mathbf{C}}'}(\mathbf{x}_{i})-\mathbf{C}(\mathbf{x}_{i})\|_{2} & \leq\frac{\kappa}{\sqrt{|\mathcal{T}_{i}|}}\|\mathbf{C}(\mathbf{x}_{i})\|_{2}\left(S_{1}+S_{2}+S_{3}\right)\label{eq:upper-bound-estimation-error-1}
\end{align}
where $S_{1}=\frac{\sqrt{N_{\mathrm{t}}R^{2}\log^{2}(|\mathcal{T}_{i}|N_{\mathrm{t}}/\zeta)}}{M}$,
$S_{2}=\sqrt{R\log(1/\zeta)}$, and $S_{3}=\frac{N_{\mathrm{t}}R\log^{2}(|\mathcal{T}_{i}|N_{\mathrm{t}}/\zeta)}{\sqrt{|\mathcal{T}_{i}|}M}$.
\end{prop}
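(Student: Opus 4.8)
\emph{Plan.} The plan is to write the noiseless reconstruction error as an empirical average of i.i.d., zero-mean Hermitian matrices and to bound its operator norm by a matrix Bernstein inequality, handling the Gaussian tails of the summands by truncation. Since the estimator~(\ref{eq:estimator-without-noise}) is linear in the rank-one matrices $\boldsymbol{\varphi}_t\boldsymbol{\varphi}_t^{\mathrm H}$ and the scalars $\|\boldsymbol{\varphi}_t\|_2^2=\mathrm{tr}(\boldsymbol{\varphi}_t\boldsymbol{\varphi}_t^{\mathrm H})$, it can be written as $\hat{\mathbf C}'(\mathbf x_i)=\frac{1}{|\mathcal T_i|}\sum_{t\in\mathcal T_i}\mathbf Z_t$ with
\[
\mathbf Z_t=\frac{N_{\mathrm t}}{M\,(N_{\mathrm t}M+N_{\mathrm t}-2)}\,\|\boldsymbol{\varphi}_t\|_2^2\,\bigl(\bar N_{\mathrm t}\,\mathbf v_t\mathbf v_t^{\mathrm H}-(N_{\mathrm t}-M)\mathbf I\bigr),
\]
where $\boldsymbol{\varphi}_t=\boldsymbol{\Phi}_t\mathbf h_t$, $\boldsymbol{\Phi}_t=\mathbf A_t^{\mathrm H}\mathbf A_t$ is a Haar-distributed rank-$M$ orthogonal projection independent of $\mathbf h_t\sim\mathcal{CN}(\mathbf 0,\mathbf C(\mathbf x_i))$, and $\mathbf v_t=\boldsymbol{\varphi}_t/\|\boldsymbol{\varphi}_t\|_2$. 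Proposition~\ref{prop:estimator-of-sample-covariance} (with $\sigma_{\mathrm n}=0$) gives $\mathbb E\{\mathbf Z_t\}=\mathbf C(\mathbf x_i)$, so $\hat{\mathbf C}'(\mathbf x_i)-\mathbf C(\mathbf x_i)=\frac{1}{|\mathcal T_i|}\sum_{t\in\mathcal T_i}\mathbf W_t$ with $\mathbf W_t\triangleq\mathbf Z_t-\mathbf C(\mathbf x_i)$ i.i.d., Hermitian, zero-mean.

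\emph{Truncation and per-sample control.} The $\mathbf W_t$ are unbounded because $\mathbf h_t$ is Gaussian, so I would truncate. As $\mathrm{rank}(\mathbf C(\mathbf x_i))\le R$, the quadratic form $\|\mathbf h_t\|_2^2$ is sub-exponential with $\mathbb E\|\mathbf h_t\|_2^2=\mathrm{tr}(\mathbf C(\mathbf x_i))\le R\|\mathbf C(\mathbf x_i)\|_2$, and $\|\boldsymbol{\varphi}_t\|_2^2=\|\boldsymbol{\Phi}_t\mathbf h_t\|_2^2$ concentrates around $\tfrac{M}{N_{\mathrm t}}\|\mathbf h_t\|_2^2$; a Bernstein tail bound and a union bound over $t\in\mathcal T_i$ then yield $\mathbb P(\mathcal E)\ge1-\zeta/2$ for the event $\mathcal E\triangleq\{\,\|\boldsymbol{\varphi}_t\|_2^2\le\tau\ \text{for all }t\in\mathcal T_i\,\}$ with $\tau\asymp\tfrac{M}{N_{\mathrm t}}R\|\mathbf C(\mathbf x_i)\|_2\log(|\mathcal T_i|N_{\mathrm t}/\zeta)$. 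Define the truncated, re-centered summands $\tilde{\mathbf W}_t\triangleq\mathbf W_t\mathbf 1\{\|\boldsymbol{\varphi}_t\|_2^2\le\tau\}-\mathbb E\{\mathbf W_t\mathbf 1\{\|\boldsymbol{\varphi}_t\|_2^2\le\tau\}\}$; on $\mathcal E$ one has $\hat{\mathbf C}'(\mathbf x_i)-\mathbf C(\mathbf x_i)=\frac{1}{|\mathcal T_i|}\sum_{t\in\mathcal T_i}\tilde{\mathbf W}_t+\mathbb E\{\mathbf W_1\mathbf 1\{\|\boldsymbol{\varphi}_1\|_2^2>\tau\}\}$, and the last (truncation-bias) term has norm at most $(\mathbb E\|\mathbf W_1\|_2^2)^{1/2}\,\mathbb P(\|\boldsymbol{\varphi}_1\|_2^2>\tau)^{1/2}$, which is negligible by the choice of $\tau$. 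On $\mathcal E$ the truncated summands obey $\|\tilde{\mathbf W}_t\|_2\le L$ with $L\asymp\tfrac{N_{\mathrm t}^2}{M^2}\tau\asymp\tfrac{N_{\mathrm t}R\|\mathbf C(\mathbf x_i)\|_2}{M}\log(|\mathcal T_i|N_{\mathrm t}/\zeta)$, using $\bar N_{\mathrm t}=N_{\mathrm t}^2+N_{\mathrm t}-2\asymp N_{\mathrm t}^2$, $N_{\mathrm t}M+N_{\mathrm t}-2\asymp N_{\mathrm t}M$ and $\|\bar N_{\mathrm t}\mathbf v_t\mathbf v_t^{\mathrm H}-(N_{\mathrm t}-M)\mathbf I\|_2\asymp N_{\mathrm t}^2$ (valid since $N_{\mathrm t}\ge2$). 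For the matrix variance $v\triangleq\|\sum_{t\in\mathcal T_i}\mathbb E\{\tilde{\mathbf W}_t^2\}\|_2=|\mathcal T_i|\,\|\mathbb E\{\tilde{\mathbf W}_1^2\}\|_2$, I would compute the mixed fourth moment $\mathbb E\{\|\boldsymbol{\varphi}_t\|_2^2\,\boldsymbol{\varphi}_t\boldsymbol{\varphi}_t^{\mathrm H}\}$ explicitly — first averaging over the Haar projection $\boldsymbol{\Phi}_t$ via Weingarten-type identities, then over the complex Gaussian $\mathbf h_t$ (whose fourth moment contributes $O((\mathrm{tr}\,\mathbf C(\mathbf x_i))^2+\|\mathbf C(\mathbf x_i)\|_{\mathrm F}^2)=O(R^2\|\mathbf C(\mathbf x_i)\|_2^2)$ since $\mathrm{rank}\le R$). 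The particular combination of $\bar N_{\mathrm t}$, $N_{\mathrm t}-M$ and $N_{\mathrm t}M+N_{\mathrm t}-2$ — exactly the one that makes $\hat{\mathbf C}'$ unbiased — cancels the leading isotropic part, so that $\|\mathbb E\{\tilde{\mathbf W}_1^2\}\|_2\lesssim\bigl(R+\tfrac{N_{\mathrm t}R^2}{M^2}\log(|\mathcal T_i|N_{\mathrm t}/\zeta)\bigr)\|\mathbf C(\mathbf x_i)\|_2^2$, i.e., an $N_{\mathrm t}$-dependence rather than the $N_{\mathrm t}^2$ one gets from the crude estimate $\mathbb E\|\tilde{\mathbf W}_1\|_2^2$.

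\emph{Matrix Bernstein and assembly.} Applying an intrinsic-dimension matrix Bernstein inequality to $\frac{1}{|\mathcal T_i|}\sum_{t\in\mathcal T_i}\tilde{\mathbf W}_t$ with range $L$ and variance $v$, on an event of probability at least $1-\zeta/2$,
\[
\Bigl\|\hat{\mathbf C}'(\mathbf x_i)-\mathbf C(\mathbf x_i)\Bigr\|_2\;\lesssim\;\sqrt{\frac{\|\mathbb E\{\tilde{\mathbf W}_1^2\}\|_2\,\log(|\mathcal T_i|N_{\mathrm t}/\zeta)}{|\mathcal T_i|}}\;+\;\frac{L\,\log(|\mathcal T_i|N_{\mathrm t}/\zeta)}{|\mathcal T_i|}\;+\;b_\tau,
\]
where $b_\tau$ is the negligible truncation-bias term, and where for the rank-$R$ ``clean'' part of $\|\mathbb E\{\tilde{\mathbf W}_1^2\}\|_2$ the intrinsic-dimension bound lets one replace $\log(|\mathcal T_i|N_{\mathrm t}/\zeta)$ by $\log(1/\zeta)$. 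Substituting the bounds above and pulling out $\|\mathbf C(\mathbf x_i)\|_2/\sqrt{|\mathcal T_i|}$, the variance term splits into the clean piece $\frac{\|\mathbf C(\mathbf x_i)\|_2}{\sqrt{|\mathcal T_i|}}\sqrt{R\log(1/\zeta)}=\frac{\|\mathbf C(\mathbf x_i)\|_2}{\sqrt{|\mathcal T_i|}}S_2$ and the subsampling piece $\frac{\|\mathbf C(\mathbf x_i)\|_2}{\sqrt{|\mathcal T_i|}}\cdot\frac{\sqrt{N_{\mathrm t}R^2\log^2(|\mathcal T_i|N_{\mathrm t}/\zeta)}}{M}=\frac{\|\mathbf C(\mathbf x_i)\|_2}{\sqrt{|\mathcal T_i|}}S_1$, while the range term gives $\frac{L\log(|\mathcal T_i|N_{\mathrm t}/\zeta)}{|\mathcal T_i|}=\frac{\|\mathbf C(\mathbf x_i)\|_2}{\sqrt{|\mathcal T_i|}}\cdot\frac{N_{\mathrm t}R\log^2(|\mathcal T_i|N_{\mathrm t}/\zeta)}{\sqrt{|\mathcal T_i|}M}=\frac{\|\mathbf C(\mathbf x_i)\|_2}{\sqrt{|\mathcal T_i|}}S_3$. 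A union bound over $\mathcal E$ and the Bernstein event then gives~(\ref{eq:upper-bound-estimation-error-1}) for a universal constant $\kappa$; the hypotheses $N_{\mathrm t}\ge2$ and $|\mathcal T_i|\ge N_{\mathrm t}\log(1/\zeta)$ are used for the $\asymp$ simplifications above and to keep all invoked concentration inequalities in their valid regime.

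\emph{Main obstacle.} The technically demanding step is the variance computation: one must evaluate the mixed fourth moment of $\boldsymbol{\varphi}_t=\boldsymbol{\Phi}_t\mathbf h_t$ jointly over the Haar-random subspace and the complex Gaussian, and verify that the specific coefficients in $\hat{\mathbf C}'$ cancel the dominant isotropic term so that $\|\mathbb E\{\tilde{\mathbf W}_1^2\}\|_2$ scales like $N_{\mathrm t}R^2/M^2$ (hence $S_1$ carries $\sqrt{N_{\mathrm t}}$) rather than $N_{\mathrm t}^2R^2/M^2$. Keeping the rank bound $R$ tight throughout (via $\mathrm{tr}(\mathbf C(\mathbf x_i))\le R\|\mathbf C(\mathbf x_i)\|_2$ and $\|\mathbf C(\mathbf x_i)\|_{\mathrm F}^2\le R\|\mathbf C(\mathbf x_i)\|_2^2$) and tracking which logarithmic factor each of $S_1$, $S_2$, $S_3$ inherits — $\log(|\mathcal T_i|N_{\mathrm t}/\zeta)$ from the truncation level versus $\log(1/\zeta)$ from the intrinsic-dimension Bernstein term — is the remaining bookkeeping burden.
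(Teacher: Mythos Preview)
Your approach is sound and would lead to the stated bound, but it is far more than what the paper actually does. The paper's proof is a one-line citation: it observes that, in the noiseless case, the estimator $\hat{\mathbf C}'(\mathbf x_i)$ in (\ref{eq:estimator-without-noise}) is precisely the unbiased compressive covariance estimator of Azizyan, Krishnamurthy, and Singh~\cite{AziKriSin:J18}, with $\mathbf h_t\sim\mathcal{CN}(\mathbf 0,\mathbf C(\mathbf x_i))$ Gaussian and $\boldsymbol{\Phi}_t$ a Haar-random rank-$M$ projection; the bound (\ref{eq:upper-bound-estimation-error-1}) is then read off directly from Corollary~1 (Gaussian upper bounds) and Corollary~2 of that reference. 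No new argument is supplied.

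What you have sketched is essentially the \emph{proof} of those corollaries in~\cite{AziKriSin:J18}: write the error as an i.i.d.\ sum of centered Hermitian matrices, truncate the Gaussian tails of $\|\boldsymbol{\varphi}_t\|_2^2$, apply an intrinsic-dimension matrix Bernstein inequality, and compute the matrix variance by integrating the mixed fourth moment of $\boldsymbol{\Phi}_t\mathbf h_t$ over both the Haar projection and the Gaussian. Your identification of the three terms $S_1,S_2,S_3$ with, respectively, the subsampling part of the variance, the clean rank-$R$ variance, and the Bernstein range term is correct, and you have correctly flagged the key cancellation (the specific coefficients $\bar N_{\mathrm t}$, $N_{\mathrm t}-M$, $N_{\mathrm t}M+N_{\mathrm t}-2$ kill the leading isotropic component of the variance, giving $\sqrt{N_{\mathrm t}}$ rather than $N_{\mathrm t}$ in $S_1$). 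So your route is self-contained but substantially longer; the paper's route is a black-box invocation of~\cite{AziKriSin:J18}, which is legitimate since the present setting is exactly the one treated there.
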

\begin{proof}
Since ${\bf h}_{t}$ follows a Gaussian distribution $\mathcal{CN}(0,{\bf C}(\mathbf{x}_{i}))$
and $\ensuremath{\hat{\mathbf{C}}'}(\mathbf{x}_{i})$ is an unbiased
estimator given in (\ref{eq:estimator-without-noise}), applying Corollary
1 (Gaussian Upper Bounds) and Corollary 2 in \cite{AziKriSin:J18}
to our case automatically yields the results.
\end{proof}

Proposition~\ref{prop:upper-bound-of-the-estimation-error} provides
key insights into the factors influencing the reconstruction error
of sample covariance. First, if $|\mathcal{T}_{i}|$ is large relative
to $N_{\mathrm{t}}^{2}/M^{2}$ and the logarithmic factors are ignored,
the leading terms in the error bound~(\ref{eq:upper-bound-estimation-error-1})
is $\tilde{\mathcal{O}}(\|\mathbf{C}(\mathbf{x}_{i})\|_{2}(\sqrt{N_{\mathrm{t}}R^{2}/(|\mathcal{T}_{i}|M^{2})}+N_{\mathrm{t}}R/(|\mathcal{T}_{i}|M))),$
where the notation $\tilde{\mathcal{O}}(n)$ represents the asymptotic
growth rate of a function $n$ while suppressing dependencies on logarithmic
factors. Thus, increasing the number of observation samples reduces
error terms inversely, making it crucial to accumulate sufficient
samples. Second, the upper bound in~(\ref{eq:upper-bound-estimation-error-1})
decreases inversely with $M$, and thus, increasing $M$ may result
in a reduction of the reconstruction error. Additionally, the rank
of the true channel covariance highlights the difficulty of estimating
channel covariance in \ac{nlos} environments with rich multipath
effects. By contrast, in \ac{los} environments, where $R$ is small,
reconstruction accuracy improves naturally, requiring fewer pilot
observations (see Table~\foreignlanguage{american}{\ref{tab:NMSE-C}}).

\subsection{An Iterative Framework for Radio Map Construction}

\label{subsec:Joint-Radio-Map-Trajectory-Discovery-CSI-Tracking}

To sum up, a radio map construction framework is developed as follows.
We begin with a sequence of sparse \ac{csi} measurements and apply
the Viterbi algorithm to solve the user trajectory discovery problem~(\ref{eq:localization-problem}).
Based on the resulting user position estimates, the sparse \ac{csi}
measurements are then assigned to the corresponding grid cells. Subsequently,
we employ the unbiased estimator $\ensuremath{\hat{\mathbf{C}}}(\mathbf{x}_{i})$
as defined in (\ref{eq:unbiased-estimate-sampling-covariance}) to
reconstruct the channel covariance $\mathbf{C}(\mathbf{x}_{i})$ in
each grid cell. This process yields the radio map~$\mathcal{M}$,
which is then used to perform a second round of trajectory discovery.
Such a process is repeated until the change in estimated user positions
between two iterations falls below a predefined threshold $\epsilon$,
at which point the radio map~$\mathcal{M}$ is considered finalized.
Note that the trajectory discovery problem~(\ref{eq:localization-problem})
requires an initialization of the channel covariance ${\bf C}(\mathbf{p}_{t})$
to calculate $p(\mathbf{y}_{\text{\ensuremath{t}}}|\mathbf{p}_{t})$.
One can simply initialize identical channel covariance matrices for
each discretized grid cell since $p(\mathbf{y}_{\text{\ensuremath{t}}}|\mathbf{p}_{t})$
is only utilized for capturing relative relationships of user positions.
The technical details are shown in Algorithm~\ref{alg:radio-map-construction}.

The iterative framework for radio map construction converges naturally
as the trajectory and covariance estimates are progressively refined.
In each iteration, the updated trajectory helps allocate \ac{csi}
measurements to grid cells, improving the accuracy of the channel
covariance matrices. These refined covariances then aid in discovering
more accurate trajectories in the next iteration. This feedback loop
stabilizes as changes in the estimated user positions decrease, eventually
meeting the predefined threshold $\epsilon$. While a formal proof
of convergence is beyond this discussion, numerical results in Section~\ref{sec:Numerical-Results}
confirm the practical convergence of the framework (see Fig.~\ref{fig:convergence-alg2}).

\begin{algorithm}
\textbf{Input:} The sparse \ac{csi} measurements $\mathcal{Y}_{T}$,
coarse positions $\tilde{\mathcal{P}}_{T}$, the noise variance $\sigma_{\text{n}}^{2}$,
and a threshold $\epsilon$.
\begin{enumerate}
\item Initialize $\mathcal{M}^{(l)}=\{(\mathbf{x},\mathbf{I}):\mathbf{x}\in\mathcal{X}\}$
and $\hat{\mathcal{P}}_{T}^{(l)}=\tilde{\mathcal{P}}_{T}$ at the
iteration $l=0$.
\item Repeat the following process until $\frac{1}{T}\sum_{t=1}^{T}\|\mathbf{p}_{t}^{(l)}-\mathbf{p}_{t}^{(l-1)}\|_{2}<\epsilon$:
\begin{enumerate}
\item Set $l\leftarrow l+1$.
\item Estimate the user trajectory $\hat{\mathcal{P}}_{T}^{(l)}=\{\hat{\mathbf{p}}_{1}^{(l)},\hat{\mathbf{p}}_{2}^{(l)},\dots,\hat{\mathbf{p}}_{T}^{(l)}\}$
by solving Problem~(\ref{eq:localization-problem}).
\item Estimate the channel covariance $\hat{\mathbf{C}}^{(l)}(\mathbf{x}_{i})$
using $\mathcal{Y}_{T}$, $\hat{\mathcal{P}}_{T}^{(l)}$, and (\ref{eq:unbiased-estimate-sampling-covariance}),
and construct the radio map $\hat{\mathcal{M}}^{(l)}=\{(\mathbf{x},\hat{\mathbf{C}}^{(l)}({\bf x})):\mathbf{x}\in\mathcal{X}\}$.
\end{enumerate}
\item Output $\hat{\mathcal{M}}^{(l)}$ and $\hat{\mathcal{P}}_{T}^{(l)}$.
\end{enumerate}
\caption{Radio Map Construction}

\label{alg:radio-map-construction}
\end{algorithm}

\section{Numerical Results}

\label{sec:Numerical-Results}

In this section, we present experimental findings on real-world city
maps with 3D ray-traced \ac{mimo} channel datasets.

\subsection{Environment Setup and Scenarios}

\label{subsec:Environment-Setup-and-Scenarios}

The experiments are conducted on a {\ac{mimo} \ac{csi} database
generated by Wireless Insite \cite{remcom2025} which provides 3D
ray-tracing for the analysis of site-specific radio wave propagation.
The tested city topology is a $710$-meter $\times$ $740$-meter
area from San Francisco, America. Seven \acpl{bs} are deployed on
the top of some buildings as shown in Fig.~\ref{fig:simulation_environments}(a).
The \ac{mimo} antenna of each \ac{bs} consists of $N_{\mathrm{t}}=64$
antenna arrays as shown in Fig.~\ref{fig:simulation_environments}(b).
By simulating random walks along roads in the map, the radio database
is established, and it contains $19331$ user locations with $19331\times7$
\ac{mimo} channels.

The tested user trajectories are randomly generated along the road.
The measurement ${\bf y}_{t}^{(q)}$ for the $q$th \ac{bs} at the
position ${\bf p}_{t}$ is generated using (\ref{eq:observation_model}),
where the noise variance $\sigma_{\text{n}}^{2}$ is generated using
the mean power of channels and the given \ac{snr}, {\em i.e.}, $\sigma_{\mathrm{n}}^{2}=\mathbb{E}\{\|\mathbf{h}_{t}^{(q)}\|_{2}^{2}\}/\text{10}^{\mathrm{SNR}/10}$.
Coarse user positions $\tilde{\mathbf{p}}_{t}$ are generated by adding
a Gaussian random shift $\mathcal{N}(0,900)$ to the true position
${\bf p}_{t}$. The radio map grid has a resolution of $5$~meters,
with the user moving at $10$~meters/second and observations sampled
every $10$~milliseconds. The threshold $\epsilon$ for position
change in Algorithm~\ref{alg:radio-map-construction} is set to $0.5$~meter.
The dimension~$M$ of ${\bf y}_{t}^{(q)}$ is set to $1$ for the
proposed scheme.

Note that Doppler shift considerations are not necessary in our experimental
setup. This is because the primary purpose of \ac{csi} estimation
in the proposed scheme is to enable \ac{mimo} beamforming at the
\ac{bs}. Specifically, it aims to find a beamforming vector ${\bf b}$
that maximizes the signal strength $|{\bf b}^{\text{H}}{\bf h}e^{-j\theta}|$,
where the phase offset $\theta$ due to the Doppler effect does not
affect the signal strength. During beamformed data transmission, additional
pilot observations are still required to compensate for the phase
offset $\theta$. However, this aspect is beyond the scope of this
work, which centers on reducing pilot overhead for \ac{csi} tracking
to enable effective beamforming in mobile scenarios.

\begin{figure}
\begin{centering}
\subfigure[]{\includegraphics[width=0.65\columnwidth]{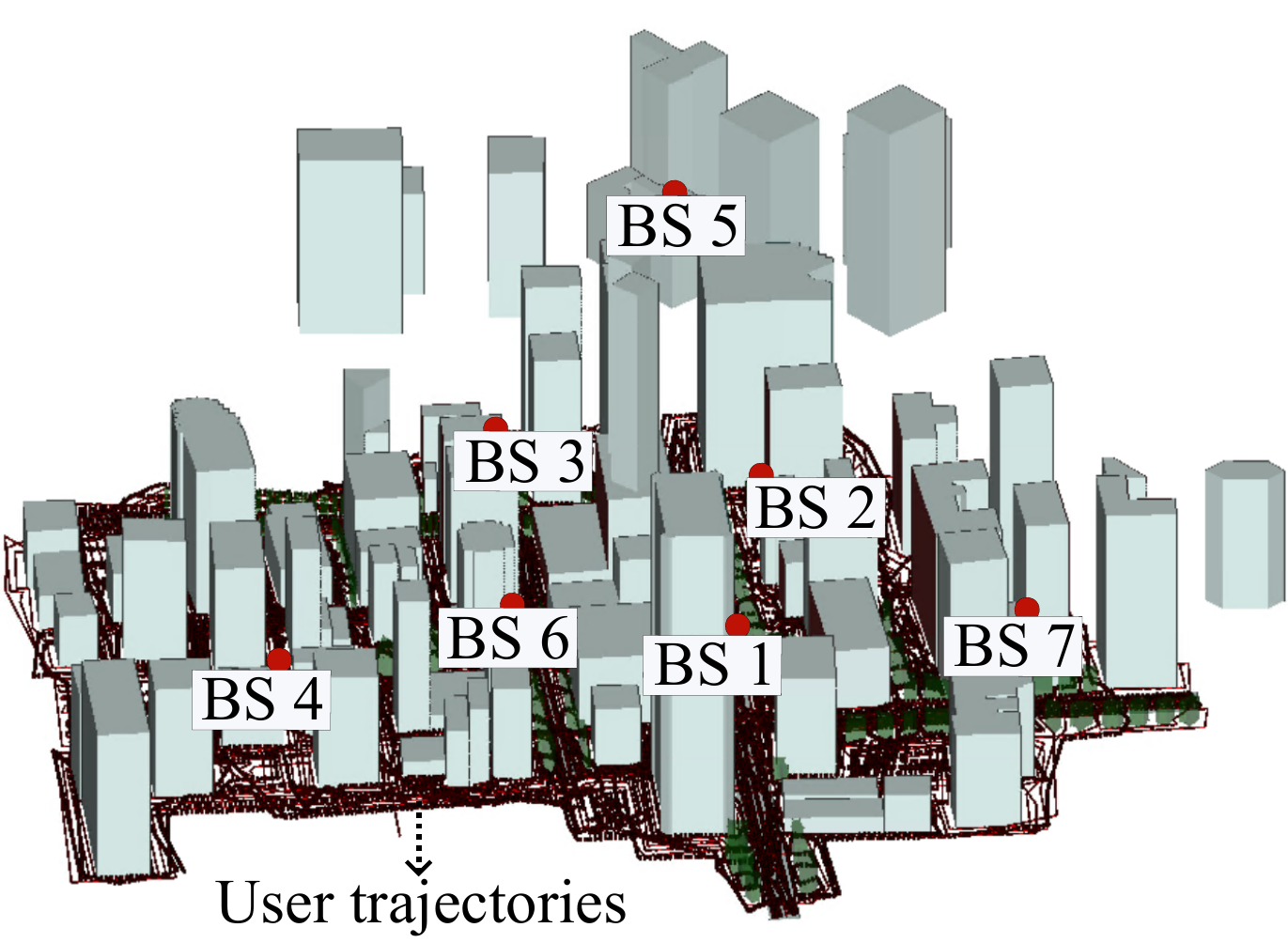}}\subfigure[]{\includegraphics[width=0.3\columnwidth]{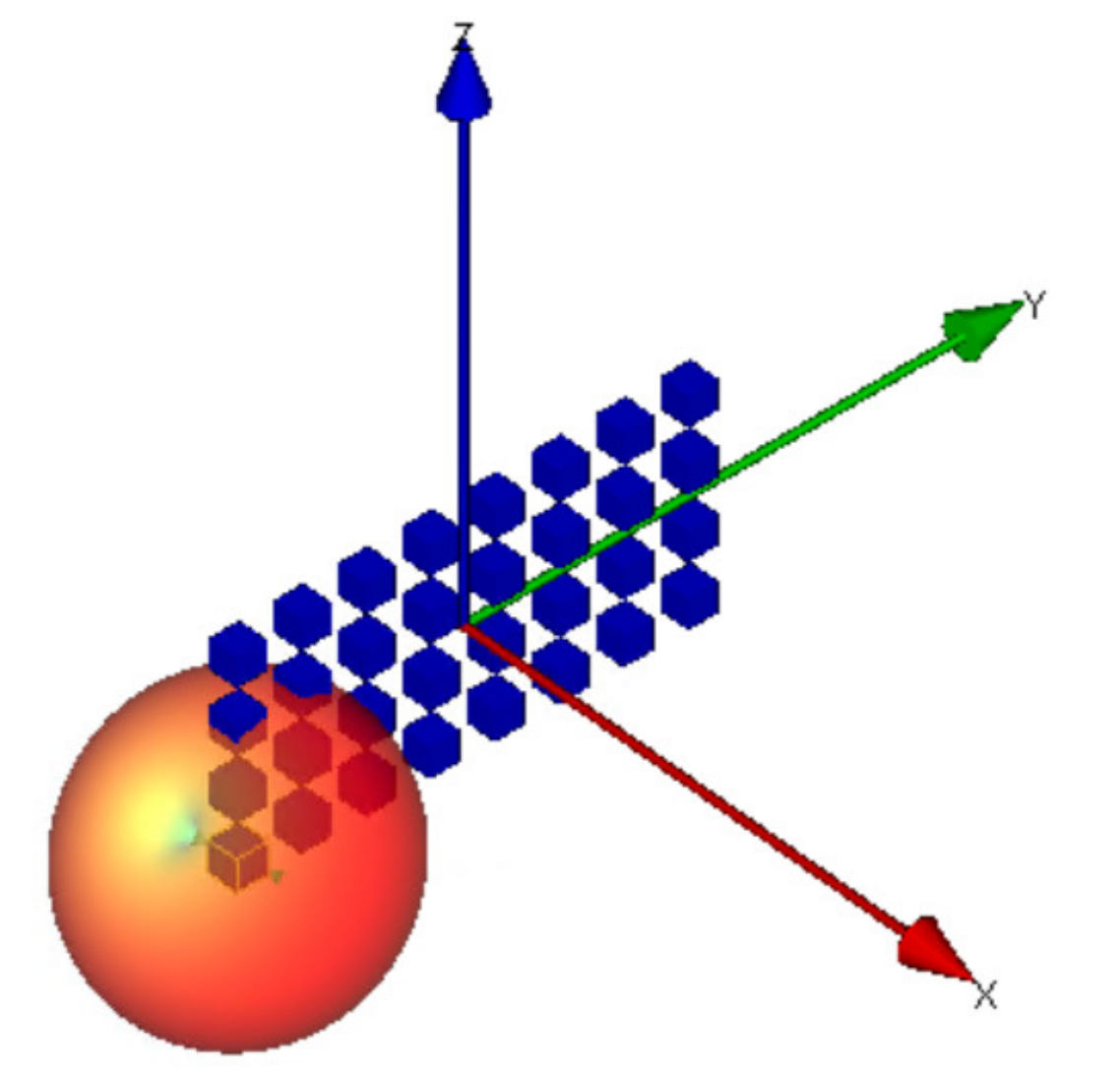}}
\par\end{centering}
\caption{(a) Simulation environment with $7$ \acpl{bs}\foreignlanguage{american}{;
(b) \Ac{mimo} antenna with $32$ dual-polarized dipole antenna arrays
($N_{\mathrm{t}}=64$).}}

\label{fig:simulation_environments}
\end{figure}

\selectlanguage{american}%
\begin{figure*}
\begin{centering}
\subfigure[]{\includegraphics[width=0.66\columnwidth]{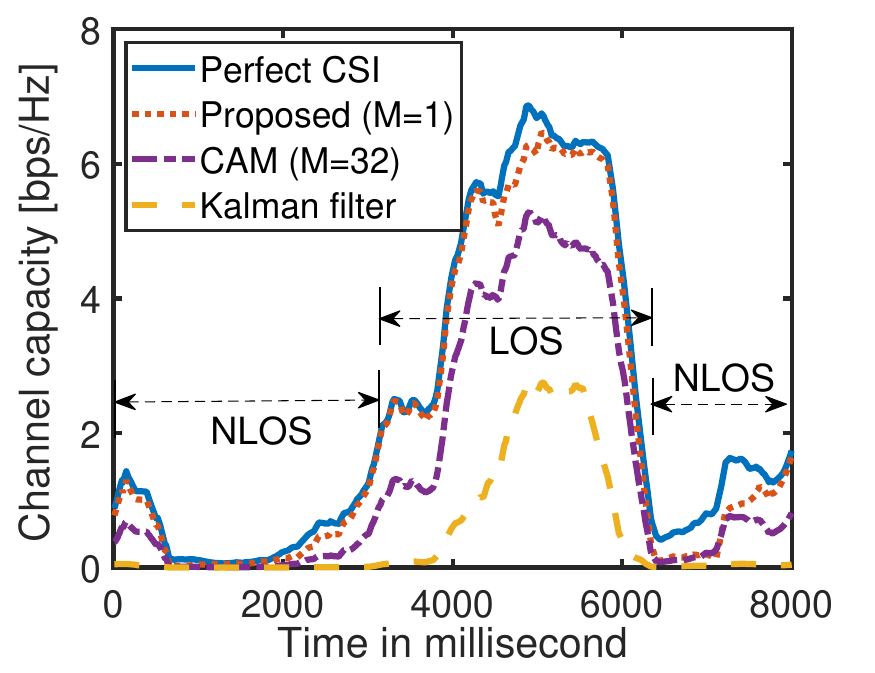}}\subfigure[]{\includegraphics[width=0.66\columnwidth]{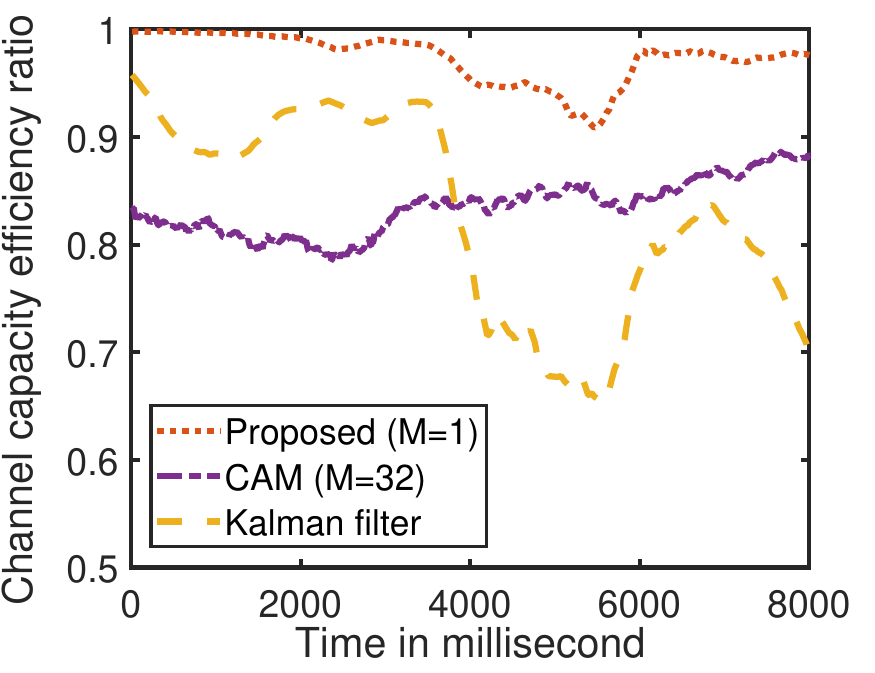}}\subfigure[]{\includegraphics[width=0.66\columnwidth]{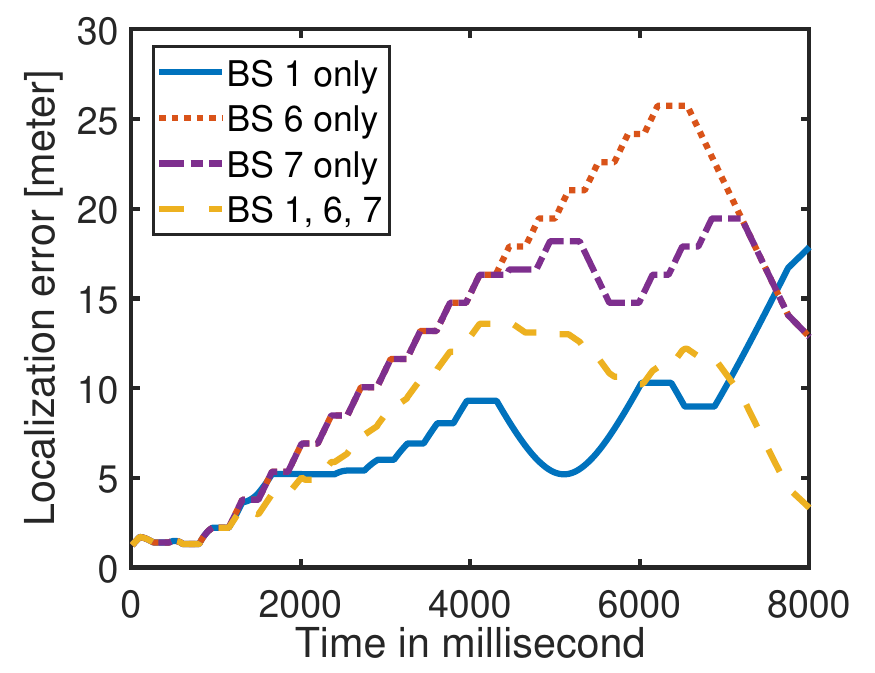}}
\par\end{centering}
\caption{\Ac{csi} and user position tracking performance. (a) \foreignlanguage{english}{Channel
capacity for a trajectory transitioning through \ac{nlos}, \ac{los},
and \ac{nlos} propagation conditions to BS 7; (b) Channel capacity
efficiency ratio for an \ac{los} trajectory to BS 1; (c) Localization
error of the proposed radio-map-embedded CSI tracking scheme for the
same trajectory in (a), which is \ac{los} to BS 1 and \ac{nlos}
to BS 6.}}

\label{fig:user_movement_process}
\end{figure*}

\selectlanguage{english}%

\subsection{Baseline Schemes}

The following baseline schemes are compared with the proposed \ac{csi}
tracking scheme.
\begin{itemize}
\item Genius-aided \ac{csi} estimation: This scheme performs the proposed
radio-map-embedded \ac{csi} tracking algorithm with known user positions
and perfect radio maps.
\item \Ac{cam}-aided \ac{csi} estimation with one-shot measurement \cite{WuZen:C23}:
This scheme constructs a \ac{cam} to map locations to path gain magnitudes
at known angles. The \ac{cam} serves as a localization fingerprint.
$M=N_{t}/2$ beam vectors are designed based on the estimated \ac{aod}
from the \ac{cam} for channel sensing, with the \ac{ls} method for
channel estimation.
\item \Ac{ar}-based \ac{csi} prediction \cite{VinJunHam:C24}: This approach
estimates \ac{ar} coefficients of the channel based on the Yule-Walker
equations and estimates the channel ${\bf h}_{t}$ using $10$ past
channel estimates. The \ac{ls} method is used for initial channel
estimation with randomly generated $\mathbf{A}_{t}\in\mathbb{C}^{N_{\mathrm{t}}\times N_{\mathrm{t}}}$.
\item \Ac{lstm}-based \ac{csi} prediction \cite{PenZhaCheYan:C20}: This
method adopts a recurrent neural network (RNN) and uses the previous
$10$ estimated channels to predict ${\bf h}_{t}$. Other settings
are the same as those of the \ac{ar}-based method.
\item \Ac{kf}-based \ac{csi} tracking \cite{WelBis:B95}: This scheme
performs \ac{csi} tracking using \ac{kf} with $M=1$, and no radio
map is used.
\item \Ac{ls}-based \ac{csi} estimation: An \ac{ls} estimate is given
by $\hat{{\bf h}}={\bf A}_{t}^{\dagger}{\bf y}$. Note that since
we typically have $M<N_{t}$, we need to perform pseudo-inverse for
the matrix $\mathbf{A}_{t}$. The sensing matrix $\mathbf{A}_{t}$
is randomly generated with $M=N_{t}/2$.
\end{itemize}

The following metrics are used to evaluate the proposed algorithms:
1) Localization error: Given the true user position $\mathbf{p}_{t}$
and the estimated user position $\hat{\mathbf{p}}$, the localization
error is defined as $\frac{1}{T}\sum_{t=1}^{T}\|{\bf p}_{t}-\hat{{\bf p}}_{t}\|_{2}.$
2) Channel capacity: Using the \ac{mrc} scheme, the channel capacity
is expressed as $f(\mathbf{b},{\bf h})=\log_{2}(1+|\mathbf{b}^{\text{H}}{\bf h}|^{2}/\sigma_{\text{n}}^{2}),$
where ${\bf h}$ is the true \ac{csi} and $\mathbf{b}=\hat{\mathbf{h}}/\|\hat{\mathbf{h}}\|_{2}$
is the beamforming vector, in which $\hat{\mathbf{h}}$ is the estimated
\ac{csi}. 3) Channel capacity efficiency ratio: This is defined as
the ratio of the maximum capacity achieved with $\hat{\mathbf{h}}$
relative to that with perfect \ac{csi} ${\bf h}$. 4) Normalized
$L_{2}$-norm estimation error of channel covariance: This is defined
as $\frac{1}{|\mathcal{X}|}\sum_{\mathbf{x}\in\mathcal{X}}\|\hat{\mathbf{C}}(\mathbf{x})-\mathbf{C}(\mathbf{x})\|_{2}/\|\mathbf{C}(\mathbf{x})\|_{2}$.
5) Normalized projection error of channel covariance: This is defined
as $\|\mathbf{C}-\hat{\mathbf{C}}(\hat{\mathbf{C}}^{\mathrm{H}}\hat{\mathbf{C}})^{-1}\hat{\mathbf{C}}^{\mathrm{H}}\mathbf{C}\|_{\mathrm{F}}/\|\mathbf{C}\|_{\mathrm{F}}$,
which quantifies the mismatch between the true and estimated signal
subspaces.

\subsection{Radio-Map-Embedded CSI Tracking}

\label{subsec:Radio-Map-assisted-Trajectory}

We first evaluate the proposed radio-map-embedded \ac{csi} tracking
scheme. A perfect radio map~$\mathcal{M}$ for each \ac{bs} is constructed
by discretizing the area of interest and calculating $\mathbf{C}(\mathbf{x}_{i})$
using true channels in each grid cell centered at $\mathbf{x}_{i}$.
Fig.~\ref{fig:user_movement_process} illustrates the \ac{csi} and
user position tracking performance of Algorithm~\ref{alg:radio-map-assisted-switching-Kalman-filter-csi-tracking}.
Specifically, Fig.~\ref{fig:user_movement_process}(a) shows the
channel capacity along a trajectory transitioning through \ac{nlos},
\ac{los}, and \ac{nlos} regions to \ac{bs}~7. The proposed scheme
outperforms other baseline schemes, ensuring smooth tracking during
the \ac{nlos}-\ac{los} transition. Fig.~\ref{fig:user_movement_process}(b)
reveals that the channel capacity efficiency ratio of the proposed
scheme remains above $90\%$ for an \ac{los} trajectory to \ac{bs}~1.
The tracking performance of the conventional \ac{kf} scheme is unstable
due to the lack of prior \ac{csi} statistics, while the \ac{cam}-aided
scheme achieves above $78\%$ of capacity over that of perfect \ac{csi}.
Fig.~\ref{fig:user_movement_process}(c) highlights that localization
errors along the same trajectory in Fig.~\ref{fig:user_movement_process}(a)
stay below $26$~meters, where \ac{bs}~1 operates under \ac{los}
conditions and \ac{bs}~6 under \ac{nlos}. Additionally, the presence
of multiple \acpl{bs} contributes to more stable performance and
reduced localization errors, with values consistently below $14$~meters.
For multi-BS cooperation, each \ac{bs} independently computes the
position distribution using (\ref{eq:bar-pi-pt}), and then, these
distributions are combined multiplicatively across \acpl{bs} to form
a joint posterior. Subsequently, the position ${\bf p}_{t}\in\mathcal{X}$
that maximizes this posterior is selected according to (\ref{eq:tracking-user-position}).

Fig.~\ref{fig:capacity_versus_SNR} shows the channel capacity efficiency
ratio versus \ac{snr} when $7$ \acpl{bs} serve the user. This is
merely an illustrative example to demonstrate the potential performance
gains achievable with multiple \acpl{bs}. The proposed scheme achieves
over $97\%$ of the capacity attained with perfect \ac{csi} at an
\ac{snr} of $20$~dB. Additionally, the proposed scheme nearly matches
the performance of the genius-aided scheme, which assumes perfect
user positions. The one-shot \ac{cam} scheme, which utilizes only
the current measurement and $32$ pilot observations ($M=32$), achieves
approximately $87\%$ of the ideal capacity at an \ac{snr} of $30$~dB.
The \ac{kf} scheme performs well at high \ac{snr} but achieves only
about $36\%$ of the ideal capacity at an \ac{snr} of $10$~dB.
The \ac{ar}-based and \ac{lstm}-based methods perform poorly due
to their dependence on highly accurate historical channel data and
restrictive assumptions about stationary spatial distributions. The
\ac{ls} estimator consistently underperforms other baselines, because
it fails to exploit spatial and temporal priors and relies solely
on a one-shot observation for \ac{csi} estimation.

Fig.~\ref{fig:capacity_versus_BS} depicts the channel capacity efficiency
ratio versus \ac{bs} groups under a $30$~dB \ac{snr}. The randomly
generated user trajectories are \ac{los} to \acpl{bs}~$1$ and
$2$ while they are \ac{nlos} to \acpl{bs}~$4$ and $6$. The proposed
scheme achieves similar performance to the genius-aided scheme. Although
some baseline schemes achieve over $83\%$ of the ideal capacity in
\ac{los} cases, their performance degrades significantly in \ac{nlos}
cases due to the highly complex \ac{csi} distributions caused by
intricate urban topologies.
\begin{figure}
\selectlanguage{american}%
\begin{centering}
\includegraphics[width=0.95\columnwidth]{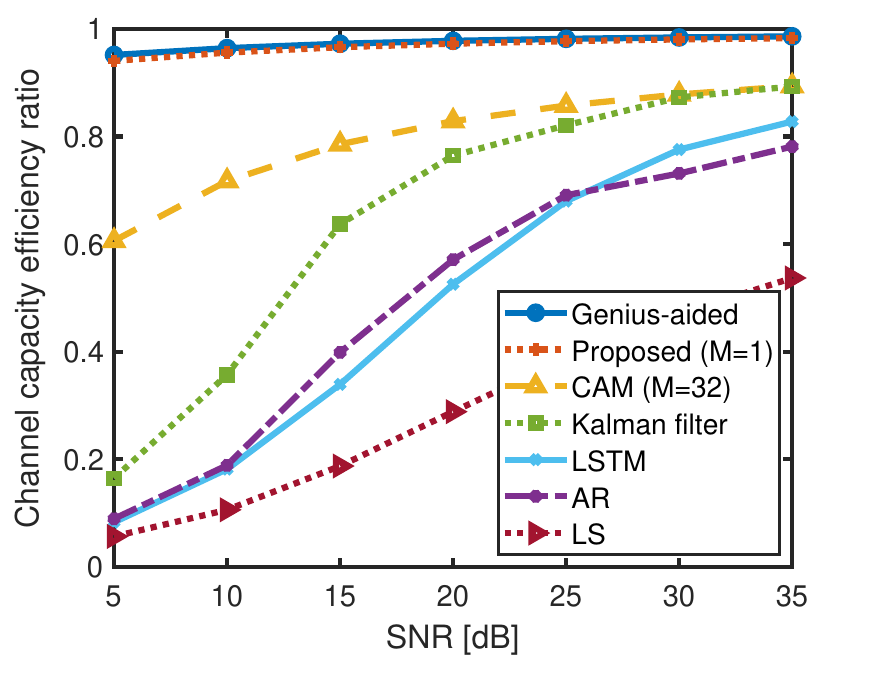}
\par\end{centering}
\caption{\foreignlanguage{english}{Channel capacity efficiency ratio versus \ac{snr}}}

\label{fig:capacity_versus_SNR}\selectlanguage{english}%
\end{figure}

Table~\ref{tab:adaptive-sensing} shows the impact of adaptive sensing
matrices in the proposed CSI tracking scheme. It is observed that
adaptive sensing provides additional performance gains, especially
in low-SNR scenarios. For example, at $10$ dB \ac{snr} under \ac{nlos}
conditions to \ac{bs}~6, the proposed method with adaptive ${\bf A}_{t}$
achieves a channel capacity efficiency ratio of $55\%$, representing
a $14\%$ improvement over the same scheme with random ${\bf A}_{t}$.
In addition, both variants of the proposed scheme with adaptive and
random ${\bf A}_{t}$ achieve over $91\%$ of the ideal capacity under
\ac{los} conditions to \ac{bs} 1, demonstrating the robustness of
the proposed radio-map-embedded CSI tracking scheme, thanks to the
strong spatial prior provided by the radio map.

Table~\ref{tab:Channel-capacity-ratio-gamma-v} indicates that the
coefficient $\gamma$ in the channel model~(\ref{eq:ar_channel_model})
has a negligible impact on the \ac{csi} tracking performance of the
proposed scheme if it is within a certain range, {\em e.g.}, $[0.1,0.7]$.
Moreover, the proposed scheme demonstrates stable and robust performance
across a user speed range of $5$ to $25$ meters per second.

Additionally, Table~\ref{tab:Average-running-time} reports the average
CSI estimation time at BS~1 under the same settings as Fig.~\ref{fig:capacity_versus_BS}.
The proposed scheme takes approximately $0.27$~ms to track the CSI
for BS 1, which is comparable to the CAM-aided scheme that achieves
a similar running time of $0.27$~ms but with a slightly lower channel
capacity efficiency ratio of $83\%$. In contrast, schemes such as
the KF-based and AR-based approaches are slightly faster (about $0.19$~ms
and $0.036$~ms, respectively) but suffer from significant performance
degradation, with over $21\%$ lower capacity efficiency ratios.
\begin{figure}
\selectlanguage{american}%
\begin{centering}
\includegraphics[width=0.92\columnwidth]{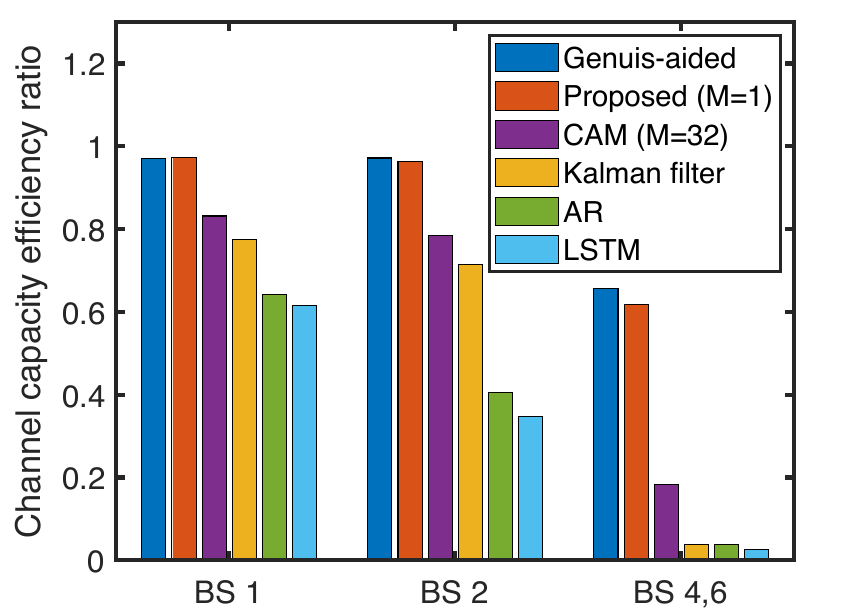}
\par\end{centering}
\caption{\foreignlanguage{english}{Channel capacity efficiency ratio versus \ac{bs} groups}}

\label{fig:capacity_versus_BS}\selectlanguage{english}%
\end{figure}
}
\begin{table}
\selectlanguage{american}%
\caption{\label{tab:adaptive-sensing}\foreignlanguage{english}{Channel capacity
efficiency ratio versus \ac{snr}, with and without adaptive sensing
in the proposed CSI tracking scheme}}

\renewcommand{\arraystretch}{1.4}
\centering{}%
\begin{tabular}{>{\centering}m{0.23\columnwidth}>{\centering}m{0.08\columnwidth}|>{\centering}m{0.07\columnwidth}>{\centering}m{0.07\columnwidth}>{\centering}m{0.07\columnwidth}>{\centering}m{0.07\columnwidth}>{\centering}m{0.07\columnwidth}}
\hline 
\multicolumn{2}{c|}{\foreignlanguage{english}{SNR {[}dB{]}}} & \foreignlanguage{english}{$10$} & \foreignlanguage{english}{$15$} & \foreignlanguage{english}{$20$} & \foreignlanguage{english}{$25$} & \foreignlanguage{english}{$30$}\tabularnewline
\hline 
\multirow{2}{0.23\columnwidth}{\foreignlanguage{english}{\centering{}Adaptive sensing}} & \foreignlanguage{english}{BS 1} & \foreignlanguage{english}{$92\%$} & \foreignlanguage{english}{$93\%$} & \foreignlanguage{english}{$95\%$} & \centering{}$97\%$ & \foreignlanguage{english}{$97\%$}\tabularnewline
\cline{2-7}
 & \foreignlanguage{english}{BS 6} & \foreignlanguage{english}{$55\%$} & \centering{}$55\%$ & \foreignlanguage{english}{$58\%$} & \centering{}$58\%$ & \foreignlanguage{english}{$59\%$}\tabularnewline
\hline 
\multirow{2}{0.23\columnwidth}{\foreignlanguage{english}{\centering{}Random sensing}} & \foreignlanguage{english}{BS 1} & \foreignlanguage{english}{$91\%$} & \foreignlanguage{english}{$93\%$} & \foreignlanguage{english}{$95\%$} & \centering{}$96\%$ & \foreignlanguage{english}{$97\%$}\tabularnewline
\cline{2-7}
 & \foreignlanguage{english}{BS 6} & \foreignlanguage{english}{$41\%$} & \foreignlanguage{english}{$48\%$} & \foreignlanguage{english}{$50\%$} & \foreignlanguage{english}{$50\%$} & \foreignlanguage{english}{$53\%$}\tabularnewline
\hline 
\end{tabular}\selectlanguage{english}%
\end{table}

\selectlanguage{american}%
\begin{table}
\caption{\label{tab:Channel-capacity-ratio-gamma-v}Track performance versus
$\gamma$ $(\bar{v}=10)$ and user speed $\bar{v}$ $(\gamma=0.1)$}

\renewcommand{\arraystretch}{1.4}
\centering{}%
\begin{tabular}{>{\centering}m{0.29\columnwidth}|>{\centering}m{0.12\columnwidth}>{\centering}m{0.12\columnwidth}>{\centering}m{0.12\columnwidth}>{\centering}m{0.12\columnwidth}}
\hline 
\foreignlanguage{english}{\centering{}} & \centering{}\textbf{$\gamma=0.1$} & \foreignlanguage{english}{$\gamma=0.7$} & \foreignlanguage{english}{$\bar{v}=5$} & \foreignlanguage{english}{$\bar{v}=25$}\tabularnewline
\hline 
Channel capacity efficiency ratio & \centering{}$98.0\%$ & \foreignlanguage{english}{$97.9\%$} & \foreignlanguage{english}{$98.7\%$} & \foreignlanguage{english}{$97.7\%$}\tabularnewline
\hline 
\end{tabular}
\end{table}
\begin{table}
\caption{\foreignlanguage{english}{\label{tab:Average-running-time}Average running time for CSI estimation
{[}millisecond{]}}}

\renewcommand{\arraystretch}{1.2}
\centering{}%
\begin{tabular}{>{\centering}m{0.12\columnwidth}>{\centering}m{0.12\columnwidth}>{\centering}m{0.12\columnwidth}>{\centering}m{0.1\columnwidth}>{\centering}m{0.1\columnwidth}>{\centering}m{0.1\columnwidth}}
\toprule 
\foreignlanguage{english}{\centering{}\textbf{Genius}} & \centering{}\textbf{Proposed} & \foreignlanguage{english}{\textbf{CAM}} & \foreignlanguage{english}{\textbf{KF}} & \foreignlanguage{english}{\textbf{AR}} & \foreignlanguage{english}{\textbf{LSTM}}\tabularnewline
\cmidrule(lr){1-1}\cmidrule{2-2}\cmidrule(lr){3-3}\cmidrule(lr){4-4}\cmidrule(lr){5-5}\cmidrule{6-6}
\foreignlanguage{english}{$0.21$} & \foreignlanguage{english}{$0.27$} & \centering{}$0.27$ & \foreignlanguage{english}{$0.19$} & \foreignlanguage{english}{$0.036$} & \foreignlanguage{english}{$1.6$}\tabularnewline
\bottomrule
\end{tabular}
\end{table}

\selectlanguage{english}%

\subsection{Radio Map Construction}

\label{subsec:Exp_Blind-Radio-Map-Construction-Trajectory-Discovery}
\begin{figure}
\begin{centering}
\subfigure[Estimated user trajectory]{\includegraphics[width=0.49\columnwidth]{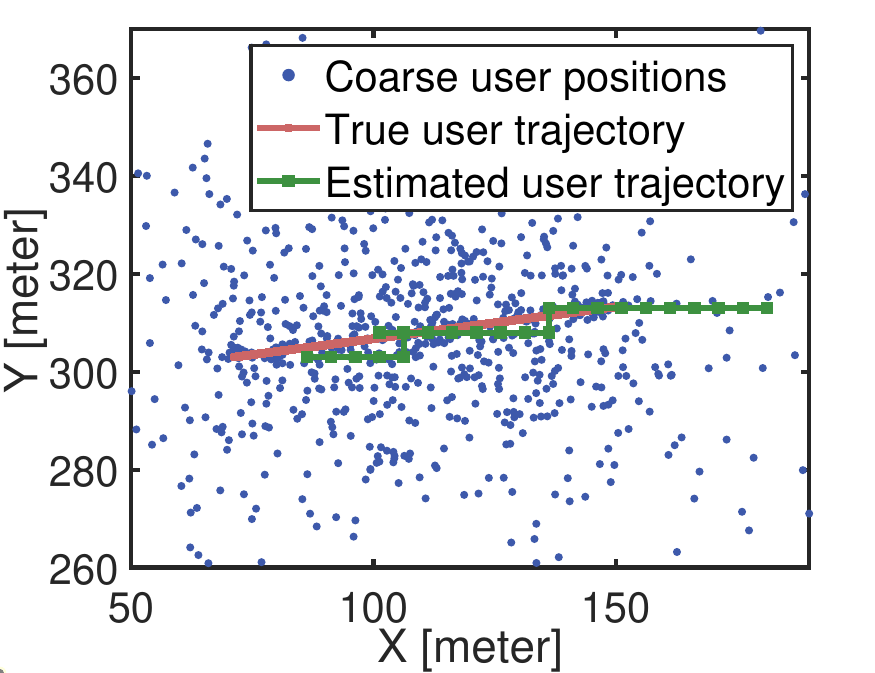}}\subfigure[Mean localization error]{\includegraphics[width=0.5\columnwidth]{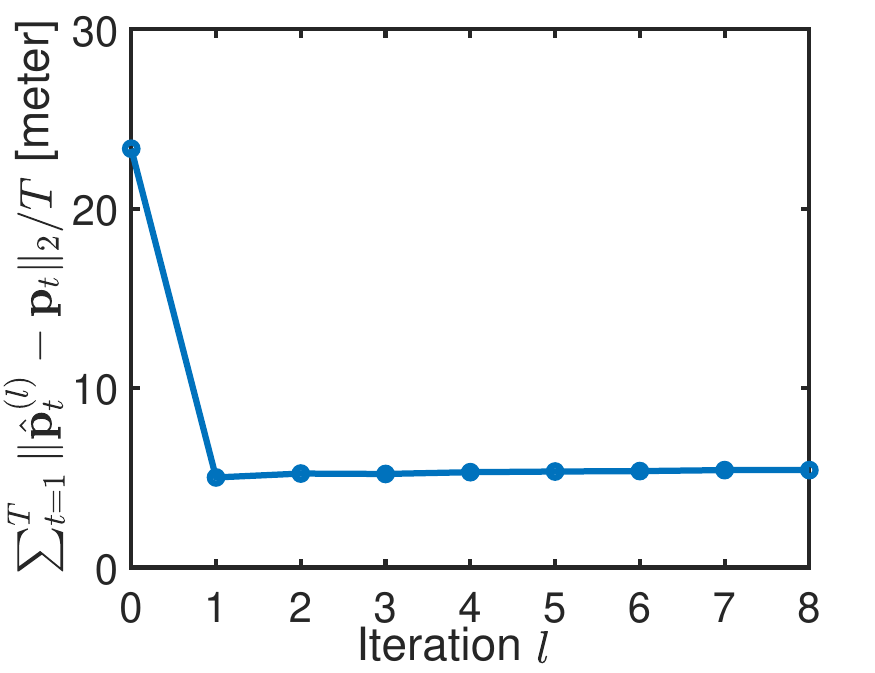}}
\par\end{centering}
\caption{\foreignlanguage{american}{User localization and convergence of Algorithm~\ref{alg:radio-map-construction}.}}

\label{fig:convergence-alg2}
\end{figure}

In this subsection, we construct radio maps for each \ac{bs} from
sparse and unlabeled pilot observation sequences in an offline manner.
This is a one-time process performed prior to deployment. Fig.~\ref{fig:convergence-alg2}(a)
shows an example of the estimated user trajectory obtained by solving
Problem~(\ref{eq:localization-problem}). The estimated trajectory
matches the ground truth well, even though the coarse user locations
are scattered over an area hundreds of meters wide. Numerical results
indicate a localization error of less than $6$~meters when \ac{bs}~$1$
serves the user. Fig.~\ref{fig:convergence-alg2}(b) confirms the
convergence of Algorithm~\ref{alg:radio-map-construction} with localization
errors staying stable as the number of iterations increases.
\begin{table}
\selectlanguage{american}%
\caption{\label{tab:NMSE-C}\foreignlanguage{english}{Normalized $L_{2}$-norm
estimation error of channel covariance versus number of measurements
$T_{\mathrm{total}}$ and dimension $M$}}

\renewcommand{\arraystretch}{1.4}
\centering{}%
\begin{tabular}{>{\centering}m{0.3\columnwidth}|>{\centering}m{0.08\columnwidth}>{\centering}m{0.08\columnwidth}>{\centering}m{0.08\columnwidth}>{\centering}m{0.08\columnwidth}>{\centering}m{0.08\columnwidth}}
\hline 
\foreignlanguage{english}{\centering{}$T_{\mathrm{total}}$} & \centering{}\textbf{$35000$} & \foreignlanguage{english}{\centering{}$70000$} & \foreignlanguage{english}{$105000$} & \foreignlanguage{english}{$140000$} & \foreignlanguage{english}{$175000$}\tabularnewline
\hline 
\foreignlanguage{english}{$M=1$, BS 1 (LOS)} & \centering{}$0.837$ & \centering{}$0.825$ & \foreignlanguage{english}{$0.819$} & \foreignlanguage{english}{$0.812$} & \foreignlanguage{english}{$0.806$}\tabularnewline
\cline{1-1}
\foreignlanguage{english}{$M=4$, BS 1 (LOS)} & \foreignlanguage{english}{$0.766$} & \foreignlanguage{english}{$0.729$} & \foreignlanguage{english}{$0.706$} & \foreignlanguage{english}{$0.691$} & \foreignlanguage{english}{$0.676$}\tabularnewline
\cline{1-1}
\multicolumn{1}{c|}{\foreignlanguage{english}{$M=16$, BS 1 (LOS)}} & \foreignlanguage{english}{$0.630$} & \foreignlanguage{english}{$0.617$} & \foreignlanguage{english}{$0.614$} & \foreignlanguage{english}{$0.615$} & \foreignlanguage{english}{$0.612$}\tabularnewline
\cline{1-1}
\multicolumn{1}{c|}{\foreignlanguage{english}{$M=16$, BS 6 (NLOS)}} & \foreignlanguage{english}{$0.781$} & \foreignlanguage{english}{$0.762$} & \foreignlanguage{english}{$0.755$} & \foreignlanguage{english}{$0.748$} & \foreignlanguage{english}{$0.742$}\tabularnewline
\hline 
\end{tabular}\selectlanguage{english}%
\end{table}

\selectlanguage{american}%
\begin{table}
\caption{\label{tab:Normalized-projection-error}\foreignlanguage{english}{Normalized
projection error of channel covariance versus number of measurements
$T_{\mathrm{total}}$ and dimension $M$}}

\renewcommand{\arraystretch}{1.4}
\centering{}%
\begin{tabular}{>{\centering}m{0.3\columnwidth}|>{\centering}m{0.08\columnwidth}>{\centering}m{0.08\columnwidth}>{\centering}m{0.08\columnwidth}>{\centering}m{0.08\columnwidth}>{\centering}m{0.08\columnwidth}}
\hline 
\foreignlanguage{english}{\centering{}$T_{\mathrm{total}}$} & \centering{}\textbf{$35000$} & \foreignlanguage{english}{\centering{}$70000$} & \foreignlanguage{english}{$105000$} & \foreignlanguage{english}{$210000$} & \foreignlanguage{english}{$280000$}\tabularnewline
\hline 
\foreignlanguage{english}{$M=1$, BS 1 (LOS)} & \centering{}$0.227$ & \centering{}$0.227$ & \foreignlanguage{english}{$0.203$} & \foreignlanguage{english}{$0.139$} & \foreignlanguage{english}{$0.131$}\tabularnewline
\cline{1-1}
\foreignlanguage{english}{$M=4$, BS 1 (LOS)} & \foreignlanguage{english}{$0.235$} & \foreignlanguage{english}{$0.207$} & \foreignlanguage{english}{$0.199$} & \foreignlanguage{english}{$0.131$} & \foreignlanguage{english}{$0.131$}\tabularnewline
\cline{1-1}
\multicolumn{1}{c|}{\foreignlanguage{english}{$M=16$, BS 1 (LOS)}} & \foreignlanguage{english}{$0.211$} & \foreignlanguage{english}{$0.131$} & \foreignlanguage{english}{$0.092$} & \foreignlanguage{english}{$0.064$} & \foreignlanguage{english}{$0.052$}\tabularnewline
\cline{1-1}
\multicolumn{1}{c|}{\foreignlanguage{english}{$M=16$, BS 6 (NLOS)}} & \foreignlanguage{english}{$0.214$} & \foreignlanguage{english}{$0.202$} & \foreignlanguage{english}{$0.202$} & \foreignlanguage{english}{$0.198$} & \foreignlanguage{english}{$0.190$}\tabularnewline
\hline 
\end{tabular}
\end{table}
\begin{figure}[t]
\selectlanguage{english}%
\begin{centering}
\includegraphics[width=0.95\columnwidth]{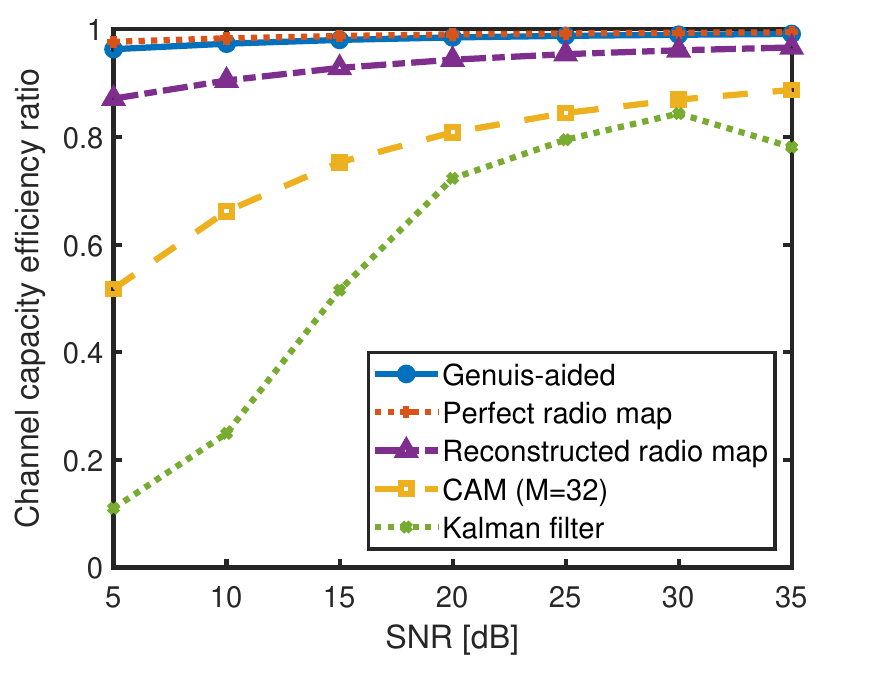}
\par\end{centering}
\caption{\label{fig:capacity_RM_construction}Channel capacity efficiency ratio
for \ac{los} trajectories to \ac{bs}~1}
\selectlanguage{american}%
\end{figure}

\selectlanguage{english}%
\begin{figure}[t]
\begin{centering}
\includegraphics[width=0.95\columnwidth]{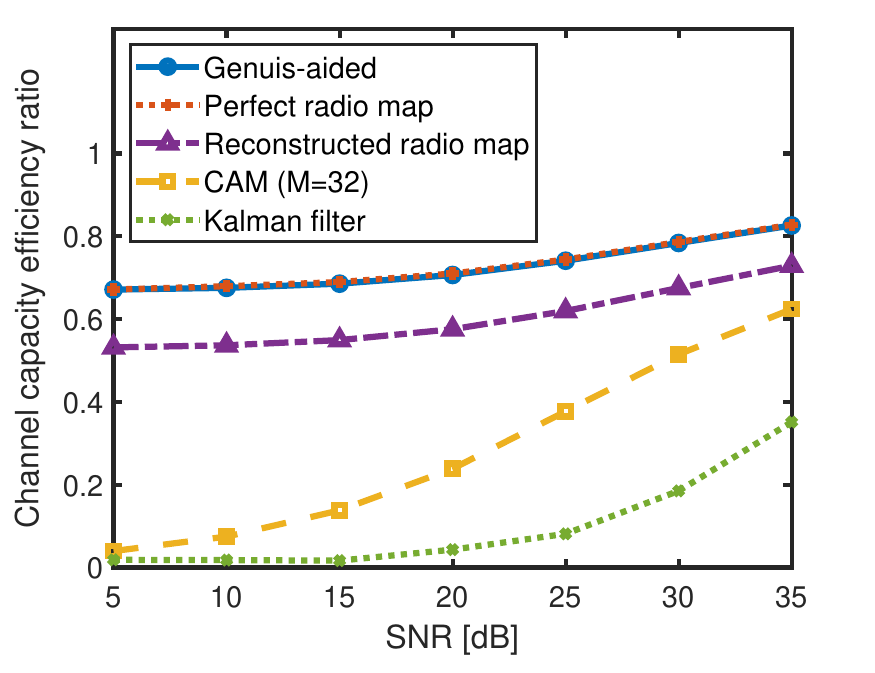}
\par\end{centering}
\caption{\label{fig:capacity_RM_construction-nlos}Channel capacity efficiency
ratio for \ac{nlos} trajectories to \ac{bs}~6}
\end{figure}
Table~\foreignlanguage{american}{\ref{tab:NMSE-C}} presents the
normalized $L_{2}$-norm estimation error of the channel covariance
versus the number of measurements $T_{\mathrm{total}}$ under different
settings of $M$. Here, multiple sequences of channel measurements
are collected, where each sequence consists of 350 measurements, and
$T_{{\rm total}}$ denotes the total number of measurements across
all sequences. Larger values of $M$ and $T_{\mathrm{total}}$ reduce
the normalized $L_{2}$-norm estimation error as supported by Proposition~\ref{prop:upper-bound-of-the-estimation-error}.
It is observed that when $M$ is significantly smaller than $N_{\text{t}}$,
{\em e.g.}, $M=4$, it requires $175,000$ measurements to reach
an estimation error of $0.676$, while it only requires less than
$35,000$ measurements to achieve a smaller estimation error when
$M=16$. Additionally, constructing radio maps for \acpl{bs} under
\ac{nlos} conditions is more challenging compared to \acpl{bs} under
\ac{los} due to the rich multipath effects.

Table~\ref{tab:Normalized-projection-error} further illustrates
the normalized projection error of channel covariance versus the number
of measurements $T_{\mathrm{total}}$ and dimension $M$. The results
demonstrate that increased measurements or dimension of pilot observations
consistently reduce the normalized projection error for both BS~1
and BS~6. However, BS~1, operating under LOS conditions, achieves
a significantly lower normalized projection error with sufficient
measurements. Specifically, BS~1 reaches a normalized projection
error of $0.211$ with $35,000$ measurements and $0.052$ with $280,000$
measurements. In contrast, BS~6, operating under NLOS conditions,
only achieves a normalized projection error of $0.190$ even with
$280,000$ measurements. These findings indicate that radio map construction
under NLOS conditions presents greater challenges than under LOS conditions
due to rich multipath effects.

Fig.~\ref{fig:capacity_RM_construction} and Fig.~\ref{fig:capacity_RM_construction-nlos}
illustrate the \ac{csi} tracking performance for \ac{bs}~$1$ under
\ac{los} conditions and \ac{bs}~$6$ under \ac{nlos}, based on
the radio maps that are reconstructed using $280,000$ and $420,000$
\ac{csi} measurements, respectively with $M=16$. Using reconstructed
radio maps, the proposed \ac{csi} tracking scheme with one pilot
observation at each time slot and a total number of $100$ pilot observations
achieves above $88\%$ of the ideal capacity for \ac{bs}~1, and
the proposed scheme outperforms other baseline schemes in both low-\ac{snr}
and high-\ac{snr} scenarios for \ac{bs}~$6$.

Additionally, the memory usage of the proposed radio map is assessed.
The radio map covers an area of $740$~m$\times$$710$~m with a
grid resolution of $5$~m, resulting in $21,016$ grids. Each grid
stores its center coordinates as double-precision values and a $64\times64$
complex covariance matrix in single precision. The theoretical storage
requirement is approximately $657$~MB, and the actual uncompressed
size is about $659$~MB. This demonstrates the feasibility of the
proposed method, as it achieves a reasonable memory size while maintaining
the necessary precision for the radio map.

\section{Conclusion}

\label{sec:Conclusion}

This paper addresses the challenges of \ac{csi} tracking in massive
\ac{mimo} systems, particularly in the absence of stationary \ac{csi}
statistics and precise location information. By integrating radio
maps with an adaptive \ac{skf} framework, we track the \ac{csi}
and the location using an extremely sparse pilot observation sequence
without location labels. For radio map construction, the joint estimation
of location sequences and channel covariance matrices is facilitated
by an \ac{hmm}, and an unbiased channel covariance estimator is found
with the estimation error analyzed. Numerical results show that the
proposed radio-map-embedded \ac{csi} tracking scheme delivers over
$97\%$ of the capacity of perfect \ac{csi} with reduced pilot observations
at a $20$~dB \ac{snr} under \ac{los}, while a conventional \ac{kf}
can only achieve $76\%$. In addition, the proposed algorithm reduces
localization errors from $30$ meters from the prior to $6$ meters
for radio map construction in our framework.


\appendices{}


\section{Proof of Proposition~\ref{prop:Factorization-of-logp}}

\label{sec:Proof-of-Proposition-Factorization}

According to Bayes' theorem, $p(\mathcal{Y}_{T},\mathcal{H}_{T})$
is derived as
\begin{align}
p(\mathcal{Y}_{T},\mathcal{H}_{T}) & =p(\mathbf{y}_{T}|\mathcal{Y}_{T-1},\mathcal{H}_{T})p(\mathcal{Y}_{T-1},\mathcal{H}_{T})\nonumber \\
 & =p(\mathbf{y}_{T}|\mathbf{h}_{T})P(\mathbf{h}_{T}|\mathcal{Y}_{T-1},\mathcal{H}_{T-1})p(\mathcal{Y}_{T-1},\mathcal{H}_{T-1})\label{eq:joint_prop_y_h-step1}\\
 & =p(\mathbf{y}_{T}|\mathbf{h}_{T})P(\mathbf{h}_{T}|\mathbf{h}_{T-1},\mathbf{p}_{T})p(\mathcal{Y}_{T-1},\mathcal{H}_{T-1})\label{eq:joint_prop_y_h-step4}\\
 & =\prod_{t=1}^{T}p(\mathbf{y}_{\text{\ensuremath{t}}}|\mathbf{h}_{t})\times\prod_{t=2}^{T}p(\mathbf{h}_{\text{\ensuremath{t}}}|\mathbf{h}_{t-1},{\bf p}_{t})p(\mathbf{h}_{1})\label{eq:joint_prop_y_h}
\end{align}
where equations~(\ref{eq:joint_prop_y_h-step1}) and (\ref{eq:joint_prop_y_h-step4})
hold because $\mathbf{y}_{T}$ depends on $\mathbf{h}_{T}$, and $\mathbf{h}_{T}$
depends on $\mathbf{h}_{T-1}$ and $\mathbf{p}_{T}$, and equation~(\ref{eq:joint_prop_y_h})
is derived by recursively decomposing $p(\mathcal{Y}_{T-1},\mathcal{H}_{T-1})$
in the same manner as $p(\mathcal{Y}_{T},\mathcal{H}_{T})$. Thus,
$p(\mathcal{Y}_{T}|\mathcal{H}_{T})$ is derived as
\begin{align}
p(\mathcal{Y}_{T}|\mathcal{H}_{T}) & =p(\mathcal{Y}_{T},\mathcal{H}_{T})/p(\mathcal{H}_{T})=\prod_{t=1}^{T}p(\mathbf{y}_{\text{\ensuremath{t}}}|\mathbf{h}_{t})\label{eq:prop_Y_given_HT}
\end{align}

Similar to (\ref{eq:joint_prop_y_h}), $p(\mathcal{H}_{T},\mathcal{P}_{T})$
is derived as 
\begin{flalign}
p(\mathcal{H}_{T},\mathcal{P}_{T}) & \text{\ensuremath{=}}\prod_{t=2}^{T}p({\bf h}_{t}|{\bf h}_{t-1},{\bf p}_{t})\mathbb{P}(\mathbf{p}_{t}|\mathbf{p}_{t-1})p({\bf h}_{1})\mathbb{P}(\mathbf{p}_{1}).\label{eq:recursive-derivation}
\end{flalign}

Based on (\ref{eq:prop_Y_given_HT}) and (\ref{eq:recursive-derivation}),
$p(\mathcal{Y}_{T},\mathcal{P}_{T},\mathcal{H}_{T})$ is factorized
as
\begin{align}
p(\mathcal{Y}_{T},\mathcal{P}_{T},\mathcal{H}_{T}) & =p(\mathcal{Y}_{T}|\mathcal{P}_{T},\mathcal{H}_{T})p(\mathcal{H}_{T},\mathcal{P}_{T}).\nonumber \\
 & =p(\mathcal{Y}_{T}|\mathcal{H}_{T})p(\mathcal{H}_{T},\mathcal{P}_{T})\label{eq:joint-probability-Y-P-H-step2}\\
 & =\prod_{t=1}^{T}p({\bf y}_{t}|{\bf h}_{t})\prod_{t=2}^{T}p({\bf h}_{t}|{\bf h}_{t-1},{\bf p}_{t})p({\bf h}_{1})\nonumber \\
 & \quad\times\prod_{t=2}^{T}\mathbb{P}(\mathbf{p}_{t}|\mathbf{p}_{t-1})\mathbb{P}(\mathbf{p}_{1})\label{eq:joint-probability-Y-P-H}
\end{align}
where (\ref{eq:joint-probability-Y-P-H-step2}) holds because ${\bf y}_{t}$
is independent of $\mathbf{p}_{t}$ given ${\bf h}_{t}$.

\section{Proof of Proposition~\ref{prop:error-propagation-equation}}

\label{sec:Proof-of-Proposition-Error-Propagation-Equation}

The error propagation process can be proved by validating ${\bf Q}_{t}{\bf Q}_{t}^{-1}=\mathbf{I},$
where ${\bf Q}_{t}$ is in (\ref{eq:update-error-covariance}) and
${\bf Q}_{t}^{-1}$ is in (\ref{eq:alternative-error-covariance-update}).

First, substituting the optimal $\mathbf{K}_{t}$ in (\ref{eq:Kalman-gain})
into (\ref{eq:update-error-covariance}), we have
\begin{equation}
{\bf Q}_{t}=\mathbf{Q}_{t|t-1}-\mathbf{Q}_{t|t-1}{\bf A}_{t}^{\text{H}}({\bf A}_{t}\mathbf{Q}_{t|t-1}{\bf A}_{t}^{\text{H}}+\sigma_{\text{n}}^{2}{\bf I})^{-1}{\bf A}_{t}\mathbf{Q}_{t|t-1}.\label{eq:another-form-Q-t}
\end{equation}

Second, ${\bf Q}_{t}{\bf Q}_{t}^{-1}$ is derived as
\begin{align*}
{\bf Q}_{t}{\bf Q}_{t}^{-1}= & (\mathbf{Q}_{t|t-1}-\mathbf{Q}_{t|t-1}{\bf A}_{t}^{\text{H}}({\bf A}_{t}\mathbf{Q}_{t|t-1}{\bf A}_{t}^{\text{H}}+\sigma_{\text{n}}^{2}{\bf I})^{-1}\\
 & \times{\bf A}_{t}\mathbf{Q}_{t|t-1})(\mathbf{Q}_{t|t-1}^{-1}+\mathbf{A}_{t}^{\text{H}}(\sigma_{\mathrm{n}}^{2}\mathbf{I})^{-1}\mathbf{A}_{t})\\
= & \mathbf{I}-\mathbf{Q}_{t|t-1}{\bf A}_{t}^{\text{H}}[({\bf A}_{t}\mathbf{Q}_{t|t-1}{\bf A}_{t}^{\text{H}}+\sigma_{\text{n}}^{2}{\bf I})^{-1}-(\sigma_{\mathrm{n}}^{2}\mathbf{I})^{-1}\\
 & +({\bf A}_{t}\mathbf{Q}_{t|t-1}{\bf A}_{t}^{\text{H}}+\sigma_{\text{n}}^{2}{\bf I})^{-1}{\bf A}_{t}\mathbf{Q}_{t|t-1}\mathbf{A}_{t}^{\text{H}}(\sigma_{\mathrm{n}}^{2}\mathbf{I})^{-1}]{\bf A}_{t}\\
= & \mathbf{I}-\mathbf{Q}_{t|t-1}{\bf A}_{t}^{\text{H}}[(\mathbf{\sigma_{\text{n}}^{2}{\bf I}})^{-1}-(\sigma_{\mathrm{n}}^{2}\mathbf{I})^{-1}]\mathbf{A}_{t}\\
= & \mathbf{I}.
\end{align*}

Thus, $\mathbf{Q}_{t}^{-1}$ in (\ref{eq:alternative-error-covariance-update})
is the inverse of $\mathbf{Q}_{t}$.

\section{Proof of Proposition~\ref{prop:The-solution-to-At}}

\label{sec:Proof-of-Proposition-Solution-At}

Using \ac{evd} of $\mathbf{Q}_{t|t-1}$, we have
\begin{equation}
\mathbf{Q}_{t|t-1}=\mathbf{W}_{t-1}\boldsymbol{\Lambda}_{t-1}\mathbf{W}_{t-1}^{\mathrm{H}}\label{eq:svd-Q-t-1}
\end{equation}
where $\boldsymbol{\Lambda}_{t-1}=\text{diag}\{\lambda_{t-1,1},\dots,\lambda_{t-1,N_{\mathrm{t}}}\}$
is a diagonal matrix containing the eigenvalues of $\mathbf{Q}_{t|t-1}$
in descending order and $\mathbf{W}_{t-1}=[\mathbf{w}_{t-1,1},\mathbf{w}_{t-1,2},\dots,\mathbf{w}_{t-1,N_{\mathrm{t}}}]$
is a unitary matrix with orthogonal eigenvectors of $\mathbf{Q}_{t|t-1}$.

Define $\mathbf{B}_{t}\triangleq\mathbf{A}_{t}\mathbf{W}_{t-1}\in\mathbb{C}^{M\times N_{\mathrm{t}}}$.
Based on (\ref{eq:svd-Q-t-1}), $|\mathbf{I}+\sigma_{\text{n}}^{-2}\mathbf{A}_{t}\mathbf{Q}_{t|t-1}\mathbf{A}_{t}^{\text{H}}|$
is derived as
\begin{align}
|\mathbf{I}+\sigma_{\text{n}}^{-2}\mathbf{A}_{t}\mathbf{Q}_{t|t-1}\mathbf{A}_{t}^{\text{H}}| & =|\mathbf{I}+\sigma_{\text{n}}^{-2}\mathbf{A}_{t}\mathbf{W}_{t-1}\boldsymbol{\Lambda}_{t-1}\mathbf{W}_{t-1}^{\mathrm{H}}\mathbf{A}_{t}^{\text{H}}|\nonumber \\
 & =|\mathbf{I}+\sigma_{\text{n}}^{-2}\mathbf{B}_{t}\boldsymbol{\Lambda}_{t-1}\mathbf{B}_{t}^{\mathrm{H}}|\nonumber \\
 & =|\mathbf{B}_{t}(\mathbf{I}+\sigma_{\text{n}}^{-2}\boldsymbol{\Lambda}_{t-1})\mathbf{B}_{t}^{\mathrm{H}}|.
\end{align}

The matrix $\mathbf{B}_{t}(\mathbf{I}+\sigma_{\text{n}}^{-2}\boldsymbol{\Lambda}_{t-1})\mathbf{B}_{t}^{\mathrm{H}}$
is a projection of the diagonal matrix $\mathbf{I}+\sigma_{\text{n}}^{-2}\boldsymbol{\Lambda}_{t-1}$
onto the subspace defined by $\mathbf{B}_{t}$ with $\mathbf{B}_{t}\mathbf{B}_{t}^{\text{H}}=\mathbf{A}_{t}\mathbf{W}_{t-1}(\mathbf{A}_{t}\mathbf{W}_{t-1})^{\text{H}}=\mathbf{I}$.
To maximize $|\mathbf{B}_{t}(\mathbf{I}+\sigma_{\text{n}}^{-2}\boldsymbol{\Lambda}_{t-1})\mathbf{B}_{t}^{\mathrm{H}}|$,
$\mathbf{B}_{t}$ should select the top $M$ eigenvalues of $\mathbf{I}+\sigma_{\text{n}}^{-2}\boldsymbol{\Lambda}_{t-1}$.

Thus, $|\mathbf{B}_{t}(\mathbf{I}+\sigma_{\text{n}}^{-2}\boldsymbol{\Lambda}_{t-1})\mathbf{B}_{t}^{\mathrm{H}}|$
reaches its maximum as $\prod_{i=1}^{M}(1+\sigma_{\text{n}}^{2}\lambda_{t-1,i}),$
where $\lambda_{t-1,i}$ for $i=1,2,\dots,M$ are the top $M$ eigenvalues
of $\mathbf{Q}_{t|t-1}$. In this case, $\mathbf{A}_{t}$ is given
by $\mathbf{A}_{t}=[\mathbf{w}_{t-1,1},\mathbf{w}_{t-1,2},\dots,\mathbf{w}_{t-1,M}]^{\text{H}}$
such that $\mathbf{B}_{t}=\mathbf{A}_{t}\mathbf{W}_{t-1}$ is a diagonal
matrix with its first $M$ diagonal elements equal to $1$ and other
elements equal to $0$.
\selectlanguage{american}%

\section{Proof of Proposition~\ref{prop:estimator-of-sample-covariance}}

\label{sec:Proof-of-Proposition-estimator-C}

\selectlanguage{english}%
It suffices to consider $\mathcal{T}_{i}=\{1\}$ and $t=1$ as the
result will follow by linearity of expectation in each discretize
region. Let $\mathbf{h}=\mathbf{h}_{1}$, $\boldsymbol{\Phi}=\boldsymbol{\Phi}_{1}$,
$\mathbf{n}=\mathbf{n}_{1}$, and $\mathbf{A}=\mathbf{A}_{1}$.

First, $\mathbb{E}\{\hat{\boldsymbol{\Omega}}_{\text{y}}(\mathbf{x}_{i})|\mathbf{h}\}$
is expressed as
\begin{align*}
\mathbb{E}\{\hat{\boldsymbol{\Omega}}_{\text{y}}(\mathbf{x}_{i})|\mathbf{h}\} & =\frac{N_{\mathrm{t}}^{2}}{M^{2}}\mathbb{E}\{(\boldsymbol{\Phi}\mathbf{h}+\mathbf{A}^{\text{H}}\mathbf{n})(\boldsymbol{\Phi}\mathbf{h}+\mathbf{A}^{\text{H}}\mathbf{n})^{\text{H}}|\mathbf{h}\}\\
 & =\frac{N_{\mathrm{t}}^{2}}{M^{2}}(\mathbb{E}\{\boldsymbol{\Phi}\mathbf{h}(\boldsymbol{\Phi}\mathbf{h})^{\text{H}}|\mathbf{h}\}+\mathbb{E}\{\mathbf{A}^{\text{H}}\mathbf{n}\mathbf{n}^{\mathrm{H}}\mathbf{A}\})
\end{align*}
where the cross terms are dropped because of the independence of the
signal and the noise.

Second, the following fact is adopted from \cite[Fact 1]{AziKriSin:J18}.
\begin{fact}
If $\mathbf{x}\in\mathbb{C}^{N_{\mathrm{t}}}$ and $\boldsymbol{\Phi}\in\mathbb{C}^{N_{\mathrm{t}}\times N_{\mathrm{t}}}$
is a uniformly distributed rank $m$ orthogonal projection, then
\[
\boldsymbol{\Phi}\mathbf{x}\overset{\mathrm{d}}{=}\omega\mathbf{x}+\sqrt{\omega-\omega^{2}}\|\mathbf{x}\|\mathbf{W}\alpha
\]
where $\omega\sim\text{Beta}(\frac{M}{2},\frac{N_{\mathrm{t}}-M}{2}),$
$\alpha\in\mathbb{C}^{N_{\mathrm{t}}-1}$ is distributed uniformly
on the unit sphere and independent from $\omega$, and $\mathbf{W}\in\mathbb{C}^{N_{\mathrm{t}}\times(N_{\mathrm{t}}-1)}$
is an orthonormal basis for the subspace orthogonal to $\mathbf{x}$,
{\em i.e.}, $\mathbf{x}^{\text{H}}\mathbf{W}=0$.
\end{fact}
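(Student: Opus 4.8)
The plan is to derive the decomposition by exploiting the unitary invariance of the uniform law on rank-$M$ orthogonal projections, handling the component of $\boldsymbol{\Phi}\mathbf{x}$ along $\mathbf{x}$ and the component orthogonal to it separately. First I would reduce to the unit-vector case: since $\boldsymbol{\Phi}$ is linear, $\boldsymbol{\Phi}\mathbf{x}=\|\mathbf{x}\|\,\boldsymbol{\Phi}\hat{\mathbf{x}}$ with $\hat{\mathbf{x}}\triangleq\mathbf{x}/\|\mathbf{x}\|$, and the scalar $\|\mathbf{x}\|$ sits exactly where the claim places it, so it suffices to analyze $\boldsymbol{\Phi}\hat{\mathbf{x}}$.

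Next I would write the orthogonal decomposition $\boldsymbol{\Phi}\hat{\mathbf{x}}=\omega\hat{\mathbf{x}}+\mathbf{r}$, where $\omega\triangleq\hat{\mathbf{x}}^{\mathrm{H}}\boldsymbol{\Phi}\hat{\mathbf{x}}$ and $\mathbf{r}\triangleq(\mathbf{I}-\hat{\mathbf{x}}\hat{\mathbf{x}}^{\mathrm{H}})\boldsymbol{\Phi}\hat{\mathbf{x}}\perp\hat{\mathbf{x}}$. Idempotence and self-adjointness of the projection ($\boldsymbol{\Phi}=\boldsymbol{\Phi}^{\mathrm{H}}=\boldsymbol{\Phi}^{2}$) give $\omega=\|\boldsymbol{\Phi}\hat{\mathbf{x}}\|^{2}\in[0,1]$, and the Pythagorean identity $\|\boldsymbol{\Phi}\hat{\mathbf{x}}\|^{2}=\omega^{2}+\|\mathbf{r}\|^{2}$ then forces $\|\mathbf{r}\|^{2}=\omega-\omega^{2}$. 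Writing $\alpha\triangleq\mathbf{W}^{\mathrm{H}}\mathbf{r}/\|\mathbf{r}\|$ for a fixed orthonormal basis $\mathbf{W}$ of $\hat{\mathbf{x}}^{\perp}$ recovers $\boldsymbol{\Phi}\hat{\mathbf{x}}=\omega\hat{\mathbf{x}}+\sqrt{\omega-\omega^{2}}\,\mathbf{W}\alpha$ verbatim; what remains is to identify the laws of $\omega$ and $\alpha$ and to establish their independence.

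The heart of the argument is the invariance step. By definition of a uniformly distributed projection, $\mathbf{U}\boldsymbol{\Phi}\mathbf{U}^{\mathrm{H}}\overset{\mathrm{d}}{=}\boldsymbol{\Phi}$ for every fixed unitary $\mathbf{U}$; restricting to the stabilizer $\{\mathbf{U}:\mathbf{U}\hat{\mathbf{x}}=\hat{\mathbf{x}}\}$, one checks directly that $\omega$ is left invariant while $\mathbf{r}\mapsto\mathbf{U}\mathbf{r}$ (using that such $\mathbf{U}$ preserve $\hat{\mathbf{x}}^{\perp}$). Because these $\mathbf{U}$ realize every unitary on the $(N_{\mathrm{t}}-1)$-dimensional complement $\hat{\mathbf{x}}^{\perp}$, the conditional law of the direction $\alpha$ given $\omega$ is invariant under the full unitary group of $\mathbb{C}^{N_{\mathrm{t}}-1}$; uniqueness of the uniform (Haar) measure on the sphere then forces $\alpha$ to be uniform, and since this conditional law does not depend on the value of $\omega$, $\alpha$ is independent of $\omega$.

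Finally I would compute the law of $\omega=\|\boldsymbol{\Phi}\hat{\mathbf{x}}\|^{2}$, which the statement posits to be $\mathrm{Beta}$. By rotational invariance I may fix the random subspace to a coordinate subspace and take $\hat{\mathbf{x}}$ uniform on the unit sphere, realizing it as $\mathbf{g}/\|\mathbf{g}\|$ for an isotropic Gaussian $\mathbf{g}$; then $\omega$ becomes the ratio of the energy captured in $M$ coordinates to the total energy, i.e.\ a ratio of two independent Gamma (scaled chi-square) variables, which is Beta distributed. I expect the main obstacle to be exactly this last computation: getting the shape parameters right demands careful bookkeeping of the real-versus-complex degrees of freedom (a complex coordinate contributing $|g_i|^{2}\sim\mathrm{Exp}$), and cleanly chaining the conditional-invariance conclusion into a genuine independence statement. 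The linear-algebraic decomposition is routine; the measure-theoretic invariance and the Gamma-ratio reduction carry the real content.
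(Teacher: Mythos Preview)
Your proposal is a coherent and essentially correct proof outline, but you should be aware that the paper does not prove this statement at all: it is introduced as a ``Fact'' and explicitly \emph{adopted from} \cite[Fact~1]{AziKriSin:J18} without argument. So there is no proof in the paper to compare against; what you have written is an independent derivation that the paper simply outsources to a reference.

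On the substance of your argument: the decomposition $\boldsymbol{\Phi}\hat{\mathbf{x}}=\omega\hat{\mathbf{x}}+\mathbf{r}$ together with idempotence to force $\|\mathbf{r}\|^{2}=\omega-\omega^{2}$ is clean, and the stabilizer-invariance step to obtain uniformity of $\alpha$ and its independence from $\omega$ is the standard and correct route. Your own caveat about the Beta shape parameters is well placed and in fact points at a genuine tension in the statement as printed: for $\mathbf{x}\in\mathbb{C}^{N_{\mathrm{t}}}$ with isotropic complex Gaussians, each $|g_i|^{2}$ is exponential, so the energy ratio is $\mathrm{Beta}(M,N_{\mathrm{t}}-M)$, not $\mathrm{Beta}(M/2,(N_{\mathrm{t}}-M)/2)$; the parameters in the paper match the \emph{real} case. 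This does not affect the downstream use in the paper, since only the first two moments $\mathbb{E}\{\omega\}=M/N_{\mathrm{t}}$ and $\mathbb{E}\{\omega^{2}\}$ enter the computation of $\mathbb{E}\{\boldsymbol{\Phi}\mathbf{h}(\boldsymbol{\Phi}\mathbf{h})^{\mathrm{H}}\}$, but if you carry your derivation through you should flag which convention you are in.
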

Based on Fact~1 and the expectation of Beta distribution, $\mathbb{E}\{\boldsymbol{\Phi}\mathbf{h}(\boldsymbol{\Phi}\mathbf{h})^{\text{H}}|\mathbf{h}\}$
can be derived and simplified as
\begin{align}
\mathbb{E}\{\boldsymbol{\Phi}\mathbf{h}(\boldsymbol{\Phi}\mathbf{h})^{\text{H}}|\mathbf{h}\}= & \frac{M(MN_{\mathrm{t}}+N_{\mathrm{t}}-2)\mathbf{h}\mathbf{h}^{\mathrm{H}}}{N_{\mathrm{t}}(N_{\mathrm{t}}+2)(N_{\mathrm{t}}-1)}\nonumber \\
 & +\frac{M(N_{\mathrm{t}}-M)\text{tr}(\mathbf{h}\mathbf{h}^{\mathrm{H}})\mathbf{I}}{N_{\mathrm{t}}(N_{\mathrm{t}}+2)(N_{\mathrm{t}}-1)}.\label{eq:exp-Phi-h-Phi-h}
\end{align}

Third, since $\mathbf{n}\sim\mathcal{CN}(0,\sigma_{\text{n}}^{2}\mathbf{I})$
and $\mathbf{n}$ is independent of $\mathbf{A}$, the expectation
$\mathbb{E}\{\mathbf{A}^{\text{H}}\mathbf{n}\mathbf{n}^{\mathrm{H}}\mathbf{A}\}$
is derived as
\begin{align}
\mathbb{E}\{\mathbf{A}^{\text{H}}\mathbf{n}\mathbf{n}^{\mathrm{H}}\mathbf{A}\}= & \mathbf{A}^{\mathrm{H}}\mathbb{E}\{\mathbf{n}\mathbf{n}^{\mathrm{H}}\}\mathbf{A}=\sigma_{\text{n}}^{2}\mathbf{A}^{\mathrm{H}}\mathbf{A}.\label{eq:Phi-A-n-A-n}
\end{align}

Fourth, by taking the expectation of $\mathbb{E}\{\hat{\boldsymbol{\Omega}}_{\text{y}}(\mathbf{x}_{i})|\mathbf{h}\}$
over $\mathbf{h}$ and rescaling (\ref{eq:exp-Phi-h-Phi-h}) and (\ref{eq:Phi-A-n-A-n})
by $N_{\mathrm{t}}^{2}/M^{2}$, $\mathbb{E}\{\hat{\boldsymbol{\Omega}}_{\text{y}}(\mathbf{x}_{i})\}$
is derived as
\begin{align}
\mathbb{E}\{\hat{\boldsymbol{\Omega}}_{\text{y}}(\mathbf{x}_{i})\}= & \frac{N_{\mathrm{t}}(MN_{\mathrm{t}}+N_{\mathrm{t}}-2)}{M(N_{\mathrm{t}}+2)(N_{\mathrm{t}}-1)}\mathbf{C}(\mathbf{x}_{i})\nonumber \\
 & +\frac{N_{\mathrm{t}}(N_{\mathrm{t}}-M)\text{tr}(\mathbf{C}(\mathbf{x}_{i}))\mathbf{I}}{M(N_{\mathrm{t}}+2)(N_{\mathrm{t}}-1)}\nonumber \\
 & +\frac{N_{\mathrm{t}}^{2}\sigma_{\text{n}}^{2}}{M^{2}|\mathcal{T}_{i}|}\sum_{t\in\mathcal{T}_{i}}\mathbf{A}_{t}^{\mathrm{H}}\mathbf{A}_{t}.\label{eq:exptation-hat-Omega-y}
\end{align}

Fifth, by taking the trace operation to both sides of (\ref{eq:exptation-hat-Omega-y}),
and extracting the term $\text{tr}(\mathbf{C}(\mathbf{x}_{i}))$,
we have
\begin{align*}
\text{tr}(\mathbf{C}(\mathbf{x}_{i})) & =\frac{M}{N_{\mathrm{t}}}\text{tr}(\mathbb{E}\{\hat{\boldsymbol{\Omega}}_{\text{y}}(\mathbf{x}_{i})\})-\frac{N_{\mathrm{t}}\sigma_{\text{n}}^{2}}{M|\mathcal{T}_{i}|}\sum_{t\in\mathcal{T}_{i}}\text{tr}(\mathbf{A}_{t}^{\mathrm{H}}\mathbf{A}_{t}).
\end{align*}

Substituting the above $\text{tr}(\mathbf{C}(\mathbf{x}_{i}))$ into
(\ref{eq:exptation-hat-Omega-y}), we can obtain
\begin{align}
\mathbb{E}\{\hat{\boldsymbol{\Omega}}_{\text{y}}(\mathbf{x}_{i})\}= & \frac{N_{\mathrm{t}}(MN_{\mathrm{t}}+N_{\mathrm{t}}-2)}{M(N_{\mathrm{t}}+2)(N_{\mathrm{t}}-1)}\mathbf{C}(\mathbf{x}_{i})\nonumber \\
 & +\frac{(N_{\mathrm{t}}-M)}{(N_{\mathrm{t}}+2)(N_{\mathrm{t}}-1)}\text{tr}(\mathbb{E}\{\hat{\boldsymbol{\Omega}}_{\text{y}}(\mathbf{x}_{i})\})\mathbf{I}\nonumber \\
 & -\frac{N_{\mathrm{t}}^{2}(N_{\mathrm{t}}-M)\sigma_{\text{n}}^{2}}{M^{2}(N_{\mathrm{t}}+2)(N_{\mathrm{t}}-1)|\mathcal{T}_{i}|}\sum_{t\in\mathcal{T}_{i}}\text{tr}(\mathbf{A}_{t}^{\mathrm{H}}\mathbf{A}_{t})\mathbf{I}\nonumber \\
 & +\frac{N_{\mathrm{t}}^{2}\sigma_{\text{n}}^{2}}{M^{2}|\mathcal{T}_{i}|}\sum_{t\in\mathcal{T}_{i}}\mathbf{A}_{t}^{\mathrm{H}}\mathbf{A}_{t}.\label{eq:expectation-Omega-y}
\end{align}

Finally, $\mathbf{C}(\mathbf{x}_{i})$ can be extracted from (\ref{eq:expectation-Omega-y})
as
\begin{align}
\mathbf{C}(\mathbf{x}_{i})= & \frac{M(\bar{N}_{\mathrm{t}}\mathbb{E}\{\hat{\boldsymbol{\Omega}}_{\text{y}}(\mathbf{x}_{i})\}-(N_{\mathrm{t}}-M)\text{tr}(\mathbb{E}\{\hat{\boldsymbol{\Omega}}_{\text{y}}(\mathbf{x}_{i})\})\mathbf{I})}{N_{\mathrm{t}}(N_{\mathrm{t}}M+N_{\mathrm{t}}-2)}\nonumber \\
 & -\frac{\sigma_{\text{n}}^{2}(M\bar{N}_{\mathrm{t}}\boldsymbol{\Omega}_{\mathrm{A}}(\mathbf{x}_{i})-N_{\mathrm{t}}(N_{\mathrm{t}}-M)\text{tr}(\boldsymbol{\Omega}_{\mathrm{A}}(\mathbf{x}_{i}))\mathbf{I})}{M(N_{\mathrm{t}}M+N_{\mathrm{t}}-2)}\label{eq:actual-sample-covariance}
\end{align}
where $\bar{N}_{\mathrm{t}}\triangleq N_{\mathrm{t}}^{2}+N_{\mathrm{t}}-2$
and $\boldsymbol{\Omega}_{\mathrm{A}}(\mathbf{x}_{i})\triangleq\frac{1}{|\mathcal{T}_{i}|}\sum_{t\in\mathcal{T}_{i}}\mathbf{A}_{t}^{\mathrm{H}}\mathbf{A}_{t}$.

Since trace is a linear operator, we immediately see that our estimator
$\hat{\mathbf{C}}(\mathbf{x}_{i})$ in (\ref{eq:unbiased-estimate-sampling-covariance})
is unbiased as $\mathbb{E}\{\hat{\mathbf{C}}(\mathbf{x}_{i})\}=\mathbf{C}(\mathbf{x}_{i})$.



\end{document}